\renewcommand{\paragraph}{\roman{paragraph}}
\renewcommand\title[1]{\gdef\@title{\reset@font\Large\bfseries #1}}
\renewcommand\section{\@startsection {section}{1}{\z@}%
                                   {-3.5ex \@plus -1ex \@minus -.2ex}%
                                   {2.3ex \@plus.2ex}%
                                   {\normalfont\large\bfseries}}
\renewcommand\subsection{\@startsection{subsection}{2}{\z@}%
                                     {-3ex\@plus -1ex \@minus -.2ex}%
                                     {1.5ex \@plus .2ex}%
                                     {\normalfont\normalsize\bfseries}}
\renewcommand\subsubsection{\@startsection{subsubsection}{3}{\z@}%
                                     {-2.5ex\@plus -1ex \@minus -.2ex}%
                                     {1.5ex \@plus .2ex}%
                                     {\normalfont\normalsize\bfseries}}
\def\@runningauthor{}\newcommand{\runningauthor}[1]{\def\runningauthor{#1}}
\def\@runningtitle{}\newcommand{\runningtitle}[1]{\def\runningtitle{#1}}
\renewcommand{\ps@plain}{%
\renewcommand{\@evenhead}{\footnotesize\scshape \hfill\runningauthor\hfill}
\renewcommand{\@oddhead}{\footnotesize\scshape \hfill\runningtitle\hfill}}
\newcommand{\F}{\mathbb{F}}
\newcommand{\x}{\mathbf{x}}
\newcommand{\T}{{\rm T}}
\newcommand {\ccc}{{\mathbf{c}}}
\newcommand {\dd}{\mathbf{d}}
\newcommand {\0}{\mathbf{0}}
\newcommand {\aaa}{\alpha}
\newcommand {\bbb}{\beta}
\g@addto@macro\bfseries{\boldmath}
\theoremstyle{plain}
\newtheorem{theorem}{Theorem}
\newtheorem{lemma}[theorem]{Lemma}
\newtheorem{corollary}[theorem]{Corollary}
\theoremstyle{definition}
\newtheorem{definition}[theorem]{Definition}
\newtheorem{example}[theorem]{Example}
\theoremstyle{remark}
\title{The $b$-symbol weight hierarchy of the Kasami codes
}
\runningtitle{The $b$-symbol weight hierarchy of the Kasami codes}
\author{Hongwei Zhu \thanks{ School of Mathematical Sciences, Anhui University, Hefei, China. E-mail: zhwgood66@163.com}
\and Minjia Shi\thanks{School of Mathematical Sciences, Anhui University, Hefei, China. E-mail: smjwcl.good@163.com}
}
\runningauthor{}
\date{}
\begin{document}

\maketitle

\thispagestyle{empty}

\begin{abstract}
The symbol-pair read channel was first proposed by Cassuto and Blaum. Later, Yaakobi et al. generalized it to the $b$-symbol read channel.  It is motivated by the limitations of the reading process in high density data storage systems. One main task in $b$-symbol coding theory is to determine the $b$-symbol weight hierarchy of codes. In this paper, we study the $b$-symbol weight hierarchy of the Kasami codes, which are well known for their applications to construct sequences with optimal correlation magnitudes. The complete symbol-pair weight distribution of the Kasami codes is determined.
\end{abstract}
{\bf Keywords:} Kasami codes, $b$-symbol metric, $b$-symbol weight hierarchy, symbol-pair weight distribution\\
{\bf MSC(2010):} 94 B15, 94 B25, 05 E30

\section{Introduction}
Let us introduce two metrics, which are two different generalizations of the Hamming metric. The first metric is the $b$-th generalized Hamming weight metric. It was proposed by Wei \cite{wei} to characterize the code's performance on the wire-tap channel of type II. The $b$-th generalized Hamming weight metric is defined as follows.
Let $C$ be an $[n,k]$ linear code over a finite field. For any subcode $D\subset C$, the support of $D$ is defined to be
$$\chi(D)=\{i:0\leq i\leq n-1|c_i\neq 0~\hbox{for some}~ (c_0,\ldots,c_{n-1})\in D\}.$$
The $b$-th generalized Hamming weight of a code $C$ is the smallest support of a $b$-dimensional subcode of $C$ with $1\leq b\leq k$.
We use $\dd_b(C)$ to denote the minimum $b$-th generalized Hamming distance of $C$.
 When $b=1$, $\dd_1(C)$ is the minimum Hamming distance of $C$.
 The set $$\{\dd_b(C)|1\leq b\leq k\}$$ is called the weight hierarchy of $C$. To distinguish it from the later definition, let us call it the generalized weight hierarchy in this paper.

The second metric is the $b$-symbol metric. Cassuto and Blaum \cite{CB,CB1} first proposed the concept of the symbol-pair channel in 2010. This new paradigm is motivated by the limitations of the reading process in high data storage systems.
 Later, Yaakobi et. al \cite{Yaa1} generalized the symbol-pair metric to $b$-symbol metric. The $b$-symbol metric is defined as follows.
Let $b$ be a positive integer with $1\leq b\leq n$. Let $F$ be a finite field and let $\x=(x_0,x_1,\ldots,x_{n-1})$ and $\mathbf{y}=(y_0,y_1,\ldots,y_{n-1})$ be two vectors which belong to $F^n$. The $b$-symbol weight of $\x$, denoted by $w_b(\x)$, is the Hamming weight of $\pi_b(\x)$, where $\pi_b(\x)\in(F^b)^n$ and
      $$\pi_b(\x)=((x_0,\ldots,x_{b-1}),(x_1,\ldots,x_b),\ldots,
      (x_{n-1},\ldots,x_{b+n-2({\rm mod~}n)})).$$
      $w_1(\x)$ denotes the Hamming weight of $\x$.
       The $b$-symbol distance of $\x$ and $\mathbf{y}$, denoted by $d_b(\x,\mathbf{y})$, is the $b$-symbol weight of $\x-\mathbf{y}$. For a code $C$, $d_b(C)$ denotes the minimum $b$-symbol distance of $C$. When $b=1$, $d_1(C)$ is also the minimum Hamming distance of $C$.
      The $b$-symbol metric is also called symbol-pair metric if $b=2$.
       The set
      $$\{d_b(C)|1\leq b\leq n\}$$
      is called the $b$-symbol weight hierarchy of $C$.
Note that $C$ could be an unrestricted code under the $b$-symbol metric. If $C$ is a cyclic code (or a constacyclic code), then the $b$-symbol weight hierarchy of $C$ has the following property:
$$d_1(C)<d_2(C)<\cdots<d_{k-1}(C)<d_k(C)=d_{k+1}(C)=\cdots=d_n(C)=n.$$
The generalized Hamming weight hierarchy of $C$ has a similar property, and $C$ could not be cyclic. For the same cyclic code $C$, it's minimum $b$-th generalized Hamming distance and minimum $b$-symbol distance have the following relationship.
\begin{theorem}{\rm\cite{BUG}}\label{thm1}
If $C$ is cyclic, then $d_b(C)\geq \dd_b(C)$ with $1\leq b\leq k.$
\end{theorem}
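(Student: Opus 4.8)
The plan is to read the bound directly off the two defining characterizations, using that cyclic shifts of a codeword remain in $C$. Since $C$ is linear we have $d_b(C)=\min\{w_b(c):0\neq c\in C\}$, so I would fix a nonzero $c\in C$ with $w_b(c)=d_b(C)$. Write $\sigma$ for the cyclic shift $\sigma(\mathbf{x})_i=x_{i+1}$ (indices modulo $n$). Because $C$ is cyclic, every $\sigma^t(c)$ lies in $C$, and $w_b$ is invariant under $\sigma$, since $\pi_b(\sigma(c))$ is merely a cyclic shift of $\pi_b(c)$ and hence has the same Hamming weight.

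First I would record a support identity. By definition the $i$-th window $(c_i,\dots,c_{i+b-1})$ of $\pi_b(c)$ is nonzero exactly when $i\in\bigcup_{t=0}^{b-1}\big(\chi(\langle c\rangle)-t\big)$ (subtraction modulo $n$), whence
$$w_b(c)=\Big|\bigcup_{t=0}^{b-1}\big(\chi(\langle c\rangle)-t\big)\Big|.$$
On the other hand $\chi(\langle\sigma^t(c)\rangle)=\chi(\langle c\rangle)-t$, so if I set $D:=\langle c,\sigma(c),\dots,\sigma^{b-1}(c)\rangle\subseteq C$ then $\chi(D)=\bigcup_{t=0}^{b-1}\big(\chi(\langle c\rangle)-t\big)$, giving $|\chi(D)|=w_b(c)=d_b(C)$. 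Thus, as soon as $\dim D=b$, the subcode $D$ is an admissible competitor and witnesses $\dd_b(C)\le|\chi(D)|=d_b(C)$, which is precisely the claim.

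The main obstacle is therefore to establish $\dim D=b$, i.e. that the shifts $c,\sigma(c),\dots,\sigma^{b-1}(c)$ are linearly independent, and I would split into two cases. If $w_b(c)=n$, then $d_b(C)=n$ and the inequality $\dd_b(C)\le n=d_b(C)$ is immediate (a $b$-dimensional subcode exists because $b\le k$, and any support has size at most $n$), so no independence is required. If $w_b(c)<n$, then some length-$b$ window of $c$ is entirely zero; after replacing $c$ by a suitable cyclic shift (which alters neither $w_b(c)$ nor membership in $C$) I may assume $c_0=\cdots=c_{b-1}=0$, so $\chi(\langle c\rangle)\subseteq\{b,\dots,n-1\}$. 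For $0\le t\le b-1$ the shift $\sigma^t(c)$ then has support $\chi(\langle c\rangle)-t\subseteq\{1,\dots,n-1\}$ \emph{without} wrap-around, and its largest nonzero coordinate sits at position $q-t$, where $q=\max\chi(\langle c\rangle)$. These top positions $q,q-1,\dots,q-b+1$ are pairwise distinct, so the shifts form a triangular (echelon) system; reading off columns $q,q-1,\dots$ in turn forces every coefficient of a hypothetical linear dependence to vanish, giving $\dim D=b$.

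I expect the only delicate points to be the index bookkeeping modulo $n$ in the support identity, and the verification that the chosen cyclic shift really removes all wrap-around, so that the echelon argument for linear independence is legitimate. Once those are in place, the inequality $d_b(C)\ge\dd_b(C)$ follows simply by exhibiting $D$, with no further computation.
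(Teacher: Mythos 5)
Your argument is correct and complete: the identity $w_b(\mathbf{c})=|\chi(D)|$ for $D$ spanned by the $b$ consecutive shifts, together with the echelon argument showing $\dim D=b$ when some length-$b$ window vanishes (and the trivial case $w_b(\mathbf{c})=n$), does establish $\dd_b(C)\le d_b(C)$. The paper itself offers no proof of Theorem \ref{thm1} -- it is imported from \cite{BUG} -- but your route is essentially the one taken there: the rank-$b$ claim for the matrix of shifts of a minimum-$b$-symbol-weight codeword is exactly \cite[Lemma 16]{BUG}, which this paper invokes later in the proof of Corollary \ref{cor13}. So your proposal matches the intended argument rather than offering a genuinely different one.
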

For more details about the connections between these two metrics we refer the reader to \cite{BUG}. Since we have explained what $b$-symbol weight hierarchy is, we elaborate on our motivation for studying the $b$-symbol weight hierarchy of the Kasami codes.

Recently, many scholars have paid close attention to the $b$-symbol metric and extensively studied various properties \cite{CL,C+,C+1,CLL,DGZ,DTG,Eli,KZL,Yaa}.
One main task in $b$-symbol coding theory is to determine the $b$-symbol weight hierarchy of codes.
 As we all know, it is very difficult to determine the Hamming weight distribution or the generalized weight hierarchy of cyclic codes. The problems to determine the $b$-symbol weight hierarchy and to complete the $b$-symbol weight distribution are likely to be more complicated than the preceding two problems. So far, many contributions to the Hamming weight distributions and generalized Hamming hierarchy of cyclic codes concentrate on irreducible cyclic codes or the duals of cyclic codes with two zeroes \cite{DTWD,DingLiuMaZeng,DY,FengMomi,Machangli,WangTangQiYang,Xiong1FFA,
Xiong2DCC,Xiong3FFA}.

Our motivation in this paper is to investigate the $b$-symbol weight hierarchy of a class of cyclic codes.
To the best of our knowledge, the contributions on the $b$-symbol weight distribution or $b$-symbol weight hierarchy are the following.
\begin{itemize}
  \item Shi et al. \cite{SOS,ZHW} studied the $b$-symbol weight distribution of two classes of irreducible cyclic codes.
  \item Sun et al. \cite{Z} studied the symbol-pair distance distribution of a class of repeated-root cyclic codes.
\end{itemize}

 A natural idea is to study a class of the duals of cyclic codes with two zeroes. Then the Kasami codes are nice research objects, which are introduced by Kasami in 1966 \cite{Kasami}.
 These codes are of importance for many applications since they are used to construct sequences with optimal correlation magnitudes.
  They are a class of the duals of cyclic codes with two zeroes $\theta, \theta^{2^m+1}$, where $\theta$ denotes a primitive element of $\F_{2^{2m}}$. The Hamming weight distribution and the generalized weight hierarchy of the Kasami codes were given by Helleseth and Kumar in \cite{TorKasami}.
  Recently, Shi et al. \cite{SK} studied a new derivation of the Hamming weight distribution and the coset graph of the Kasami codes.
  Here we list their generalized weight hierarchies since they are lower bounds of the minimum $b$-symbol distances of the Kasami codes.
\begin{theorem}\label{thm0}{\rm \cite{TorKasami}}
The generalized Hamming weight hierarchy of the $[2^{2m}-1,3m,2^{2m-1}-2^{m-1}]$ Kasami codes is given by
\begin{equation*}
  \dd_b(C)=\left\{
             \begin{array}{ll}
               (2^b-1)(2^{2m-b}-2^{m-b}), & \hbox{if $1\leq b\leq m$;} \\
               2^{2m}-2^m-2^{2m-b}+1, & \hbox{if $m<b\leq 2m$;} \\
               2^{2m}-2^{3m-b}, & \hbox{if $2m<b\leq 3m$.}
             \end{array}
           \right.
\end{equation*}
\end{theorem}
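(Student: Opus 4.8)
The plan is to pass to the trace/quadratic-form model of the Kasami code and to recast each generalized Hamming weight as a problem about common zeros. Writing $\theta$ for the primitive element, every codeword of $C$ is an evaluation vector
\[
c(a,\beta)=\Big(\mathrm{Tr}_1^{2m}(ax)+\mathrm{Tr}_1^{m}(\beta\,x^{2^m+1})\Big)_{x\in\F_{2^{2m}}^{*}},\qquad a\in\F_{2^{2m}},\ \beta\in\F_{2^m},
\]
because $C$ is the dual of the two-zero cyclic code with zeros $\theta,\theta^{2^m+1}$. Since a nonzero $\beta$ produces a genuine quadratic term, the map $(a,\beta)\mapsto c(a,\beta)$ is injective, so $\dim C=3m$ and $b$-dimensional subcodes correspond exactly to $b$-dimensional $\F_2$-subspaces $U$ of the parameter space $\F_{2^{2m}}\times\F_{2^m}$. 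For such a $U$ set $Z(U)=\{x\in\F_{2^{2m}}:f(x)=0\text{ for all }f\in U\}$ (note $0\in Z(U)$). Counting coordinates shows $|\chi(U)|=2^{2m}-|Z(U)|$, so that $\dd_b(C)=2^{2m}-M_b$ with $M_b=\max_{\dim U=b}|Z(U)|$; it therefore suffices to determine $M_b$ in each range of $b$.

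Second I would compute $|Z(U)|$ by orthogonality, obtaining $|Z(U)|=2^{-b}\sum_{(a,\beta)\in U}S(a,\beta)$, where $S(a,\beta)=\sum_{x}(-1)^{\mathrm{Tr}_1^{2m}(ax)+\mathrm{Tr}_1^{m}(\beta x^{2^m+1})}$ is the Walsh transform of the quadratic form $Q_\beta(x)=\mathrm{Tr}_1^{m}(\beta x^{2^m+1})$. The two facts I would establish here are: its associated bilinear form is $B_\beta(x,y)=\mathrm{Tr}_1^{2m}(\beta x y^{2^m})$, which is nondegenerate for $\beta\neq 0$, so $Q_\beta$ has full rank $2m$ and $S(a,\beta)=\pm 2^m$ (never $0$); and, writing $N(x)=x^{2^m+1}$ for the norm to $\F_{2^m}$, for any subspace $\mathcal B\subseteq\F_{2^m}$ of dimension $r$ the common zero set of $\{Q_\beta:\beta\in\mathcal B\}$ is $N^{-1}(\mathcal B^{\perp})$, which has $2^{2m-r}+2^{m-r}-2^m$ elements since each nonzero norm value has exactly $2^m+1$ preimages. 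Splitting $U$ according to $r=\dim(\mathrm{proj}_\beta U)$ and its purely linear part $U_0=\{(a,0)\in U\}$ (with $A=\{a:(a,0)\in U\}$ of dimension $b-r$, cutting $x$ down to $A^{\perp}$), the character sum reduces to Walsh transforms of the affinely shifted forms $Q_\beta+\mathrm{Tr}_1^{2m}(a_\beta\cdot)$ restricted to $A^{\perp}$.

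Third I would run the optimization, which splits into the three stated regimes. The elementary pairwise bound $|S(a,\beta)|\le 2^m$ already gives $|Z(U)|\le 2^{2m-b}+2^m-2^{m-r}$, which upon taking $r=\min(b,m)$ reproduces exactly the claimed $M_b$ for $1\le b\le m$ (with $r=b$) and for $m<b\le 2m$ (with $r=m$). For these two ranges the matching lower bound amounts to constructing a $b$-dimensional $U$ that projects onto an $r$-dimensional $\mathcal B$ and in which every member with $\beta\neq 0$ satisfies $S(a,\beta)=+2^m$, i.e.\ a subspace consisting of the \emph{plus-type} (minimum-weight $2^{2m-1}-2^{m-1}$) codewords together with the linear ones. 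For $2m<b\le 3m$ the pairwise bound is no longer tight; here $A^{\perp}$ is forced to be small, and the sharp estimate comes from restricting the pencil to $A^{\perp}$: one shows $|Z(U)|\le 2^{2m-b+r}\le 2^{3m-b}$, with equality when all the forms $Q_\beta+\mathrm{Tr}_1^{2m}(a_\beta\cdot)$ vanish on a common subspace $W=A^{\perp}$. The feasibility of this is governed by the $\F_{2^m}$-valued Hermitian form $\langle x,y\rangle=\mathrm{Tr}_m^{2m}(xy^{2^m})$: $W$ must be totally isotropic for all $\beta\in\mathcal B$, i.e.\ $\langle x,y\rangle\in\mathcal B^{\perp}$ on $W$. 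Taking $\mathcal B=\F_{2^m}$ and $W$ a subspace of the totally isotropic subfield $\F_{2^m}$ (on which $\langle x,y\rangle\equiv 0$ and each $Q_\beta$ restricts to the $\F_2$-linear map $x\mapsto\mathrm{Tr}_1^m(\beta x^2)$) realizes $|Z(U)|=2^{3m-b}$.

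The main obstacle is the joint optimization over the linear part $A$, the quadratic directions $\mathcal B$, and the linear twists $a_\beta$, together with proving that the proposed configurations are optimal. Concretely, for $b\le 2m$ the delicate point is the \emph{construction}: one must exhibit a full $b$-dimensional subcode all of whose quadratic members are plus-type, which requires understanding how the plus-type (minimum-weight) codewords of the Kasami code organize into subspaces. For $2m<b\le 3m$ the delicate point is instead the \emph{upper bound}: one must control the Walsh transforms of the restricted pencil $\{Q_\beta|_{A^\perp}\}$ and the common isotropy constraint, and verify that no choice of $A$, $\mathcal B$ and twists beats $2^{3m-b}$. Matching these bounds at the regime boundaries $b=m$ and $b=2m$, where the two formulas must agree, is the final consistency check.
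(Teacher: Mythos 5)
A preliminary remark: the paper does not prove this statement. Theorem \ref{thm0} is quoted from Helleseth--Kumar \cite{TorKasami}, so there is no internal proof to compare against and your proposal has to be judged on its own terms.

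Your reduction and both upper bounds are correct. Identifying $b$-dimensional subcodes with $b$-dimensional subspaces $U\subseteq\F_{q^2}\times\F_q$, writing $\dd_b(C)=2^{2m}-\max_U|Z(U)|$, and estimating $|Z(U)|=2^{-b}\sum_{(a,\beta)\in U}S'(a,\beta)$ termwise (each $\beta\neq0$ term is $\pm2^m$ because the associated bilinear form $\T_{2m}(\beta xy^{q})$ is nondegenerate, each $\beta=0,a\neq0$ term vanishes) gives $|Z(U)|\le 2^{2m-b}+2^m-2^{m-r}$ with $r=\dim\mathrm{proj}_\beta U\le\min(b,m)$, which is exactly the claimed value for $1\le b\le 2m$; for $2m<b\le 3m$ the containment $Z(U)\subseteq A^{\perp}$ with $\dim A=b-r\ge b-2m+(m)$-complement, i.e. $|A^\perp|=2^{2m-b+r}\le 2^{3m-b}$, settles the upper bound immediately, and your subfield construction attains it: on $\F_{2^m}\subset\F_{2^{2m}}$ every $f_{a,\beta}$ restricts to the linear form $x\mapsto\T_m\bigl((\T_{2m/m}(a)+\beta^{2^{m-1}})x\bigr)$, so the set of parameters vanishing on a fixed $(3m-b)$-dimensional $W\subseteq\F_{2^m}$ is a subspace of dimension exactly $b$.

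The genuine gap is the matching construction for $1\le b\le 2m$, which you flag but do not supply, and which is precisely the substantive content of the theorem: one must exhibit a $b$-dimensional $U$ with $r=\min(b,m)$ all of whose members with $\beta\neq0$ are plus-type, i.e.\ satisfy $\T_m(a^{q+1}/\beta)=1$. For $1\le b\le m$ there is a clean fix you could have given: the map $\beta\mapsto c\,\beta^{2^{m-1}}$ is $\F_2$-linear and $\bigl(c\beta^{2^{m-1}}\bigr)^{q+1}/\beta=c^{q+1}\beta^{2^m-1}=c^{q+1}$ for every $\beta\in\F_q^*$, so choosing $c$ with $\T_m(c^{q+1})=1$ makes every nonzero member of $U=\{(c\beta^{2^{m-1}},\beta):\beta\in\mathcal B\}$ a minimum-weight codeword. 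For $m<b\le 2m$, however, $U$ must in addition contain a $(b-m)$-dimensional purely linear part $A\times\{0\}$, and requiring $\T_m\bigl((c\beta^{2^{m-1}}+a)^{q+1}/\beta\bigr)=1$ for all $\beta\in\F_q^*$ and all $a\in A$ unwinds (after expanding the norm and letting $\beta^{-1}$ range over $\F_q^*$) to the nonlinear constraint $(ca^{q}+c^{q}a)^2=a^{q+1}$ on every $a\in A$; the naive choices fail, and establishing that a suitable $(b-m)$-dimensional $A$ exists is exactly where the detailed analysis of Helleseth--Kumar is required. As written, your argument proves the stated formula as a lower bound on $\dd_b(C)$ in the range $m<b\le 2m$, not as an equality.
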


In this paper, we study the $b$-symbol weight hierarchy of the Kasami codes. Just as the research process of the generalized weight hierarchy of the Kasami codes, we classify and discuss as follows: (I) $1\leq b\leq m$; (II) $m<b\leq 2m$; (III) $2m<b\leq 3m$; (IV) $3m+1\leq b\leq n$. We obtain our results up to a new invariant $\#\mathcal{T}_3(j;\aaa,\bbb)$ that we introduce in subsection 3.3. When $b=2$, we find the complete symbol-pair weight distribution of the Kasami codes. Under some special conditions, the shortening of the Kasami codes $C_{\overline{\mathcal{I}_b(\ccc)}}$ are the Griesmer codes when $1\leq b\leq 2m,$ where $\ccc$ is a codeword with the minimum nonzero $b$-symbol weight. The definition of $\overline{\mathcal{I}_b(\ccc)}$ appears in Definition \ref{support}.

The rest of this paper is organized as follows. In Section 2, we introduce some basic notations, definitions and a useful lemma. In Section 3, we show the main results and related proofs. In the Conclusion, we conclude this paper.
\section{Preliminaries}
Let $m$ be a positive integer and $q=2^m$. Let $\F_q$ denote the finite field with $q$ elements and $\F_q^*=\F_q\setminus\{0\}$. Let the Frobenius trace from $\F_{q^k}$ to $\F_q$ be defined by
$${\rm Tr}_{\F_{q^k}/{\F_q}}(x)=\sum_{i=0}^{k-1}x^{q^i}.$$
For the sake of convenience, we denote the trace functions from $\F_{2^k}$ to $\F_2$ by $\T_k.$
\begin{definition}
Let $\ccc(\aaa,\bbb)$ with $(\aaa,\bbb)\in\F_{q^2}\times\F_q$ be a vector of length $2^{2m}-1$, indexed by the elements of $\F_{q^2}^*$,
\begin{equation*}
 \ccc(\aaa,\bbb)=(\T_{2m}(\aaa x)+\T_m(\bbb x^{q+1}), x\in \F_{q^2}^*).
\end{equation*}
Then the Kasami code is a linear code with parameters $[2^{2m}-1,3m,2^{2m-1}-2^{m-1}]$ defined by
$$C=\{\ccc(\aaa,\bbb)|\aaa\in\F_{q^2},\bbb\in\F_q\}.$$
Let $\theta$ be a primitive element of $\F_{q^2}.$ Let $x$ run through
$\F_{q^2}^*$ in the following order
\begin{equation}\label{K1}
  \theta,\theta^2,\theta^3,\ldots,\theta^{q^2-1}.
\end{equation}
Let $\eta=\theta^{q+1}$ which is a primitive element of $\F_q.$
Therefore, $\ccc(\aaa,\bbb)$ can be written as
$$(\T_{2m}(\aaa \theta)+\T_m(\bbb \eta),\T_{2m}(\aaa \theta^2)+\T_m(\bbb \eta^2),\ldots,\T_{2m}(\aaa \theta^{q^2-1})+\T_m(\bbb \eta^{q^2-1})).$$
\end{definition}
For a vector $\x=(x_0,x_1,\ldots,x_{n-1})$, let $\tau(x)$ denote the vector $(x_{n-1},x_0,\ldots,x_{n-2})$ obtained from $\x$ by the cyclic shift of the coordinates $i\mapsto i+1 {\rm~ mod~}n.$

The following result plays an indispensable role in the computation of Hamming weight distribution and generalized weight hierarchy of the Kasami codes. Of course, it is still very important to this paper.
\begin{lemma}{\rm \cite{TorKasami}}\label{lem2}
Let $S(\aaa,\bbb)$ be defined by
 $$S(\aaa,\bbb)=\sum_{x\in\F_{q^2}^*}(-1)^{\T_{2m}(\aaa x)+\T_m(\bbb x^{q+1})}.$$
Then
\begin{equation*}
  S(\aaa,\bbb)=\left\{
                 \begin{array}{ll}
                   q^2-1, & \hbox{if $\bbb=0$ and $\aaa=0$;} \\
                   -1, & \hbox{if $\bbb=0$ and $\aaa\neq 0$;} \\
                   -q-1, & \hbox{if $\bbb\neq 0$ and $\T_m(\frac{\aaa^{q+1}}{\bbb})=0$;} \\
                   q-1, & \hbox{if $\bbb\neq 0$ and $\T_m(\frac{\aaa^{q+1}}{\bbb})=1$.}
                 \end{array}
               \right.
\end{equation*}
\end{lemma}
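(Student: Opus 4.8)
The plan is to treat the cases $\bbb=0$ and $\bbb\neq0$ separately, dispatching the former by elementary additive character sums and reducing the latter to a completing-the-square argument for a quadratic form over $\F_2$. When $\bbb=0$ the sum collapses to $\sum_{x\in\F_{q^2}^*}(-1)^{\T_{2m}(\aaa x)}$: for $\aaa=0$ every term is $1$, giving $q^2-1$; for $\aaa\neq0$, completing the range to all of $\F_{q^2}$ makes the sum vanish by orthogonality of the nontrivial additive character $x\mapsto(-1)^{\T_{2m}(\aaa x)}$, so deleting the $x=0$ term leaves $-1$. This settles the first two lines.

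For $\bbb\neq0$ I would first restore the missing point $x=0$, where the summand equals $1$, and write $S(\aaa,\bbb)=T(\aaa,\bbb)-1$ with $T(\aaa,\bbb)=\sum_{x\in\F_{q^2}}(-1)^{Q(x)+L(x)}$, where $Q(x)=\T_m(\bbb x^{q+1})$ and $L(x)=\T_{2m}(\aaa x)$. Viewing $\F_{q^2}$ as a $2m$-dimensional $\F_2$-space, $Q$ is a quadratic form and $L$ is $\F_2$-linear. Using $(x+y)^{q+1}=x^{q+1}+x^qy+xy^q+y^{q+1}$ in characteristic $2$, the identity $x^qy+xy^q={\rm Tr}_{\F_{q^2}/\F_q}(x^qy)$, and transitivity of trace together with $\bbb\in\F_q$, the associated symmetric bilinear form is $B(x,y)=Q(x+y)+Q(x)+Q(y)=\T_{2m}(\bbb x^q y)$. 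Since $\bbb\neq0$, the functional $y\mapsto\T_{2m}(\bbb x^q y)$ is nonzero unless $x=0$, so $B$ is nondegenerate and $Q$ is a nondegenerate quadratic form in the even number $2m$ of variables.

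I would then pin down the type of $Q$ by counting its zeros, which I expect to be the main technical step. For $x\neq0$ the norm $x^{q+1}$ runs over $\F_q^*$ taking each value exactly $q+1$ times, and exactly $q/2-1$ elements $v\in\F_q^*$ satisfy $\T_m(\bbb v)=0$; together with $x=0$ this yields $(q+1)(q/2-1)+1=2^{2m-1}-2^{m-1}$ zeros, which by the standard classification of $\F_2$-quadratic forms identifies $Q$ as elliptic (minus) type and gives $\sum_{x\in\F_{q^2}}(-1)^{Q(x)}=-q$. To absorb the linear term I solve $B(x_0,y)=L(y)$ for all $y$, forcing $\bbb x_0^q=\aaa$, i.e. $x_0=\aaa^q/\bbb$; the shift $x\mapsto x+x_0$ and symmetry of $B$ give $T(\aaa,\bbb)=(-1)^{Q(x_0)+L(x_0)}\sum_x(-1)^{Q(x)}$. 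The remaining bookkeeping, which I regard as the second delicate point, uses that $\aaa x_0=\aaa^{q+1}/\bbb\in\F_q$ and that $\T_{2m}$ vanishes on $\F_q$, so $L(x_0)=0$, while $\bbb^q=\bbb$ and $\aaa^{q+1}\in\F_q$ simplify $Q(x_0)$ to $\T_m(\aaa^{q+1}/\bbb)$. Hence $T(\aaa,\bbb)=(-1)^{\T_m(\aaa^{q+1}/\bbb)}(-q)$, and subtracting $1$ yields $-q-1$ when $\T_m(\aaa^{q+1}/\bbb)=0$ and $q-1$ when it equals $1$, matching the last two lines and completing the proof.
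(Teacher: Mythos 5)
Your proof is correct. Note that the paper itself offers no proof of this lemma: it is quoted verbatim from Helleseth and Kumar \cite{TorKasami}, so there is no in-paper argument to compare against; your write-up is a complete, self-contained derivation. The structure is the standard one for such Weil-type sums, and every step checks: the bilinear form computation $B(x,y)=\T_{2m}(\bbb x^q y)$ via $(x+y)^{q+1}=x^{q+1}+x^qy+xy^q+y^{q+1}$ and transitivity of the trace is right; the zero count $1+(q+1)(q/2-1)=2^{2m-1}-2^{m-1}$ correctly uses that the norm map is $(q+1)$-to-one onto $\F_q^*$, and in fact it gives $\sum_x(-1)^{Q(x)}=2N_0-q^2=-q$ directly without even needing the plus/minus classification (nondegeneracy of $B$ is, however, genuinely needed for the existence and uniqueness of the shift $x_0$); solving $\bbb x_0^q=\aaa$ gives $x_0=\aaa^q/\bbb$ since $\bbb^q=\bbb$; and the bookkeeping $L(x_0)=\T_{2m}(\aaa^{q+1}/\bbb)=0$ (trace of $2z$ for $z\in\F_q$) and $Q(x_0)=\T_m(\aaa^{q+1}/\bbb)$ is exactly right, yielding $T(\aaa,\bbb)=(-1)^{\T_m(\aaa^{q+1}/\bbb)}(-q)$ and hence the last two lines after subtracting the $x=0$ term. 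One small bonus worth noting: your $\bbb\neq0$ case also subsumes $\aaa=0$, $\bbb\neq0$ (then $x_0=0$ and $\T_m(0)=0$ gives $-q-1$), consistent with the lemma's third line.
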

\section{The $b$-symbol weight hierarchy}
In this section, we will find the $b$-symbol weight hierarchy of the Kasami codes. Let $F$ be a finite field. For any vector $\ccc \in F^n$,
the following lemma gives the relation between $w_b(\ccc)$ and $w_1(\ccc).$
\begin{lemma}{\rm\cite{BUG}}\label{lem3}
Let $\mathbf{c}\in F^n$ and denote by $V_b(\mathbf{c})$ the codewords generated by all linear combinations of $\ccc$ and its first $b-1$ cyclic shifts. Then
$$w_b(\mathbf{c})=\frac{1}{|F|^{b-1}(|F|-1)}\sum_{\mathbf{c}^{\prime}\in V_b(\mathbf{c})}w_1(\mathbf{c}^{\prime}).$$
\end{lemma}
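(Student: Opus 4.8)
The plan is to prove the identity by double counting over the \emph{coefficients} of the linear combinations rather than over the distinct vectors. Throughout I write $\ccc=(c_0,\dots,c_{n-1})$ with indices modulo $n$ and adopt the paper's shift convention, so that $\tau^j(\ccc)$ has $i$-th coordinate $c_{i-j}$. I read $V_b(\ccc)$ as the family of vectors $\ccc_\lambda:=\sum_{j=0}^{b-1}\lambda_j\tau^j(\ccc)$ indexed by the coefficient tuples $\lambda=(\lambda_0,\dots,\lambda_{b-1})\in F^b$, so that the $\ccc'$ of the statement ranges over $\{\ccc_\lambda:\lambda\in F^b\}$. First I would record both sides coordinatewise: the $i$-th block of $\pi_b(\ccc)$ is the window $(c_i,c_{i+1},\dots,c_{i+b-1})$, whence $w_b(\ccc)=\#\{i:(c_i,\dots,c_{i+b-1})\neq\0\}$; and the $i$-th coordinate of $\ccc_\lambda$ is $\sum_{j=0}^{b-1}\lambda_j c_{i-j}=\langle\lambda,\mathbf{v}_i\rangle$, a linear form in $\lambda$ whose coefficient vector $\mathbf{v}_i=(c_i,c_{i-1},\dots,c_{i-b+1})$ is a length-$b$ window of $\ccc$ read backwards.

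The heart of the argument is to interchange the order of summation. Writing $[P]$ for the indicator of a statement $P$, I would compute
\[
\sum_{\lambda\in F^b} w_1(\ccc_\lambda)=\sum_{\lambda\in F^b}\sum_{i=0}^{n-1}[\langle\lambda,\mathbf{v}_i\rangle\neq 0]=\sum_{i=0}^{n-1}\#\{\lambda\in F^b:\langle\lambda,\mathbf{v}_i\rangle\neq 0\}
\]
and then evaluate the inner count for each fixed $i$. If $\mathbf{v}_i=\0$ the form vanishes identically and contributes nothing; if $\mathbf{v}_i\neq\0$ the map $\lambda\mapsto\langle\lambda,\mathbf{v}_i\rangle$ is a surjective $F$-linear functional on $F^b$ with kernel of size $|F|^{b-1}$, hence nonzero on exactly $|F|^b-|F|^{b-1}=|F|^{b-1}(|F|-1)$ tuples. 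Thus every index $i$ with $\mathbf{v}_i\neq\0$ contributes the same constant, giving
\[
\sum_{\lambda\in F^b} w_1(\ccc_\lambda)=|F|^{b-1}(|F|-1)\cdot\#\{i:\mathbf{v}_i\neq\0\}.
\]

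To conclude I would note that $\mathbf{v}_i=\0$ exactly when the window $(c_{i-b+1},\dots,c_i)$ is zero, and as $i$ ranges over $\Z/n$ these windows are a reindexing, and reversal, of the windows defining $\pi_b(\ccc)$. Since neither reindexing nor reversal changes how many windows are nonzero, $\#\{i:\mathbf{v}_i\neq\0\}=w_b(\ccc)$. Dividing the last display by $|F|^{b-1}(|F|-1)$ then yields the asserted formula.

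The point I expect to demand the most care is the reading of $V_b(\ccc)$. The clean normalization $|F|^{b-1}(|F|-1)$ emerges precisely because one sums $w_1$ over all $|F|^b$ coefficient tuples, so the identity must be understood with each linear combination weighted by its coefficient vector. When $\ccc,\tau(\ccc),\dots,\tau^{b-1}(\ccc)$ are linearly independent, the tuple-indexed family and the set of distinct vectors coincide and the distinction is immaterial. Otherwise the distinct-vector reading overcounts by the size of the dependency and the formula fails---already for the all-ones vector, whose shifts coincide---so I would state this linear-independence hypothesis (equivalently, the coefficient-weighted convention) explicitly.
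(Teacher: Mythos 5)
Your proof is correct. Note that the paper itself gives no proof of this lemma at all --- it is imported by citation from \cite{BUG} --- so there is no internal argument to compare against; your double-counting derivation (interchange the sum over coefficient tuples $\lambda\in F^b$ with the sum over coordinates $i$, then observe that each index $i$ whose length-$b$ window is nonzero contributes exactly $|F|^b-|F|^{b-1}=|F|^{b-1}(|F|-1)$ nonzero values of the linear form $\lambda\mapsto\langle\lambda,\mathbf{v}_i\rangle$, while a zero window contributes nothing) is the natural and standard route, and every step checks out, including the final reindexing/reversal of windows. Your closing caveat about how to read $V_b(\ccc)$ is not pedantry but exactly the right reading: the identity is an identity over all $|F|^b$ coefficient tuples (a multiset, unless $\ccc,\tau(\ccc),\dots,\tau^{b-1}(\ccc)$ are linearly independent), and this is precisely the convention the present paper uses when it applies the lemma in Theorem \ref{thm4}, Eq.\ (\ref{K2}), summing over all $(u_1,\dots,u_b)\in\F_2^b$ with normalization $2^{b-1}$. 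The distinction genuinely matters downstream: for the Kasami code the shifts of $\ccc(\aaa,\bbb)$ can be linearly dependent (for instance when $\aaa=0$, $\bbb\neq 0$ and $b>m$), so the distinct-vector reading would give the wrong answer there, as your all-ones example already shows.
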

\begin{theorem}\label{thm4}
Let $\ccc(\aaa,\bbb)$ be a codeword of the Kasami code. Then
\begin{eqnarray}\label{K2}
  w_b(\ccc(\aaa,\bbb)) &=& \frac{1}{2^{b-1}}\sum_{(u_1,u_2,\ldots,u_b)\in \F_2^b}
w_1(\ccc(\aaa(\sum_{i=1}^bu_i\theta^{i-1}),\bbb(\sum_{i=1}^b
u_i\eta^{i-1}))),
\end{eqnarray}
and\begin{small}
\begin{eqnarray}\label{33}
  \nonumber   w_b(\ccc(\aaa,\bbb)) &=&\frac{1}{2^{b}}{\Bigg{[}}(2^b-1)(q^2-1)
-b\cdot S(\aaa,\bbb)-\\
~&~&\sum_{j=1}^{b-1}(b-j)
\sum_{(u_1,\ldots,u_{j-1})\in\F_2^{j-1}}
S(
\aaa(1+\sum_{i=1}^{j-1}u_i\theta^{i}+\theta^j),
\bbb(1+\sum_{i=1}^{j-1}u_i\eta^{i}+\eta^j)
){\Bigg{]}}.
\end{eqnarray}
\end{small}
\end{theorem}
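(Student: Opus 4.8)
The plan is to deduce (\ref{K2}) from Lemma \ref{lem3} and then transform the resulting sum of Hamming weights into the closed form (\ref{33}) using Lemma \ref{lem2}.

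First I would establish (\ref{K2}). The key point is to identify the cyclic shift $\tau$ on codewords of $C$ with a multiplicative shift of the defining parameters. Since the coordinate of $\ccc(\aaa,\bbb)$ at $x=\theta^\ell$ equals $\T_{2m}(\aaa\theta^\ell)+\T_m(\bbb\eta^\ell)$ and $\eta=\theta^{q+1}$, reindexing via $x=\theta^\ell\mapsto\theta^{\ell+1}$ shows that $\tau$ sends $\ccc(\aaa,\bbb)$ to $\ccc(\aaa\theta,\bbb\eta)$, so its $(i-1)$-th iterate is $\ccc(\aaa\theta^{i-1},\bbb\eta^{i-1})$. By $\F_2$-linearity of the trace, the $\F_2$-combination of $\ccc$ and its first $b-1$ shifts with coefficients $(u_1,\dots,u_b)$ equals $\ccc(\aaa\sum_{i=1}^b u_i\theta^{i-1},\,\bbb\sum_{i=1}^b u_i\eta^{i-1})$. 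Thus $V_b(\ccc(\aaa,\bbb))$ is precisely the set of these codewords as $(u_1,\dots,u_b)$ ranges over $\F_2^b$, and Lemma \ref{lem3} with $|F|=2$ (normalizing factor $2^{b-1}$) yields (\ref{K2}).

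Next I would pass to (\ref{33}). Writing $N_0,N_1$ for the numbers of coordinates of $\ccc(\aaa,\bbb)$ equal to $0$ and $1$, the definition of $S$ gives $N_0-N_1=S(\aaa,\bbb)$ and $N_0+N_1=q^2-1$, so $w_1(\ccc(\aaa,\bbb))=\tfrac12\big((q^2-1)-S(\aaa,\bbb)\big)$. Substituting into (\ref{K2}) turns the right-hand side into $\frac{1}{2^b}\sum_{\mathbf u\in\F_2^b}\big((q^2-1)-S(\aaa U_{\mathbf u},\bbb V_{\mathbf u})\big)$, where $U_{\mathbf u}=\sum_{i=1}^b u_i\theta^{i-1}$ and $V_{\mathbf u}=\sum_{i=1}^b u_i\eta^{i-1}$. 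Isolating the all-zero tuple (for which $U_{\mathbf u}=V_{\mathbf u}=0$ and $S(0,0)=q^2-1$) collapses the constant part to $(2^b-1)(q^2-1)$, so it remains to show that $\sum_{\mathbf u\neq\mathbf 0}S(\aaa U_{\mathbf u},\bbb V_{\mathbf u})$ equals the bracketed double sum in (\ref{33}).

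The crux is a scaling invariance of the summands. For a nonzero $\mathbf u$ with first nonzero entry at position $p$, factoring out $\theta^{p-1}$ and $\eta^{p-1}$ gives $U_{\mathbf u}=\theta^{p-1}C$ and $V_{\mathbf u}=\eta^{p-1}C'$, where $C,C'$ are the corresponding polynomials in $\theta,\eta$ with constant term $1$. Using $\theta^{(p-1)(q+1)}=\eta^{p-1}$, whenever $\bbb V_{\mathbf u}\neq0$ the prefactor cancels in the ratio $(\aaa U_{\mathbf u})^{q+1}/(\bbb V_{\mathbf u})=(\aaa C)^{q+1}/(\bbb C')$, while if $\bbb V_{\mathbf u}=0$ both $S(\aaa U_{\mathbf u},\bbb V_{\mathbf u})$ and $S(\aaa C,\bbb C')$ reduce to $S(\aaa C,0)$ since $\aaa U_{\mathbf u}=0\iff\aaa C=0$; by Lemma \ref{lem2}, $S$ depends only on this data, so $S(\aaa U_{\mathbf u},\bbb V_{\mathbf u})=S(\aaa C,\bbb C')$ in every case. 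Hence the summand depends only on the core of $\mathbf u$, the block between its first and last nonzero positions. Grouping the nonzero tuples by core then finishes the proof: a core spanning $j+1$ positions yields the polynomials $1+\sum_{i=1}^{j-1}u_i\theta^i+\theta^j$ and $1+\sum_{i=1}^{j-1}u_i\eta^i+\eta^j$ (its $j-1$ interior entries ranging over $\F_2^{j-1}$) and can be placed in $b-j$ ways inside $\{1,\dots,b\}$; the case $j=0$ of a single nonzero entry accounts for the term $b\cdot S(\aaa,\bbb)$, placed in $b$ ways. Summing over $j$ reproduces exactly the bracket in (\ref{33}). I expect the scaling-invariance step to be the main obstacle: recognizing that $\theta^{(p-1)(q+1)}=\eta^{p-1}$ makes $S(\aaa U_{\mathbf u},\bbb V_{\mathbf u})$ invariant under stripping the leading (and, automatically, the trailing) zeros of $\mathbf u$ is what reduces the unwieldy sum over $2^b-1$ tuples to the structured core sum, and one must verify this invariance in the degenerate branches $\bbb=0$ and $C'=0$ where Lemma \ref{lem2} switches cases.
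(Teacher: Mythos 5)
Your proposal is correct and follows essentially the same route as the paper: identify $\tau^{i-1}(\ccc(\aaa,\bbb))=\ccc(\aaa\theta^{i-1},\bbb\eta^{i-1})$, apply Lemma \ref{lem3} with $|F|=2$ to get (\ref{K2}), substitute $w_1=\tfrac12(q^2-1-S)$, and then use the invariance $S(\aaa,\bbb)=S(\aaa\theta,\bbb\eta)$ (from $\theta^{q+1}=\eta$ and Lemma \ref{lem2}) to merge the tuples sharing a common ``core.'' Your explicit grouping of the $2^b-1$ nonzero tuples by the normalized block between the first and last nonzero entries, with its $b-j$ placements and the careful check of the degenerate branches, is exactly the step the paper compresses into ``follows by merging these identical items.''
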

\begin{proof}
Let $x$ run through $\F_{q^2}^*$ in the order given in (\ref{K1}). Then $\ccc(\aaa,\bbb)$ and the first $b-1$ cyclic shifts of $\ccc(\aaa,\bbb)$ are
\begin{eqnarray*}
  \ccc(\aaa,\bbb) &=&  (\T_{2m}(\aaa \theta)+\T_m(\bbb \eta),\ldots,\T_{2m}(\aaa \theta^{q^2-1})+\T_m(\bbb \eta^{q^2-1})),\\
  \tau(\ccc(\aaa,\bbb)) &=&  (\T_{2m}(\aaa \theta^2)+\T_m(\bbb \eta^{2}),\ldots,\T_{2m}(\aaa \theta)+\T_m(\bbb \eta))=\ccc(\aaa\theta,\bbb\eta),\\
  ~ &\vdots&~  \\
  \tau^{b-1}(\ccc(\aaa,\bbb)) &=& (\T_{2m}(\aaa \theta^{b})+\T_m(\bbb \eta^{b}),\ldots,\T_{2m}(\aaa \theta^{b-1})+\T_m(\bbb \eta^{b-1}))
 = \ccc(\aaa\theta^{b-1},\bbb\eta^{b-1}).
\end{eqnarray*}
According to the definition of $V_b(\ccc(\aaa,\bbb))$,
\begin{eqnarray*}
  V_b(\ccc(\aaa,\bbb)) &=& \left\{\left.\sum_{i=1}^{b}u_i\ccc(\aaa\theta^{i-1},\bbb\eta^{i-1})\right|
(u_1,\ldots,u_b)\in\F_2^b\right\} \\
  ~ &=&\left\{\left.\ccc\left(\aaa\sum_{i=1}^bu_i\theta^{i-1}
  ,\bbb\sum_{i=1}^b
u_i\eta^{i-1}\right)\right|
(u_1,\ldots,u_b)\in\F_2^b\right\}.
\end{eqnarray*}
Eq.(\ref{K2}) follows from Lemma \ref{lem3}.
There is a nice relationship between the $w_1(\ccc(\aaa,\bbb))$ and $S(\aaa,\bbb)$, that is,
$$w_1(\ccc(\aaa,\bbb))=\frac{1}{2}\left(q^2-1-S(\aaa,\bbb)\right).$$
In the light of Eq.(\ref{K2}), we have
\begin{equation}\label{K3}
  w_b(\ccc(\aaa,\bbb))=\frac{1}{2^b}\left[(2^b-1)(q^2-1)-
\sum_{(u_1,\ldots,u_b)\in\F_2^b\setminus\{\0\}}S\left(\aaa\sum_
{i=1}^bu_i\theta^{i-1}
,\bbb\sum_{i=1}^bu_i\eta^{i-1}\right)\right].
\end{equation}
Since $\T_m\left(\frac{\aaa^{q+1}}{\bbb}\right)=\T_m\left(\frac{(\aaa\cdot \theta)^{q+1}}{\bbb\eta}\right)$, by Lemma \ref{lem2}, $S(\aaa,\bbb)=S(\aaa\theta,\bbb\eta).$
Eq.(\ref{33}) follows by merging these identical items.
This completes the proof.
\end{proof}

We adopt the convention that the term $\sum_{i=1}^{j-1}u_i\theta^{i}$ vanishes if $j=1$. For convenience, let $\Theta_j$ and $\Omega_j$ be defined by
\begin{equation*}
  \Theta_j = 1+\sum_{i=1}^{j-1}u_i\theta^i+\theta^j \hbox{~~and~~ $\Omega_j = 1+\sum_{i=1}^{j-1}u_i\eta^i+\eta^j,$}
\end{equation*}
where $\Theta_1=1+\theta$ and $\Omega_1=1+\eta.$
\subsection{Case $1\leq b\leq m$}
In this subsection, we assume that $1\leq b\leq m$.
For any $1\leq j\leq b-1$ and $(\aaa,\bbb)\in\F_{q^2}^*\times\F_q^*$, let $\mathcal{T}_1(j;\aaa,\bbb)$ be defined by
$$\mathcal{T}_1(j;\aaa,\bbb)=
\left\{(u_1,u_2,\ldots,
u_{j-1})\in\F_2^{j-1}\left|\T_m\left(\frac{\aaa^{q+1}}{\bbb}\cdot
\frac{\Theta_j
^{q+1}}{\Omega_j}\right.\right)=1\right\}.$$
Let $\#\mathcal{T}_1(j;\aaa,\bbb)$ denote the size of $\mathcal{T}_1(j;\aaa,\bbb)$. A trivial upper bound of $\#\mathcal{T}_1(j;\aaa,\bbb)$ is $2^{j-1}.$
Note that the denominator $\Omega_j$ can not be zero since $1\leq b\leq m$.
\begin{lemma}\label{lem7}
Let $b$ be a positive integer. Then
$$\sum_{j=1}^{b-1}(b-j)2^{j-1}=2^b-1-b.$$
\end{lemma}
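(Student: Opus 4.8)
The statement is a purely arithmetic identity, so the plan is to establish it by elementary manipulation rather than by appealing to any structural result. First I would record the convention that the empty sum (the case $b=1$) equals $0$, which matches the right-hand side $2^1-1-1=0$; this also settles the base of any inductive argument.

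The primary route I would take is induction on $b$. Assuming $\sum_{j=1}^{b-1}(b-j)2^{j-1}=2^b-1-b$, I split the next sum using $(b+1-j)=(b-j)+1$, so that
$$\sum_{j=1}^{b}(b+1-j)2^{j-1}=\sum_{j=1}^{b-1}(b-j)2^{j-1}+\sum_{j=1}^{b}2^{j-1},$$
where the $j=b$ term of the first sum vanishes and is dropped. The first piece equals $2^b-1-b$ by the inductive hypothesis, and the second is the geometric sum $2^b-1$; adding them yields $2^{b+1}-2-b=2^{b+1}-1-(b+1)$, which is exactly the claim for $b+1$.

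As an alternative, and as an independent sanity check, I would give a direct double-counting proof. Writing $b-j$ as the number of integers $i$ with $j<i\le b$ and exchanging the order of summation, one obtains
$$\sum_{j=1}^{b-1}(b-j)2^{j-1}=\sum_{i=2}^{b}\sum_{j=1}^{i-1}2^{j-1}=\sum_{i=2}^{b}\left(2^{i-1}-1\right)=\left(2^b-2\right)-(b-1)=2^b-1-b,$$
which is self-contained and needs no base case.

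There is no substantive obstacle here: the only points requiring care are the empty-sum convention at $b=1$ and the routine bookkeeping of the finite geometric series, both handled above. I would present the double-counting computation as the main proof, since it is the most transparent and avoids the separate verification of a base case.
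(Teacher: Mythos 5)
Your proposal is correct: both the induction and the summation-interchange computations check out, and the empty-sum convention at $b=1$ is handled properly. The paper itself omits the proof of this lemma as ``straightforward,'' so there is no authorial argument to compare against; either of your two elementary verifications fully suffices.
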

\begin{proof}
The proof is straightforward and omitted.
\end{proof}
\begin{theorem}\label{cor1}
 Let $(\aaa,\bbb)\in\F_{q^2}\times\F_q$ and $1\leq b\leq m$. Then
\begin{equation*}
  w_b(\ccc(\aaa,\bbb))=
\left\{
  \begin{array}{ll}
    0, & \hbox{if $\aaa=0$ and $\bbb=0$;} \\
    (2^b-1)2^{2m-b}, & \hbox{if $\aaa\neq0$ and $\bbb=0$;}\\
    (2^b-1)(2^{2m-b}+2^{m-b}), &\hbox{if $\aaa=0$ and $\bbb\neq0$.}
  \end{array}
\right.
\end{equation*}
For $(\aaa,\bbb)\in\F_{q^2}^*\times\F_q^*$, we have
\begin{eqnarray*}
 w_b(\ccc(\aaa,\bbb)) &=&
2^{2m}+2^{m}-2^{2m-b}-2^{m-b}-\frac{b(S(\aaa,\bbb)+2^m+1)}{2^b}  \\
  ~ &~&
-2^{m+1-b}
\sum_{j=1}^{b-1}(b-j)\#\mathcal{T}_1(j;\aaa,\bbb).
\end{eqnarray*}
\end{theorem}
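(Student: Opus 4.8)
The plan is to evaluate the two exponential-sum expressions for $w_b(\ccc(\aaa,\bbb))$ supplied by Theorem \ref{thm4} using the value table for $S(\aaa,\bbb)$ in Lemma \ref{lem2}, organizing the computation by the four cases in the statement. The one structural fact I would use throughout is that $\{1,\theta,\dots,\theta^{b-1}\}$ is linearly independent over $\F_2$: since $\theta$ is a primitive element of $\F_{q^2}=\F_{2^{2m}}$ its minimal polynomial has degree $2m>b$. Likewise $\{1,\eta,\dots,\eta^{b-1}\}$ is independent over $\F_2$ because $\eta$ is primitive in $\F_q=\F_{2^m}$ and $b\le m$. These independence statements are exactly where the hypothesis $1\le b\le m$ enters.

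For the three degenerate cases I would use Eq.~(\ref{K3}). If $\aaa=\bbb=0$ the codeword and all its shifts vanish, so $w_b=0$. If $\bbb=0$ and $\aaa\ne0$, then for each nonzero $(u_1,\dots,u_b)$ independence of the $\theta^{i-1}$ forces $\aaa\sum_i u_i\theta^{i-1}\ne0$ while the second argument stays $0$; hence each of the $2^b-1$ summands equals $S(\text{nonzero},0)=-1$, and (\ref{K3}) collapses to $(2^b-1)2^{2m-b}$. Symmetrically, if $\aaa=0$ and $\bbb\ne0$, independence of the $\eta^{i-1}$ gives $\bbb\sum_i u_i\eta^{i-1}\ne0$ with first argument $0$, so each summand is $S(0,\text{nonzero})=-q-1$ (the trace of $0$ being $0$), and (\ref{K3}) yields $(2^b-1)(2^{2m-b}+2^{m-b})$.

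The substantive case is $\aaa,\bbb\ne0$, where I would start from the merged formula (\ref{33}). First I would verify that each $\Theta_j$ and $\Omega_j$ is nonzero: in $\Omega_j=1+\sum_{i=1}^{j-1}u_i\eta^i+\eta^j$ the coefficient of $\eta^j$ is $1$ and $j\le b-1\le m-1$, so independence of $1,\eta,\dots,\eta^j$ forbids $\Omega_j=0$, and the same argument with $\theta$ gives $\Theta_j\ne0$. Thus both arguments $\aaa\Theta_j,\bbb\Omega_j$ are nonzero and only the last two rows of Lemma \ref{lem2} can occur. Using that $x\mapsto x^{q+1}$ is the norm from $\F_{q^2}$ to $\F_q$ and hence multiplicative, I would rewrite $\frac{(\aaa\Theta_j)^{q+1}}{\bbb\Omega_j}=\frac{\aaa^{q+1}}{\bbb}\cdot\frac{\Theta_j^{q+1}}{\Omega_j}$, which identifies the condition $S(\aaa\Theta_j,\bbb\Omega_j)=q-1$ with membership in $\mathcal{T}_1(j;\aaa,\bbb)$. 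Therefore among the $2^{j-1}$ tuples exactly $\#\mathcal{T}_1(j;\aaa,\bbb)$ contribute $q-1$ and the remaining $2^{j-1}-\#\mathcal{T}_1(j;\aaa,\bbb)$ contribute $-q-1$, so the inner sum equals $2q\,\#\mathcal{T}_1(j;\aaa,\bbb)-(q+1)2^{j-1}$.

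Substituting this into (\ref{33}), I would split off the term $-(q+1)\sum_{j=1}^{b-1}(b-j)2^{j-1}$ and evaluate it via Lemma \ref{lem7} as $-(q+1)(2^b-1-b)$; combining it with $(2^b-1)(q^2-1)$ produces the constant $(2^b-1)(q^2+q)=2^{2m}+2^m-2^{2m-b}-2^{m-b}$ after dividing by $2^b$ and writing $q=2^m$. The terms carrying $S(\aaa,\bbb)$ and $q+1$ package into $-b(S(\aaa,\bbb)+2^m+1)/2^b$, and the $\#\mathcal{T}_1$ terms give $-\frac{2q}{2^b}\sum_{j=1}^{b-1}(b-j)\#\mathcal{T}_1(j;\aaa,\bbb)=-2^{m+1-b}\sum_{j=1}^{b-1}(b-j)\#\mathcal{T}_1(j;\aaa,\bbb)$, matching the claimed formula. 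I expect the only real obstacle to be careful bookkeeping in this final simplification — tracking the constant terms and the factor $2q=2^{m+1}$ — together with making sure the independence hypotheses, which are precisely what $b\le m$ buys, are invoked at each nonvanishing claim.
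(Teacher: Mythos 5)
Your proposal is correct and follows essentially the same route as the paper: the degenerate cases via Eq.~(\ref{K3}) with Lemma \ref{lem2}, and the main case by splitting the inner sum of Eq.~(\ref{33}) into $\#\mathcal{T}_1(j;\aaa,\bbb)$ terms contributing $q-1$ and $2^{j-1}-\#\mathcal{T}_1(j;\aaa,\bbb)$ terms contributing $-q-1$, then simplifying with Lemma \ref{lem7}. Your explicit justifications of the nonvanishing of $\Theta_j$, $\Omega_j$ and of the arguments in the degenerate cases (via linear independence of powers of $\theta$ and $\eta$, which is where $b\le m$ enters) are details the paper asserts without proof, but the argument is the same.
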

\begin{proof}
If $1\leq b\leq m$, then
$\Theta_j
$ and $\Omega_j$ can not be zero.
From Eq.(\ref{K3}) and Lemma \ref{lem2}, we have
\begin{equation*}
  w_b(\ccc(\aaa,\bbb))=\left\{
                         \begin{array}{ll}
                           0, & \hbox{if $\aaa=\bbb=0$;} \\
                           \frac{1}{2^b}((2^b-1)(q^2-1)-
(2^b-1)\cdot(-1)), & \hbox{if $\aaa\neq0$ and $\bbb=0$;} \\
     \frac{1}{2^b}((2^b-1)(q^2-1)-
(2^b-1)\cdot(-q-1)), & \hbox{if $\aaa=0$ and $\beta\neq0$.}
                         \end{array}
                       \right.
\end{equation*}
Let $(\aaa,\bbb)\in\F_{q^2}^*\times\F_q^*$.
Combining Lemma \ref{lem2} and Theorem \ref{thm4}, we have
\begin{eqnarray*}\label{KK1}
  \nonumber w_b(\ccc(\aaa,\bbb)) &=&\frac{1}{2^b}\left[(2^b-1)(q^2-1)-b S(\aaa,\bbb)-\sum_{j=1}^{b-1}(b-j)\sum_{(u_1,\ldots,u_{j-1})
  \in\F_2^{j-1}}S(\aaa\Theta_j,\bbb\Omega_j)\right]\\
  \nonumber ~&=& \frac{1}{2^b}{\Bigg{[}}(2^b-1)(q^2-1)-b
  S(\aaa,\bbb)-\sum_{j=1}^{b-1}(b-j)
  \left(\#\mathcal{T}_1(j;\aaa,\bbb)(q-1)\right.\\
  \nonumber  ~&~&\left.+(2^{j-1}-\#\mathcal{T}_1(j;\aaa,\bbb))(-q-1)\right)
 {\Bigg{]}}\\
 \nonumber ~ &=& \frac{1}{2^b}\left[(2^b-1)(q^2-1)-b
  S(\aaa,\bbb)-2^{m+1}\sum_{j=1}^{b-1}(b-j)\#\mathcal{T}_1(j;\aaa,\bbb)\right.\\
  \nonumber ~&~&\left.+(2^m+1)\sum_{j=1}^{b-1}(b-j)2^{j-1}\right]\\
  \nonumber ~&=&2^{2m}+2^{m}-2^{2m-b}-2^{m-b}-\frac{b( S(\aaa,\bbb)+2^m+1)}{2^b}  \\
  ~ &~&
-2^{m+1-b}
\nonumber\sum_{j=1}^{b-1}(b-j)\#\mathcal{T}_1(j;\aaa,\bbb).{\hbox{~~~~(by Lemma \ref{lem7})}}
\end{eqnarray*}
This completes the proof.
\end{proof}
An important property of trace functions is that $\T_m(\beta)$ takes on each value in $\F_2$ equally often, i.e., $2^{m-1}$ times, where $\beta\in\F_q$. The following lemma is a generalization of the preceding property of trace functions and a well known result follows from the properties of $m$-sequences. For completeness, we give a proof.
\begin{lemma}\label{lem6}
Let $\{\beta_1,\beta_2,\ldots,\bbb_m\}$ be a basis of $\F_q$ over $\F_2$. Let
$$U_i=\{x\in\F_q|\T_m(x\beta_i)=1\}$$ 
 with $1\leq i\leq s$. Then
$$\left|\cap_{i\in\mathcal{M}}U_i\right|=2^{m-|\mathcal{M}|},$$
where $\mathcal{M}$ is a subset of $\{1,2,\ldots,m\}.$
\end{lemma}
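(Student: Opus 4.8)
The plan is to recognize the intersection $\cap_{i\in\mathcal{M}}U_i$ as a single fiber of an $\F_2$-linear map and then count by rank-nullity. View $\F_q=\F_{2^m}$ as an $m$-dimensional vector space over $\F_2$, and define $\phi\colon\F_q\to\F_2^{|\mathcal{M}|}$ by $\phi(x)=(\T_m(x\beta_i))_{i\in\mathcal{M}}$. Since $\T_m$ is $\F_2$-linear and each $\beta_i$ is fixed, $\phi$ is $\F_2$-linear. The desired set $\cap_{i\in\mathcal{M}}U_i$ is exactly the preimage $\phi^{-1}(\mathbf{1})$ of the all-ones vector $\mathbf{1}\in\F_2^{|\mathcal{M}|}$.

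The crux is to establish that $\phi$ is surjective. Once this is done, rank-nullity forces $\dim\ker\phi=m-|\mathcal{M}|$, and every nonempty fiber of a surjective $\F_2$-linear map is a coset of $\ker\phi$, hence has cardinality $2^{m-|\mathcal{M}|}$. In particular $\phi^{-1}(\mathbf{1})$ has $2^{m-|\mathcal{M}|}$ elements, which is the claim.

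To prove surjectivity I would invoke the nondegeneracy of the trace bilinear form $(x,y)\mapsto\T_m(xy)$ on $\F_q$. Because this form is nondegenerate, the linearly independent family $\{\beta_i:i\in\mathcal{M}\}$ (a subset of the given basis) admits a trace-dual family $\{\gamma_i:i\in\mathcal{M}\}$ with $\T_m(\gamma_i\beta_k)=\delta_{ik}$; concretely, one extends $\{\beta_i:i\in\mathcal{M}\}$ to a basis of $\F_q$ and takes the dual basis with respect to the trace form. Then for any prescribed target $(c_i)_{i\in\mathcal{M}}\in\F_2^{|\mathcal{M}|}$, the element $x=\sum_{i\in\mathcal{M}}c_i\gamma_i$ satisfies $\phi(x)=(c_i)_{i\in\mathcal{M}}$, exhibiting surjectivity explicitly.

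The main obstacle is precisely this surjectivity step: one must ensure that the $\F_2$-linear independence of the $\beta_i$ genuinely transfers to the independence of the functionals $x\mapsto\T_m(x\beta_i)$, and this is exactly what nondegeneracy of the trace form delivers. Everything after that is routine counting of cosets over $\F_2$. I note in passing that the stated case $|\mathcal{M}|=1$ recovers the familiar balance property of the trace, namely that $\T_m$ takes each value of $\F_2$ equally often, so the lemma is the natural extension promised in the text.
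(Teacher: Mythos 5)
Your proof is correct and follows essentially the same route as the paper: both reduce the count to observing that the $|\mathcal{M}|$ linear conditions $\T_m(x\beta_i)=1$ are independent $\F_2$-linear constraints, so their common solution set is a coset of an $(m-|\mathcal{M}|)$-dimensional kernel. The only difference is in how independence is certified — you construct a trace-dual family to exhibit surjectivity explicitly, while the paper cites the Moore determinant criterion (Lemma 3.51 of Lidl--Niederreiter) for the nonsingularity of the associated system.
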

\begin{proof}
Assume that $\mathcal{M}=\{i_1,i_2,\ldots,i_t\}.$
The element $x$ in $\cap_{i\in\mathcal{M}}U_i$ is a solution of the system of equations
\begin{equation*}
  \left\{
     \begin{array}{ll}
       \T_m(x\beta_{i_1})=1; \\
       \T_m(x\beta_{i_2})=1; \\
       ~~~~~~~~\vdots \\
       \T_m(x\beta_{i_t})=1.
     \end{array}
   \right.
\end{equation*}
By \cite[Lemma 3.51]{finite}, the determinant $\det\left((\bbb_{i}^{q^{j-1}})_{m\times m}\right)$ is not zero if and only if $\{\beta_1,\beta_2,\ldots,\bbb_m\}$ are linearly independent over $\F_2$. Then the desired result follows.
\end{proof}
For any $1\leq j\leq b-1$,
let $\mathcal{A}_1(j)$ be defined by
$$\mathcal{A}_1(j)=\left\{\left.\frac{\Theta_j^{q+1}}{\Omega_j}
\right|(u_1,u_2,\ldots,
u_{j-1})\in\F_2^{j-1}\right\}.$$
A trivial upper bound of the size of $\mathcal{A}_1(j)$ is $2^{j-1}$.
\subsubsection{Case $b=2$}
If $b=2$, it is easy to check that $1$ and $\frac{(1+\theta)^{q+1}}{1+\eta}$ are linearly independent over $\F_2$.
The complete symbol-pair weight distribution of the Kasami code is determined.
\begin{theorem}\label{symbolpair}
For $m\geq 2$, the symbol-pair weight enumerator of the Kasami code is\begin{eqnarray*}
    A(T) &=& 1+(2^{3m-2}-2^{m-2})(T^{3\cdot(2^{2m-2}-2^{m-2})}+
T^{3\cdot2^{2m-2}-2^{m-2}}+
T^{3\cdot2^{2m-2}+2^{m-2}})+ \\
    ~ &~& (2^{3m-2}-2^{2m}+3\cdot2^{m-2})
T^{3(2^{2m-2}+2^{m-2})}+(2^{2m}-1)T^{3\cdot2^{2m-2}}.
  \end{eqnarray*}
Moreover,
\begin{equation*}
  w_2(\ccc(\aaa,\bbb))=
  \left\{
    \begin{array}{ll}
      0, & \hbox{if $\bbb=\aaa=0$ $($once$)$;} \\
      3\cdot2^{2m-2}, & \hbox{if $\bbb=0$ and $\aaa\neq0$ $($$2^{2m}-1$ times$)$;} \\
      3\cdot(2^{2m-2}+2^{m-2}), & \hbox{if $\bbb\neq0$ and $\T_m\left(\frac{\aaa^{q+1}}{\bbb}\right)=
       \T_m\left(\frac{\aaa^{q+1}}{\bbb}\cdot\frac{(1+\theta)^{q+1}}
       {1+\eta}\right)=0$} \\
       ~&\hbox{$($$2^{3m-2}-2^{2m}+3\cdot2^{m-2}$ times$)$;} \\
      3\cdot(2^{2m-2}-2^{m-2}), & \hbox{if $\bbb\neq0$ and $\T_m\left(\frac{\aaa^{q+1}}{\bbb}\right)=
       \T_m\left(\frac{\aaa^{q+1}}{\bbb}\cdot\frac{(1+\theta)^{q+1}}
       {1+\eta}\right)=1$}\\
       ~&\hbox{$($$2^{3m-2}-2^{m-2}$ times$)$;} \\
      3\cdot2^{2m-2}-2^{m-2}, & \hbox{if $\bbb\neq0$ and $\T_m\left(\frac{\aaa^{q+1}}{\bbb}\right)=1$ and }\\
       ~&\hbox{$
       \T_m\left(\frac{\aaa^{q+1}}{\bbb}\cdot\frac{(1+\theta)^{q+1}}
       {1+\eta}\right)=0$$~($$2^{3m-2}-2^{m-2}$ times$)$;} \\
      3\cdot2^{2m-2}+2^{m-2}, & \hbox{if $\bbb\neq0$ and $\T_m\left(\frac{\aaa^{q+1}}{\bbb}\right)=0$ and }\\
       ~&\hbox{$
       \T_m\left(\frac{\aaa^{q+1}}{\bbb}\cdot\frac{(1+\theta)^{q+1}}
       {1+\eta}\right)=1$ $($$2^{3m-2}-2^{m-2}$ times$)$.}
    \end{array}
  \right.
\end{equation*}
\end{theorem}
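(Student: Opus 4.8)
The plan is to specialize Theorem~\ref{cor1} to $b=2$ and then settle everything by a counting argument over the pairs $(\aaa,\bbb)\in\F_{q^2}\times\F_q$. First I would observe that for $b=2$ the only inner index is $j=1$, so $\mathcal{T}_1(1;\aaa,\bbb)$ is a subset of $\F_2^0$ and hence $\#\mathcal{T}_1(1;\aaa,\bbb)\in\{0,1\}$; it equals $1$ exactly when $\T_m\!\left(\frac{\aaa^{q+1}}{\bbb}\cdot\frac{(1+\theta)^{q+1}}{1+\eta}\right)=1$, using $\Theta_1=1+\theta$ and $\Omega_1=1+\eta$. With this, the general formula of Theorem~\ref{cor1} collapses to
\[
w_2(\ccc(\aaa,\bbb))=2^{2m}+2^m-2^{2m-2}-2^{m-2}-\frac{S(\aaa,\bbb)+2^m+1}{2}-2^{m-1}\#\mathcal{T}_1(1;\aaa,\bbb)
\]
for $(\aaa,\bbb)\in\F_{q^2}^*\times\F_q^*$, while the three degenerate pairs are read directly off the first display of Theorem~\ref{cor1}.

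Next I would split the nondegenerate case according to the two trace values $t_0=\T_m(\aaa^{q+1}/\bbb)$ and $t_1=\T_m(c\,\aaa^{q+1}/\bbb)$, where $c=\frac{(1+\theta)^{q+1}}{1+\eta}\in\F_q$ (a norm divided by an element of $\F_q$) is $\F_2$-independent from $1$ by hypothesis. Feeding the values $S(\aaa,\bbb)=-q-1$ (when $t_0=0$) and $S(\aaa,\bbb)=q-1$ (when $t_0=1$) from Lemma~\ref{lem2}, together with $\#\mathcal{T}_1(1;\aaa,\bbb)=t_1$, into the displayed formula yields the four weights $3(2^{2m-2}+2^{m-2})$, $3\cdot2^{2m-2}+2^{m-2}$, $3\cdot2^{2m-2}-2^{m-2}$, $3(2^{2m-2}-2^{m-2})$ for $(t_0,t_1)=(0,0),(0,1),(1,0),(1,1)$ respectively; each is one line of arithmetic. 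The pair $\aaa=0,\bbb\neq0$ gives $t_0=t_1=0$, so it merges into the $(0,0)$ class, matching $3(2^{2m-2}+2^{m-2})$.

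The substance of the proof is the frequency count, which I would carry out by fixing $\bbb\in\F_q^*$ and letting $\aaa$ range over $\F_{q^2}$. Putting $z=\aaa^{q+1}/\bbb$, the norm map $\aaa\mapsto\aaa^{q+1}$ is $(q+1)$-to-$1$ from $\F_{q^2}^*$ onto $\F_q^*$ and sends $0$ to $0$, so as $\aaa$ varies, $z$ hits $0$ exactly once (at $\aaa=0$) and each element of $\F_q^*$ exactly $q+1$ times. Since $(t_0,t_1)$ depends on $\aaa$ only through the $\F_2$-linear map $L\colon z\mapsto(\T_m(z),\T_m(cz))$, and the functionals attached to $1$ and $c$ are independent, $L$ is onto $\F_2^2$ with every fibre of size $2^{m-2}$ (this is Lemma~\ref{lem6} applied to a basis beginning with $1,c$). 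As $z=0$ lies in the fibre over $(0,0)$, for each fixed $\bbb$ there are $1+(2^{m-2}-1)(q+1)$ values of $\aaa$ with $(t_0,t_1)=(0,0)$ and $2^{m-2}(q+1)$ values for each of the other three patterns.

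Finally I would multiply by the $q-1$ choices of $\bbb$, simplify, and add the one codeword with $\aaa=\bbb=0$ and the $q^2-1$ codewords with $\bbb=0,\aaa\neq0$; the six resulting weight--multiplicity pairs assemble into $A(T)$, and the multiplicities sum to $2^{3m}$, a convenient consistency check. The only delicate step is this count: one must track that the unique preimage $z=0$ (equivalently $\aaa=0$) falls into the $(0,0)$ class, since this is exactly what makes the multiplicity of the weight $3(2^{2m-2}+2^{m-2})$ differ from the common value $2^{m-2}(q^2-1)$ shared by the remaining three weights.
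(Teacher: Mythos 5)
Your proposal is correct and follows essentially the same route as the paper: specialize the $b$-symbol weight formula to $b=2$, split the case $\bbb\neq 0$ according to the pair of trace values $\left(\T_m(\aaa^{q+1}/\bbb),\ \T_m\bigl(\tfrac{\aaa^{q+1}}{\bbb}\cdot\tfrac{(1+\theta)^{q+1}}{1+\eta}\bigr)\right)$, and count frequencies via Lemma \ref{lem6}. The only cosmetic difference is your bookkeeping for the count (fixing $\bbb$ and invoking the $(q+1)$-to-$1$ norm map, versus the paper's fixing $\aaa$ and letting $\bbb$ vary), and both correctly isolate the $\aaa=0$ contribution that inflates the $(0,0)$ class.
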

\begin{proof}
The result is clearly true if $\aaa=\bbb=0$. It follows from Lemma \ref{lem2} that the case $\bbb=0$ and $\aaa\neq0$ holds. In the remaining case, we assume that $\bbb\neq 0.$ Since $\frac{(1+\theta)^{q+1}}
       {1+\eta}\neq 1$, $\frac{\aaa^{q+1}}{\bbb}$ and $\frac{\aaa^{q+1}}{\bbb}\cdot\frac{(1+\theta)^{q+1}}
       {1+\eta}$ are linearly independent over $\F_2$ if $\aaa\neq0$.
\begin{itemize}
  \item (i) If
\begin{equation}\label{K8}
\T_m\left(\frac{\aaa^{q+1}}{\bbb}\right)=
       \T_m\left(\frac{\aaa^{q+1}}{\bbb}\cdot\frac{(1+\theta)^{q+1}}
       {1+\eta}\right)=1,
\end{equation}
then $\aaa$ can not be zero. By Lemma \ref{lem6}, there are $(q^2-1)2^{m-2}=2^{3m-2}-2^{m-2}$ $(\aaa,\bbb)\in\F_{q^2}^*\times\F_q^*$ such that
Eq.(\ref{K8}) holds.
Then
\begin{eqnarray*}
  w_b(\ccc(\aaa,\bbb)) &=& \frac{1}{4}\left(2^{2+2m}-2^{2m}+2^{2+m}-3\cdot2^m-2-2^{m+1}+2-2^{m+1}
  \right) \\
  ~ &=& 3\left(2^{2m-2}-2^{m-2}\right).
\end{eqnarray*}
  \item (ii) If
\begin{equation}\label{K9}
\T_m\left(\frac{\aaa^{q+1}}{\bbb}\right)=1~\hbox{and}~
       \T_m\left(\frac{\aaa^{q+1}}{\bbb}\cdot\frac{(1+\theta)^{q+1}}
       {1+\eta}\right)=0,
\end{equation}
then $\aaa$ can not be zero. Let $\overline{U_i}$ denote the complementary set of $U_i$. Since $|U_1|=|U_2|=2^{m-1}$ and $|U_1\cap U_2|=2^{m-2}$, $|U_1\cap \overline{U_2}|=2^{m-1}-2^{m-2}=2^{m-2}.$ Then there are $(q^2-1)2^{m-2}$ $(\aaa,\bbb)\in\F_{q^2}^*\times\F_q^*$ such that
Eq.(\ref{K9}) holds. Then
\begin{eqnarray*}
  w_b(\ccc(\aaa,\bbb)) &=& \frac{1}{4}\left(2^{2+2m}-2^{2m}+2^{2+m}-3\cdot2^m-2-
  2^{m+1}+2-2^{m+1}\cdot0\right) \\
  ~ &=& 3\cdot2^{2m-2}-2^{m-2}.
\end{eqnarray*}
  \item (iii) The frequency of the case that \begin{equation}\label{K10}
\nonumber\T_m\left(\frac{\aaa^{q+1}}{\bbb}\right)=0~\hbox{and}~
       \T_m\left(\frac{\aaa^{q+1}}{\bbb}\cdot\frac{(1+\theta)^{q+1}}
       {1+\eta}\right)=1
\end{equation}
is the same as the case (ii). The value of $w_b(\ccc(\aaa,\bbb))$ is
\begin{eqnarray*}
  w_b(\ccc(\aaa,\bbb)) &=& \frac{1}{4}\left(2^{2+2m}-2^{2m}+2^{2+m}-3\cdot2^m-2
  +2^{m+1}+2-2^{m+1}\right) \\
  ~ &=& 3\cdot2^{2m-2}+2^{m-2}.
\end{eqnarray*}
  \item (iv) If \begin{equation}\label{K11}
\T_m\left(\frac{\aaa^{q+1}}{\bbb}\right)=
       \T_m\left(\frac{\aaa^{q+1}}{\bbb}\cdot\frac{(1+\theta)^{q+1}}
       {1+\eta}\right)=0,
\end{equation}
then there are $$q^2\cdot(q-1)-3(q^2-1)2^{m-2}=2^{3m-2}-2^{2m}+3\cdot2^{m-2}$$
$(\aaa,\bbb)\in\F_{q^2}\times\F_q^*$ such that Eq.(\ref{K11}) holds. Then
\begin{eqnarray*}
  w_b(\ccc(\aaa,\bbb)) &=& \frac{1}{4}\left(2^{2+2m}-2^{2m}+2^{2+m}
  -3\cdot2^m-2+2^{m+1}+2-2^{m+1}\cdot0\right) \\
  ~ &=& 3\left(2^{2m-2}+2^{m-2}\right).
\end{eqnarray*}
\end{itemize} This completes the proof.
\end{proof}
\begin{example}\label{ex13}
(i) When $m=2$ and $b=2$, the Kasami code has parameters $[n=15,k=6,d_1(C)=6]$. Its symbol-pair weight distribution is
$1+15T^9+15T^{11}+15T^{12}+15T^{13}+3T^{15},$
and the minimum symbol-pair weight is $d_2(C)=9.$

(ii) When $m=3$ and $b=2$, the Kasami code has parameters $[n=63,k=9,d_1(C)=28]$. Its symbol-pair weight distribution is
$1+126T^{42}+126T^{46}+63T^{48}+126T^{50}+70T^{54},$
and the minimum symbol-pair weight is $d_2(C)=42.$

(iii) When $m=4$ and $b=2$, the Kasami code has parameters $[n=255,k=12,d_1(C)=120]$. Its symbol-pair weight distribution is
$1+1020T^{180}+1020T^{188}+255T^{192}+1020T^{196}+780T^{204},$
and the minimum symbol-pair weight is $d_2(C)=180.$

These results are verified by Magma programs.
\end{example}
Let $m(b)$ be the maximum integer with $2\leq m(b)\leq b-1$ such that the size of the set
$\{1\}\cup\bigcup_{i=1}^{m(b)}\mathcal{A}_1(i)$
is $2^{m(b)}$, and all elements which belong to
$\{1\}\cup\bigcup_{i=1}^{m(b)}\mathcal{A}_1(i)$ are linearly independent over $\F_2$.

The following lemma is useful in the proof of the subsequent theorem.
\begin{lemma}\label{lem11}
The following equation holds
$$\sum_{j=1}^{m(b)}(b-j)2^{j-1}=(b-m(b)+1)2^{m(b)}-b-1.$$
\end{lemma}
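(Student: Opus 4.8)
The plan is to treat this purely as a finite-sum identity in the single parameter $m(b)$; I will abbreviate $M := m(b)$, so that $b$ plays only the role of a fixed constant and the claim is the elementary equality $\sum_{j=1}^{M}(b-j)2^{j-1}=(b-M+1)2^{M}-b-1$. It is worth noting at the outset that this generalizes Lemma \ref{lem7}: substituting $M=b-1$ collapses the right-hand side to $2\cdot 2^{b-1}-b-1=2^{b}-b-1$, which recovers that earlier identity and suggests that the same summation technique should suffice.

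First I would split the summand linearly, writing
$$\sum_{j=1}^{M}(b-j)2^{j-1}=b\sum_{j=1}^{M}2^{j-1}-\sum_{j=1}^{M}j\,2^{j-1}.$$
The first piece is the geometric series $\sum_{j=1}^{M}2^{j-1}=2^{M}-1$. For the second piece I would invoke the standard arithmetic–geometric evaluation $\sum_{j=1}^{M}j\,2^{j-1}=(M-1)2^{M}+1$, obtainable either by differentiating $\sum_{j\ge 0}x^{j}$ and specializing at $x=2$, or by a short telescoping regrouping. Substituting both values then yields
$$b\bigl(2^{M}-1\bigr)-\bigl((M-1)2^{M}+1\bigr)=(b-M+1)2^{M}-b-1,$$
which is exactly the claimed expression, completing the argument.

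If one prefers a fully self-contained route that avoids quoting the closed form for $\sum j\,2^{j-1}$, the cleanest alternative is induction on $M$. The base case $M=1$ reads $b-1=(b)\cdot 2-b-1$, which holds; and assuming the identity for $M$, adding the next term $(b-(M+1))2^{M}$ to $(b-M+1)2^{M}-b-1$ gives $\bigl[(b-M+1)+(b-M-1)\bigr]2^{M}-b-1=(b-M)2^{M+1}-b-1$, which is precisely the right-hand side at $M+1$. Either presentation is routine.

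I do not anticipate any genuine obstacle: the identity is elementary, and the only step demanding even minor care is the evaluation of the weighted sum $\sum_{j=1}^{M}j\,2^{j-1}$, which the inductive version sidesteps entirely. The combinatorial meaning of $m(b)$ (the largest index up to which the elements $\{1\}\cup\bigcup_{i=1}^{m(b)}\mathcal{A}_1(i)$ remain linearly independent over $\F_2$) is irrelevant to the proof; it enters only as the fixed upper limit of summation.
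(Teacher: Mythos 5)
Your proof is correct. The paper itself omits the proof of this lemma as ``straightforward,'' so there is nothing to compare against; both of your routes --- the split into $b\sum 2^{j-1}-\sum j\,2^{j-1}$ with the standard evaluation $\sum_{j=1}^{M}j\,2^{j-1}=(M-1)2^{M}+1$, and the induction on $M=m(b)$ --- check out, and your observation that setting $M=b-1$ recovers Lemma \ref{lem7} is accurate.
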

\begin{proof}
The proof is straightforward and omitted.
\end{proof}
\begin{theorem}
If $1\leq b\leq m$, then the minimum $b$-symbol distance of the Kasami code is no less than
$$
2^{2m}-2^{2m-b}+2^{m-b}+2^m(1-(b-m(b)+1)2^{1+m(b)-b}).$$
Moreover, if $m(b)=b-1$, then $d_b(C)=\dd_b(C)=(2^b-1)(2^{2m-b}-2^{m-b}).$
\end{theorem}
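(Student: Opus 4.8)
The plan is to extract $d_b(C)$ directly from the closed form in Theorem~\ref{cor1}. Since the minimum $b$-symbol distance is the least nonzero value of $w_b(\ccc(\aaa,\bbb))$, I would first record that the two families with $\aaa\bbb=0$ contribute the weights $(2^b-1)2^{2m-b}$ and $(2^b-1)(2^{2m-b}+2^{m-b})$, and check that both are at least the claimed quantity, so that attention may be restricted to the generic range $(\aaa,\bbb)\in\F_{q^2}^*\times\F_q^*$. On that range I would write $t=\aaa^{q+1}/\bbb\in\F_q^*$ and use Lemma~\ref{lem2} to note that $S(\aaa,\bbb)=2^m-1$ precisely when $\T_m(t)=1$, so that $b(S+2^m+1)/2^b=b\cdot 2^{m+1-b}[\T_m(t)=1]$ (writing $[P]$ for the indicator that is $1$ when $P$ holds and $0$ otherwise). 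Treating the value $1$ as a ``level $j=0$'' term of weight $b$ then lets me merge this contribution with the sum in Theorem~\ref{cor1}, and a short computation (undoing the $2^{m-b}$ bookkeeping) shows that the desired inequality $w_b(\ccc(\aaa,\bbb))\ge L$ is \emph{equivalent} to the purely combinatorial bound
\begin{equation*}
g(t):=b\,[\T_m(t)=1]+\sum_{j=1}^{b-1}(b-j)\,\#\mathcal{T}_1(j;\aaa,\bbb)\ \le\ (b-m(b)+1)2^{m(b)}-1,
\end{equation*}
where $L$ denotes the asserted lower bound.

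The second step is to interpret $g(t)$ as a weighted count. For fixed $j$ the map $(u_1,\dots,u_{j-1})\mapsto \Theta_j^{q+1}/\Omega_j$ has image $\mathcal{A}_1(j)\subseteq\F_q$, and $\#\mathcal{T}_1(j;\aaa,\bbb)$ counts the preimages of the ``captured'' set $\{v:\T_m(tv)=1\}$. Collecting all levels $j=0,1,\dots,b-1$ and grouping by the value $v$, one gets $g(t)=\sum_v W(v)\,[\T_m(tv)=1]$, where $W(v)$ is the total weight deposited on $v$. As $t$ ranges over $\F_q^*$ the form $\ell:=\T_m(t\,\cdot)$ ranges over all nonzero linear functionals on $\F_q$, so $\max_t g(t)=\max_\ell\sum_{\ell(v)=1}W(v)$. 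The definition of $m(b)$ is tailored to this: the ``core'' values $\{1\}\cup\bigcup_{i=1}^{m(b)}\mathcal{A}_1(i)$ are $2^{m(b)}$ distinct and $\F_2$-linearly independent elements, carrying base weights $b,b-1,b-2,b-2,\dots$; extending them to a basis and applying Lemma~\ref{lem6}, their trace conditions are independent, and their maximal captured weight is $b+\sum_{j=1}^{m(b)}(b-j)2^{j-1}$, which by Lemma~\ref{lem11} equals exactly $(b-m(b)+1)2^{m(b)}-1$, the right-hand side above.

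The main obstacle is to prove that the ``overflow'' levels $j>m(b)$ cannot raise $g(t)$ above this budget; without such control one only recovers the trivial ceiling $\sum_{j=0}^{b-1}(b-j)2^{j-1}=2^b-1$, i.e.\ the weaker bound $\dd_b(C)$ of Theorem~\ref{thm0}. Here the \emph{maximality} of $m(b)$ must be exploited: beyond level $m(b)$ the values $\Theta_j^{q+1}/\Omega_j$ either repeat or become linearly dependent on the core, so once $\ell$ is prescribed on the core its values on all later levels are forced, and any extra weight such a level captures should be offset by weight lost on the core. I would make this quantitative through the functional-maximization picture, arguing that the balanced configuration counted in the previous step already dominates every admissible $\ell$. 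Pinning down this trade-off — as opposed to the routine reduction and the two summation lemmas — is where essentially all the difficulty sits.

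Finally, the ``Moreover'' clause follows with much less effort. When $m(b)=b-1$ the bound $L$ collapses, after cancelling the $2^{m-b}$ terms, to $(2^b-1)(2^{2m-b}-2^{m-b})=\dd_b(C)$, so the first part already gives $d_b(C)\ge\dd_b(C)$, in agreement with Theorem~\ref{thm1}. For the reverse inequality I would exhibit an extremal codeword: since $\{1\}\cup\bigcup_{i=1}^{b-1}\mathcal{A}_1(i)$ is then a set of $2^{b-1}$ independent elements, Lemma~\ref{lem6} yields $t\in\F_q^*$ with $\T_m(tv)=1$ for every such $v$, and because the norm $\aaa\mapsto\aaa^{q+1}$ is onto $\F_q^*$ this $t$ is realised by some $(\aaa,\bbb)\in\F_{q^2}^*\times\F_q^*$. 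For that codeword every $\#\mathcal{T}_1(j;\aaa,\bbb)$ attains its maximum $2^{j-1}$ and $\T_m(t)=1$, so Theorem~\ref{cor1} returns $w_b(\ccc(\aaa,\bbb))=\dd_b(C)$, giving $d_b(C)\le\dd_b(C)$ and hence equality.
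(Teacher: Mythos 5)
Your proposal and the paper's proof part ways at the most basic level. The paper never tries to bound $w_b(\ccc(\aaa,\bbb))$ from below over \emph{all} codewords: it uses Lemma~\ref{lem6} to produce a single pair $(\aaa,\bbb)\in\F_{q^2}^*\times\F_q^*$ with $S(\aaa,\bbb)=q-1$ and $\#\mathcal{T}_1(j;\aaa,\bbb)=2^{j-1}$ for all $j\le m(b)$, substitutes into Theorem~\ref{cor1}, and drops the unknown terms $-2^{m+1}\sum_{j>m(b)}(b-j)\#\mathcal{T}_1(j;\aaa,\bbb)$. Those terms are nonpositive, so discarding them yields $w_b(\ccc(\aaa,\bbb))\le L$ for that codeword (writing $L$ for the asserted bound, as you do; the ``$\ge$'' in the paper's display points the wrong way). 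Logically this is an \emph{upper} bound on $d_b(C)$, and it is exactly what the ``Moreover'' clause needs: when $m(b)=b-1$ the constructed codeword has weight $\dd_b(C)$, and equality follows from Theorem~\ref{thm1}. Your closing argument for the ``Moreover'' part is this same construction and is correct.

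For the first assertion, read as the lower bound it claims to be, your plan contains a genuine gap and cannot be completed. First, your opening step assumes the weights $(2^b-1)2^{2m-b}$ and $(2^b-1)(2^{2m-b}+2^{m-b})$ are at least $L$; the first of these is equivalent to $1+2^{-b}\le(b-m(b)+1)2^{1+m(b)-b}$, which fails for instance at $(m,b)=(5,5)$ with the tabulated value $m(b)=2$, where $(2^b-1)2^{2m-b}=992<993=L$. So the statement is not even true as a lower bound on $d_b(C)$, and no argument can establish it without reinterpretation. Second, even on the range $(\aaa,\bbb)\in\F_{q^2}^*\times\F_q^*$ your arithmetic reduction to $g(t)\le(b-m(b)+1)2^{m(b)}-1$ is fine, but that inequality \emph{is} the theorem, and you never prove it; you only assert that the maximality of $m(b)$ should force the overflow levels to be offset. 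It does not: maximality only says that $\{1\}\cup\bigcup_{i\le m(b)+1}\mathcal{A}_1(i)$ fails to consist of $2^{m(b)+1}$ independent elements, and a value $v\in\mathcal{A}_1(m(b)+1)$ equal to a sum of an \emph{odd} number of core elements satisfies $\T_m(tv)=1$ for every $t$ capturing the whole core, adding to $g(t)$ beyond the budget with no compensating loss. The step you flag as ``where essentially all the difficulty sits'' is therefore not a deferred computation but a claim that is false in general; the only reading under which the statement and the paper's construction cohere is as the upper bound $d_b(C)\le L$.
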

\begin{proof}
Combining the definition of $m(b)$ and Lemma \ref{lem6}, there exist $2^{m-2^{m(b)}}$ $x$ in $\F_q^*$ such that
$T_m(xy)=1$ for any $y\in \{1\}\cup\bigcup_{i=1}^{m(b)}\mathcal{A}_1(i).$
Then there exists $2^{m-2^{m(b)}}$ $(\aaa,\bbb)\in \F_{q^{2}}^*\times\F_q^*$ such that
$\#\mathcal{T}_1(j;\aaa,\bbb)=2^{j-1} \hbox{~and~} S(\aaa,\bbb)=q-1$
for any $1\leq j\leq m(b).$
According to Theorem \ref{cor1}, we have
\begin{eqnarray*}
  w_b(\ccc(\aaa,\bbb)) &=& \frac{1}{2^b}{\Bigg{[}}2^{b+2m}+2^{b+m}-2^{2m}-(b+1)2^m-b-b(2^m-1)
 \\
  ~ &~& -2^{m+1}\sum_{j=1}^{m(b)}(b-j)2^{j-1}-2^{m+1}\sum_{j=m(b)+1}^{b-1}
  (b-j)\#\mathcal{T}_1(j;\aaa,\bbb){\Bigg{]}} \\
  ~ &\geq&\frac{1}{2^b}\left[2^{b+2m}+2^{b+m}-2^{2m}-(2b+1)2^m
-2^{m+1}\sum_{j=1}^{m(b)}(b-j)2^{j-1}\right]\\
~&=&2^{2m}-2^{2m-b}+2^{m-b}+2^m(1-(b-m(b)+1)2^{1+m(b)-b}).{\hbox{~~(by Lemma \ref{lem11})}}
\end{eqnarray*}
Therefore, the first statement holds.

 If $m(b)=b-1$, then
 $w_b(\ccc(\aaa,\bbb))= 2^{2m}-2^m-2^{2m-b}+2^{m-b} =(2^b-1)(2^{2m-b}-2^{m-b})=\dd_b(C).$
Since $\dd_b(C)$ is a lower bound for $d_b(C)$, the desired result follows.
\end{proof}
\begin{example}
Notice that $m(b)=b-1$ only if $m\geq 2^{b-1}.$
The following numerical examples were computed by Magma programs. To our surprise we observed that if $m=2^{b-1}$, then $m(b)=b-1$ where $b=3,4,5,6$. It is reasonable to conjecture that $m(b)=b-1$ if $m= 2^{b-1}.$ We need more numerical examples to support this conjecture, but the example for $m$ greater than $6$ is too large to compute.
\begin{center}
\begin{tabular}{c|c}
  \hline
  $(m,b)$ & $m(b)$ \\
  \hline
  $(3,3)$ & $1$ \\
  $(4,3)$ & $2=b-1$ \\
  $(i,4)$ with $4\leq i\leq 7$ & $2$ \\
  $(8,4)$ & $3=b-1$ \\
  $(i,5)$ with $5\leq i\leq 7$ & $2$ \\
  $(i,5)$ with $8\leq i\leq 15$ & $3$ \\
  $(16,5)$ & $4=b-1$ \\
  $(i,6)$ with $6\leq i\leq 7$ & $2$ \\
  $(i,6)$ with $8\leq i\leq 15$ & $3$ \\
  $(i,6)$ with $16\leq i\leq 31$ & $4$ \\
  $(32,6)$ & $5=b-1$ \\
  \hline
\end{tabular}
\end{center}
\end{example}
\begin{definition}\label{support}
The $b$-symbol support of a vector $\x$ is defined by
$$\mathcal{I}_b(\x)=supp(\pi_b(\x))=\bigcup_{i=0}^{b-1}supp(\tau^i(\x)),$$
where $supp(\x)$ denotes the support of the vector $\x.$
Let $\overline{\mathcal{I}_b(\x)}=\{1,2,\ldots,n\}\setminus\mathcal{I}_b(\x).$
\end{definition}
\begin{corollary}\label{cor13}
Let $\ccc_0$ be a codeword with $w_b(\ccc_0)=d_b(C)$ of the Kasami code.
If $m(b)=b-1$, then the shortened code $C_{\overline{\mathcal{I}_b(\ccc_0)}}$ is a Griesmer code.
\end{corollary}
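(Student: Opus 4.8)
The goal is to prove that $C_{\overline{\mathcal{I}_b(\ccc_0)}}$ is a binary $[\dd_b(C),\,b,\,2^{2m-1}-2^{m-1}]$ code and that this triple saturates the Griesmer bound. Since $m(b)=b-1$ we already know $d_b(C)=\dd_b(C)=(2^b-1)(2^{2m-b}-2^{m-b})$, and because $\mathcal{I}_b(\ccc_0)=supp(\pi_b(\ccc_0))$ its cardinality equals $w_b(\ccc_0)=d_b(C)$; hence the shortened code has length exactly $\dd_b(C)$.

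The first and most important step is to identify the shortened code with the restriction of $V_b(\ccc_0)$ to its support. A minimizing $\ccc_0$ cannot have a zero coordinate: by Theorem \ref{cor1} the choices $\bbb=0$ and $\aaa=0$ give the strictly larger weights $(2^b-1)2^{2m-b}$ and $(2^b-1)(2^{2m-b}+2^{m-b})$, so $\ccc_0=\ccc(\aaa,\bbb)$ with $\aaa\in\F_{q^2}^{*}$ and $\bbb\in\F_q^{*}$. As $1,\eta,\dots,\eta^{b-1}$ are $\F_2$-independent when $b\le m$, the assignment $(u_1,\dots,u_b)\mapsto\ccc(\aaa\sum_iu_i\theta^{i-1},\bbb\sum_iu_i\eta^{i-1})$ is injective, so $\dim V_b(\ccc_0)=b$ and $\chi(V_b(\ccc_0))=\mathcal{I}_b(\ccc_0)$ has size $\dd_b(C)$; thus $V_b(\ccc_0)$ is a $b$-dimensional subcode attaining the $b$-th generalized Hamming weight. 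I would then invoke the strict monotonicity $\dd_1(C)<\dd_2(C)<\cdots$ (Theorem \ref{thm0}): if $D'=\{\ccc\in C:supp(\ccc)\subseteq\mathcal{I}_b(\ccc_0)\}$ had dimension larger than $b$, then $\dd_{\dim D'}(C)\le|\chi(D')|\le\dd_b(C)$ would contradict monotonicity, forcing $D'=V_b(\ccc_0)$. Restriction to the coordinate set $\mathcal{I}_b(\ccc_0)$ is weight-preserving and injective on $V_b(\ccc_0)$, so
$$C_{\overline{\mathcal{I}_b(\ccc_0)}}=V_b(\ccc_0)\big|_{\mathcal{I}_b(\ccc_0)},$$
a code of length $\dd_b(C)$ and dimension $b$.

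Next I would compute its minimum distance. Since $\bbb\neq0$ and, for every nonzero $u$, $\bbb\sum_iu_i\eta^{i-1}\neq0$, Lemma \ref{lem2} shows that each of the $2^b-1$ nonzero codewords of $V_b(\ccc_0)$ has Hamming weight either $2^{2m-1}-2^{m-1}$ or $2^{2m-1}+2^{m-1}$. Writing $N_-,N_+$ for the numbers of each, Lemma \ref{lem3} gives $2^{b-1}\dd_b(C)=N_-(2^{2m-1}-2^{m-1})+N_+(2^{2m-1}+2^{m-1})$ together with $N_-+N_+=2^b-1$; solving this $2\times2$ system yields $N_-=2^b-1$ and $N_+=0$. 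Hence $V_b(\ccc_0)$ is a constant-weight code and the minimum distance of the shortened code is exactly $D=2^{2m-1}-2^{m-1}$.

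Finally, the Griesmer check is routine: for $0\le i\le b-1\le m-1$ the number $D/2^i=2^{m-1-i}(2^m-1)$ is an integer, so $\sum_{i=0}^{b-1}\lceil D/2^i\rceil=(2^m-1)2^{m-b}(2^b-1)=(2^b-1)(2^{2m-b}-2^{m-b})=\dd_b(C)$, which is exactly the length of $C_{\overline{\mathcal{I}_b(\ccc_0)}}$. Therefore the shortened code meets the Griesmer bound with equality and is a Griesmer code. I expect the structural identification in the second paragraph to be the crux — in particular the use of strict monotonicity of the generalized weight hierarchy to rule out extra codewords supported on $\mathcal{I}_b(\ccc_0)$ — while the weight count and the arithmetic verification of the bound are mechanical.
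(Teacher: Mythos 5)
Your proof is correct, and its endpoint --- establishing that the shortened code has parameters $[(2^b-1)(2^{2m-b}-2^{m-b}),\,b,\,2^{2m-1}-2^{m-1}]$ and then checking the Griesmer sum --- is exactly the paper's route; the arithmetic in your last paragraph matches the paper's verbatim. Where you differ is in how the parameters are justified. The paper simply cites \cite[Lemma 16]{BUG} for the rank of $G_b(\ccc_0)$ and then asserts the dimension and minimum distance of $C_{\overline{\mathcal{I}_b(\ccc_0)}}$ without further argument. You instead (i) observe from Theorem \ref{cor1} that a minimizing codeword must have $\aaa,\bbb\neq 0$, (ii) get $\dim V_b(\ccc_0)=b$ from the $\F_2$-independence of $1,\eta,\dots,\eta^{b-1}$, (iii) rule out additional codewords of $C$ supported on $\mathcal{I}_b(\ccc_0)$ via the strict monotonicity of the generalized weight hierarchy, so that the shortened code really is $V_b(\ccc_0)$ restricted to its support and has dimension exactly $b$, and (iv) pin down the minimum distance by combining Lemma \ref{lem2} (each nonzero member of $V_b(\ccc_0)$ has weight $2^{2m-1}\pm 2^{m-1}$) with the averaging identity of Lemma \ref{lem3} to conclude $N_+=0$, i.e.\ $V_b(\ccc_0)$ is constant-weight. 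Steps (iii) and (iv) address genuine gaps in the paper's terse proof: without (iii) the shortened code could a priori have dimension larger than $b$, and without (iv) one only gets $d_1(C_{\overline{\mathcal{I}_b(\ccc_0)}})\geq d_1(C)$ rather than equality. So your write-up is a more self-contained and more careful version of the same argument, at the cost of invoking two extra facts (monotonicity of $\dd_b$ and the weight-averaging lemma) that the paper does not use here.
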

\begin{proof}
According to Theorem 11, the minimum $b$-symbol distance of the Kasami code is $(2^b-1)(2^{2m-b}-2^{m-b}).$ From \cite[Lemma 16]{BUG}, the matrix
\begin{equation*}
  G_b(\ccc_0)=\left(
                \begin{array}{c}
                  \ccc_0 \\
                  \tau(\ccc_0) \\
                  \vdots \\
                  \tau^{b-1}(\ccc_0) \\
                \end{array}
              \right)_{b\times n}
\end{equation*} has rank $b$.
Then the shortened code $C_{\overline{\mathcal{I}_b(\ccc_0)}}$ has parameters $$[n=d_b(C)=(2^b-1)(2^{2m-b}-2^{m-b}),k=b,
d_1(C_{\overline{\mathcal{I}_b(\ccc_0)}})=d_1(C)=2^{2m-1}-2^{m-1}].$$
The desired result follows from
\begin{equation*}
 \sum_{i=0}^{b-1}\left\lceil\frac{2^{2m-1}-2^{m-1}}{2^i}\right\rceil =(2^m-1)\sum_{i=0}^{b-1}\frac{2^{m-1}}{2^i}= (2^b-1)(2^{2m-b}-2^{m-b}).
\end{equation*}
This completes the proof.
\end{proof}
\begin{example}
(i) Let $C$ be the code in Example \ref{ex13} (i). It is easy to verify that $$\ccc=(011001110010000)\in C,$$
the $b$-symbol weight of the codeword $\ccc$ is $9$ and $\overline{\mathcal{I}_b(\ccc)}=\{4,9,12,13,14,15\}$. The shortened code $C_{\overline{\mathcal{I}_b(\ccc)}}$ has parameters $$\left[n=d_2(C)=9,k=b=2,d_1
\left(C_{\overline{\mathcal{I}_b(\ccc)}}\right)=d_1(C)=6\right],$$ which is a Griesmer code.

(ii) When $m=4$ and $b=3$, the Kasami code has parameters $[n=255,k=12,d_1(C)=120]$. Its $3$-symbol weight distribution is
$1+255T^{210}+510T^{214}+510T^{218}+765T^{222}+255T^{224}+765T^{226}
+510T^{230}+510T^{234}+15T^{238},$
and the minimum $3$-symbol weight is $d_3(C)=210.$ The $3$-symbol weight of the codeword
$\ccc=
(1 1 1 1 1 0 0 1 1 1 1 0 0 0 0 0 0 1 1 1 0 1 1 0 0 0 1\\ 0 0 0 1 1 1 0 1 0 1 1 0 0
    0 0 0 0 0 0 0 0 0 1 0 0 1 0 1 0 1 1 1 0 0 0 0 0 1 1 1 1 0 0 0 1 0 1 0 1 1 0
    1 0 0 1 1 1 0 0 1 0 1 1 1 0 0 0 1 0 0 0 1 1 0 0 1 1\\ 1 0 0 1 1 1 0 0 0 0 0 1
    1 0 1 0 1 0 0 0 0 1 0 1 0 1 0 0 0 0 0 1 0 1 0 0 1 0 0 1 1 1 1 0 0 0 0 1 0 1
    0 1 1 1 0 1 1 0 1 0 0 0 0 0 0 1 1 0 1 0 0 1 0 1 1 0 1\\ 0 0 1 1 0 0 0 0 0 0 1
    0 1 1 0 1 0 0 0 0 0 1 1 1 1 1 1 0 0 1 0 1 0 1 0 0 0 0 1 1 0 0 0 1 1 1 1 0 1
    1 1 0 0 1 1 0 1 1 0 1 0 1 1 1 0 1 1 1 0 0 1 1 1 1)$ is\\ 210,
and $\overline{\mathcal{I}_b(\ccc)}=\{
12, 13, 14, 15, 24, 28, 39, 40, 41, 42, 43, 44, 45, 46, 47, 60, 61, 62, 69,
92, 96,\\ 111, 112, 113, 122, 123, 131, 132, 133, 148, 149, 164, 165, 166, 167,
186, 187, 188, 189, 198, 199,\\ 200, 216, 217, 222\}$. The shortened code $C_{\overline{\mathcal{I}_b(\ccc)}}$ has parameters $$\left[n=d_3(C)=210,k=b=3,d_1
\left(C_{\overline{\mathcal{I}_b(\ccc)}}\right)=d_1(C)=120\right],$$ which is a Griesmer code.

These results are verified by Magma programs.
\end{example}
\subsection{Case $m<b\leq 2m$}
In this subsection, we have to get rid of these vectors where the denominator $\Omega_j=0$ since $m<b\leq 2m$.
For $1\leq j\leq b-1$, let $\Gamma_1(j)$ be defined by
\begin{equation*}
  \Gamma_1(j)=\left\{(u_1,u_2,\ldots,u_{j-1})\in\F_2^{j-1}
  \left|\Omega_j=0\right.\right\}.
\end{equation*}
\begin{lemma}\label{lem10}
Let $\#\Gamma_1(j)$ denote the size of $\Gamma_1(j)$. Then
\begin{equation*}
  \#\Gamma_1(j)=\left\{
                  \begin{array}{ll}
                    0, & \hbox{if $1\leq j\leq m-1$;} \\
                    1, & \hbox{if $j=m$;} \\
                    2^{j-m-1}, & \hbox{if $m+1\leq j\leq\ b-1$.}
                  \end{array}
                \right.
\end{equation*}
\end{lemma}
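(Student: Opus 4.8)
The plan is to recast the condition $\Omega_j=0$ as a polynomial divisibility statement over $\F_2$ and then count by degree. Recall that $\eta$ is a primitive element of $\F_q=\F_{2^m}$, so its minimal polynomial $g(X)\in\F_2[X]$ over $\F_2$ is irreducible of degree $m$; moreover, since $\eta\neq 0$ we have $g(0)=1$, i.e.\ the constant term of $g$ is $1$. To each tuple $\mathbf{u}=(u_1,\ldots,u_{j-1})\in\F_2^{j-1}$ I would associate the polynomial
$$P_{\mathbf{u}}(X)=1+\sum_{i=1}^{j-1}u_iX^i+X^j\in\F_2[X],$$
which is monic of degree $j$, has constant term $1$, and has free intermediate coefficients. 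By definition $\Omega_j=P_{\mathbf{u}}(\eta)$, so $\mathbf{u}\in\Gamma_1(j)$ if and only if $P_{\mathbf{u}}(\eta)=0$, which, because $g$ is the minimal polynomial of $\eta$, is equivalent to $g(X)\mid P_{\mathbf{u}}(X)$.

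I would then split into the three ranges by comparing $\deg P_{\mathbf{u}}=j$ with $\deg g=m$. If $1\le j\le m-1$, then $P_{\mathbf{u}}$ is a nonzero polynomial of degree strictly less than $m$, so it cannot be divisible by $g$; hence $\#\Gamma_1(j)=0$. If $j=m$, then $g\mid P_{\mathbf{u}}$ together with both polynomials being monic of degree $m$ forces $P_{\mathbf{u}}=g$; since $g$ does have constant term $1$, this single polynomial is admissible and its intermediate coefficients determine $\mathbf{u}$ uniquely, so $\#\Gamma_1(m)=1$.

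The substantive case is $m+1\le j\le b-1$, handled by a bijection. Writing $P_{\mathbf{u}}=g\cdot Q$ with $Q=P_{\mathbf{u}}/g\in\F_2[X]$, the quotient $Q$ is forced to be monic of degree $j-m$ (as $P_{\mathbf{u}}$ and $g$ are monic), and its constant term equals $P_{\mathbf{u}}(0)/g(0)=1/1=1$. Conversely, for every monic $Q$ of degree $j-m$ with $Q(0)=1$, the product $g\cdot Q$ is monic of degree $j$ with constant term $1$, hence equals $P_{\mathbf{u}}$ for a unique tuple $\mathbf{u}$, and is divisible by $g$. Since $\F_2[X]$ is an integral domain the assignment $Q\mapsto\mathbf{u}$ is a bijection between $\Gamma_1(j)$ and the set of monic degree-$(j-m)$ polynomials over $\F_2$ with constant term $1$. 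That set has its leading and constant coefficients fixed and its remaining $j-m-1$ coefficients free, so it has $2^{j-m-1}$ elements, yielding $\#\Gamma_1(j)=2^{j-m-1}$.

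The only place requiring genuine care---and really the crux---is checking that the constant-term constraint transfers cleanly across the factorization. This rests entirely on $g(0)=1$, which holds precisely because $\eta\neq 0$: it is what makes $P_{\mathbf{u}}(0)=1$ equivalent to $Q(0)=1$ rather than an extra condition, and it is also what makes the case $j=m$ evaluate to $1$ rather than $0$. Everything else reduces to a comparison of degrees and a count of free coefficients.
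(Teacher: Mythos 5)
Your proof is correct and follows essentially the same route as the paper: both identify $\Gamma_1(j)$ with the set of monic degree-$j$ polynomials with constant term $1$ divisible by the minimal polynomial of $\eta$, and count the $2^{j-m-1}$ admissible quotients of degree $j-m$. Your write-up is in fact slightly more complete, since you verify the bijection in both directions and treat the cases $j\le m-1$ and $j=m$ explicitly, whereas the paper dismisses those as trivial and only exhibits the forward construction.
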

\begin{proof}
The first two cases are trivial and omitted. Assume that $m+1\leq j\leq\ b-1$.
Let $f(x)=1+a_1x+\cdots+a_{m-1}x^{m-1}+x^m$ be the minimal polynomial of $\eta$ over $\F_2$. Let $g(x)$ be the polynomial of the form
$g(x)=1+c_1x+\cdots+c_{j-m-1}x^{j-m-1}+x^{j-m}$, where $c_1,\ldots,c_{j-m-1}\in\F_2$. Let $\{1,e_1,e_2,\ldots,e_{j-1},1\}$ be the coefficients of the polynomial $f(x)\cdot g(x)$. It is easy to check that the vector $(e_1,e_2,\ldots,e_{j-1})$ belongs to $\Gamma_1(j)$. The size of $\Gamma_1(j)$ follows since there are $2^{j-m-1}$ choices for $g(x).$
\end{proof}
For any $1\leq j\leq b-1$ and $(\aaa,\bbb)\in\F_{q^2}^*\times\F_q^*$, let $\mathcal{T}_2(j;\aaa,\bbb)$ be defined by
\begin{equation*}
  \mathcal{T}_2(j;\aaa,\bbb)=\left\{
  (u_1,\ldots,
u_{j-1})\in\F_2^{j-1}\setminus\Gamma_1(j)\left|\T_m\left(
\frac{\aaa^{q+1}}{\bbb}\cdot\frac{\Theta_j^{q+1}}{\Omega_j}\right)=1\right.\right\}.
\end{equation*}

The following lemma is intended to simplify the computation in the subsequent proof of Theorem \ref{thm13}.
\begin{lemma}\label{lem15}
Let $\Delta_1$, $\Delta_2$, $\Delta_3$, $\Delta_4$ and $\Delta_5$ be defined by
\begin{eqnarray*}
  \Delta_1 &=& \sum_{j=1}^{m-1}(m+1-j)\sum_{(u_1,\ldots,u_{j-1})\in\F_2^{j-1}}S(\aaa\Theta_j,
\bbb\Omega_j),\\
\Delta_2 &=& \sum_{(u_1,\ldots,u_{m-1})\in\F_2^{m-1}}S(\aaa\Theta_m,
\bbb\Omega_m),\\
\Delta_3 &=& \sum_{j=1}^{m-1}(b-j)\sum_{(u_1,\ldots,u_{j-1})\in\F_2^{j-1}}S(\aaa\Theta_j,
\bbb\Omega_j), \\
  \Delta_4 &=& (b-m)\sum_{(u_1,\ldots,u_{m-1})\in\F_2^{m-1}}S(\aaa\Theta_m,
\bbb\Omega_m), \\
  \Delta_5 &=& \sum_{j=m+1}^{b-1}(b-j)\sum_{(u_1,\ldots,u_{j-1})\in\F_2^{j-1}}S(\aaa\Theta_j,
\bbb\Omega_j),
\end{eqnarray*}
respectively.
Then
\begin{eqnarray*}
  \Delta_1 &=& -3\cdot2^{2m-1}+(2m+1)2^{m-1}+m+2+2^{m+1}\sum_{j=1}^{m-1}(m+1-j)\#\mathcal{T}_2(j;\aaa,\bbb), \\
  \Delta_2 &=& -2^{2m-1}+2^{m-1}+2^{m+1}\#\mathcal{T}_2(m;\aaa,\bbb),\\
  \Delta_3 &=& -(b-m+2)2^{2m-1}+(m+b)2^{m-1}+b+1+2^{m+1}\sum_{j=1}^{m-1}(b-j)\#\mathcal{T}_2(j;\aaa,\bbb),\\
  \Delta_4 &=&
-(b-m)2^{2m-1}-(b-m)2^{m-1}
+(b-m)2^{m+1}\#\mathcal{T}_2(m;\aaa,\bbb),\\
  \Delta_5 &=& -2^{b+m}+(b-m+1)2^{2m}+2^{m+1}\sum_{j=m+1}^{b-1}(b-j)
\#\mathcal{T}_2(j;\aaa,\bbb).
\end{eqnarray*}
\end{lemma}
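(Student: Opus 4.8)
The plan is to reduce all five identities to a single closed-form evaluation of the inner sum
$$\Sigma_j := \sum_{(u_1,\ldots,u_{j-1})\in\F_2^{j-1}} S(\aaa\Theta_j,\bbb\Omega_j),$$
because $\Delta_1$ and $\Delta_3$ are weighted sums of $\Sigma_j$ over $1\le j\le m-1$, $\Delta_2$ and $\Delta_4$ are the value $\Sigma_m$ and its $(b-m)$-fold, and $\Delta_5$ is a weighted sum of $\Sigma_j$ over $m+1\le j\le b-1$. So I would first obtain a formula for $\Sigma_j$ in terms of $j$, $\#\Gamma_1(j)$, and $\#\mathcal{T}_2(j;\aaa,\bbb)$, and then assemble the five statements by substituting Lemma \ref{lem10} and summing with the appropriate weights.

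To evaluate $\Sigma_j$, I would fix $(\aaa,\bbb)\in\F_{q^2}^*\times\F_q^*$, write $q=2^m$, and partition $\F_2^{j-1}$ according to the value of $S(\aaa\Theta_j,\bbb\Omega_j)$ prescribed by Lemma \ref{lem2}. The key preliminary observation is that $\Theta_j\ne 0$ for every $j\le b-1$: since $\theta$ is primitive in $\F_{q^2}=\F_{2^{2m}}$, its minimal polynomial over $\F_2$ has degree $2m$, whereas $\Theta_j=1+\sum_{i=1}^{j-1}u_i\theta^i+\theta^j$ is a monic polynomial in $\theta$ of degree $j\le b-1\le 2m-1<2m$, hence nonzero; consequently $\aaa\Theta_j\ne 0$ always. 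Splitting the index set then gives three contributions: on $(u_1,\ldots,u_{j-1})\in\Gamma_1(j)$ we have $\bbb\Omega_j=0$ but $\aaa\Theta_j\ne 0$, so the second row of Lemma \ref{lem2} yields $S=-1$; when $\Omega_j\ne 0$ the last two rows apply, giving $S=q-1$ exactly on $\mathcal{T}_2(j;\aaa,\bbb)$ (using $(\aaa\Theta_j)^{q+1}/(\bbb\Omega_j)=\frac{\aaa^{q+1}}{\bbb}\cdot\frac{\Theta_j^{q+1}}{\Omega_j}$) and $S=-q-1$ on the remaining $2^{j-1}-\#\Gamma_1(j)-\#\mathcal{T}_2(j;\aaa,\bbb)$ tuples. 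Collecting and simplifying yields the master identity
$$\Sigma_j = 2q\,\#\mathcal{T}_2(j;\aaa,\bbb)+q\,\#\Gamma_1(j)-(q+1)2^{j-1}.$$

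With this in hand the five formulas are pure bookkeeping. Lemma \ref{lem10} gives $\#\Gamma_1(j)=0$ for $1\le j\le m-1$, $\#\Gamma_1(j)=1$ for $j=m$, and $\#\Gamma_1(j)=2^{j-m-1}$ for $m+1\le j\le b-1$; in the last range the cancellation $q\cdot 2^{j-m-1}=2^{j-1}$ makes the two deterministic terms of $\Sigma_j$ combine to $-2^{m+j-1}$. It then remains to evaluate the weighted geometric sums $\sum(\text{weight})\,2^{j-1}$, for which I would reindex each range to apply Lemma \ref{lem7}: for $\Delta_1$, apply Lemma \ref{lem7} at parameter $m+1$ and subtract the missing $j=m$ term; for $\Delta_3$, write $(b-j)=(b-m)+(m-j)$ and combine a plain geometric series with Lemma \ref{lem7} at parameter $m$; and for $\Delta_5$, substitute $k=j-m$ to reduce to Lemma \ref{lem7} at parameter $b-m$. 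The identities for $\Delta_2$ and $\Delta_4$ then follow at once, since $\Delta_2=\Sigma_m$ and $\Delta_4=(b-m)\Sigma_m=(b-m)\Delta_2$.

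The only conceptual step is the degree argument guaranteeing $\Theta_j\ne 0$, which is what forces the $\Omega_j=0$ tuples to contribute exactly $-1$ rather than $q^2-1$; everything past the master identity is elementary. Accordingly, the main obstacle I anticipate is purely computational: keeping the reindexing in Lemma \ref{lem7} exact, in particular treating the endpoint $j=m$ separately and matching the weight $(m+1-j)$ of $\Delta_1$ against the weight $(b-j)$ of $\Delta_3$ and $\Delta_5$ without an off-by-one error.
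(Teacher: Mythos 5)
Your proposal is correct and follows essentially the same route as the paper: both partition $\F_2^{j-1}$ into $\Gamma_1(j)$, $\mathcal{T}_2(j;\aaa,\bbb)$, and the remainder, evaluate $S$ on each piece via Lemma \ref{lem2} (using $\Theta_j\neq 0$), and finish with the weighted geometric sums from Lemmas \ref{lem7} and \ref{lem11}. Your single master identity $\Sigma_j = 2q\,\#\mathcal{T}_2(j;\aaa,\bbb)+q\,\#\Gamma_1(j)-(q+1)2^{j-1}$ is just a cleaner packaging of the case-by-case computation the paper carries out for each $\Delta_i$ separately, and it reproduces all five formulas correctly.
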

\begin{proof}By the definition of $\mathcal{T}_2(j;\aaa,\bbb)$ and Lemma \ref{lem11}, we have
\begin{eqnarray*}
  \Delta_1 &=& \sum_{j=1}^{m-1}(m+1-j)(\#\mathcal{T}_2(j;\aaa,\bbb)(q-1)+(2^{j-1}-\#\mathcal{T}_2(j;\aaa,\bbb))(-q-1)) \\
  ~ &=& -(2^m+1)\sum_{j=1}^{m-1}(m+1-j)2^{j-1}+2^{m+1}\sum_{j=1}^{m-1}(m+1-j)\#\mathcal{T}_2(j;\aaa,\bbb) \\
  ~ &=& -3\cdot2^{2m-1}+(2m+1)2^{m-1}+m+2+2^{m+1}\sum_{j=1}^{m-1}(m+1-j)\#\mathcal{T}_2(j;\aaa,\bbb).
\end{eqnarray*}
Using the fact that $\#\Gamma_1(m)=1$, we have
\begin{eqnarray*}
  \Delta_2 &=& \sum_{(u_1,\ldots,u_{m-1})\in\Gamma_1(m)}S(\aaa\Theta_m,0)+
  \sum_{(u_1,\ldots,u_{m-1})\in\F_2^{m-1}
\setminus\Gamma_1(m)}S(\aaa\Theta_m,\bbb\Omega_m)\\
~&=&-1+\#\mathcal{T}_2(m;\aaa,\bbb)(q-1)+(2^{m-1}-1
-\#\mathcal{T}_2(m;\aaa,\bbb))(-q-1)\\
~&=&-2^{2m-1}+2^{m-1}+2^{m+1}\#\mathcal{T}_2(m;\aaa,\bbb).
\end{eqnarray*}
Imitating the computation of $\Delta_1$ and $\Delta_2$, we obtain the values of $\Delta_3$ and $\Delta_4$.
 %
Using the fact that $\#\Gamma_1(j)=2^{j-m-1}$ if $m+1\leq j\leq b-1$, we have
\begin{eqnarray*}
  \Delta_5 &=& \sum_{j=m+1}^{b-1}(b-j)\left(\sum_{(u_1,\ldots,u_{j-1})\in\Gamma_1(j)}S(\aaa\Theta_j,0)\right.+\left.\sum_{(u_1,\ldots,u_{j-1})\in\F_2^{j-1}\setminus\Gamma_1(j)}
S(\aaa\Theta_j,\bbb\Omega_j)\right)\\
~&=&\sum_{j=m+1}^{b-1}(b-j){\big(}2^{j-m-1}(-1)+\#\mathcal{T}_2(j;\aaa,\bbb)(q-1)\\
~&~&+(2^{j-1}-2^{j-m-1}-\#\mathcal{T}_2(j;\aaa,\bbb))(-q-1){\big)}\\
~&=&-2^m\sum_{j=m+1}^{b-1}(b-j)2^{j-1}+2^{m+1}
\sum_{j=m+1}^{b-1}(b-j)\#\mathcal{T}_2(j;\aaa,\bbb)\\
~&=&-2^{b+m}+(b-m+1)2^{2m}+2^{m+1}\sum_{j=m+1}^{b-1}(b-j)\#\mathcal{T}_2(j;\aaa,\bbb).
\end{eqnarray*}
This completes the proof.
\end{proof}
\begin{theorem}\label{thm13}
Let $(\aaa,\bbb)\in\F_{q^2}\times\F_q$ and $m<b\leq 2m.$ Then
\begin{equation*}
  w_b(\ccc(\aaa,\bbb))=\left\{
                         \begin{array}{ll}
                           0, & \hbox{if $\aaa=\bbb=0$;} \\
                           (2^b-1)2^{2m-b} , & \hbox{if $\aaa\neq0$ and $\bbb=0$;} \\
                           2^{2m}-1, & \hbox{if $\aaa=0$ and $\bbb\neq 0$.}
                         \end{array}
                       \right.
\end{equation*}
If $(\aaa,\bbb)\in\F_{q^2}^*\times\F_q^*$, then we have
\begin{eqnarray*}
  w_b(\ccc(\aaa,\bbb)) &=& 2^{2m}+2^m-2^{2m-b}-1
  -\frac{b(S(\aaa,\bbb)+2^m+1)}{2^b}\\
  ~ &~& -\frac{1}{2^{b-m-1}}\sum_{j=1}^{b-1}(b-j)\#\mathcal{T}_2(j;\aaa,\bbb).
\end{eqnarray*}
\end{theorem}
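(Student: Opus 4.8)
The plan is to treat Theorem \ref{thm13} exactly as Theorem \ref{cor1} was handled in the case $1\leq b\leq m$, the only genuine difference being that now the denominator $\Omega_j$ can vanish, which forces us to separate out the degenerate terms before applying Lemma \ref{lem2}. First I would dispose of the three boundary cases. When $\aaa=\bbb=0$ the codeword is zero, so $w_b=0$. When $\bbb=0$ and $\aaa\neq 0$, every argument $\bbb\sum u_i\eta^{i-1}$ vanishes while $\aaa\sum u_i\theta^{i-1}\neq 0$ for each nonzero $(u_1,\ldots,u_b)$ (because $1,\theta,\ldots,\theta^{b-1}$ are linearly independent over $\F_2$ for $b\le 2m$), so each of the $2^b-1$ nonzero summands in Eq.(\ref{K3}) equals $S(\aaa',0)=-1$, giving $w_b=\frac{1}{2^b}((2^b-1)(q^2-1)+(2^b-1))=(2^b-1)2^{2m-b}$. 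When $\aaa=0$ and $\bbb\neq 0$ the codeword equals a constant-on-$\F_q^*$ pattern and a direct count (or Lemma \ref{lem2} applied to $S(0,\bbb')$) yields the Hamming-type value $2^{2m}-1$; this is where the $m<b$ regime departs from Theorem \ref{cor1}, since now some $\Omega_j$ terms drop out.

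For the main case $(\aaa,\bbb)\in\F_{q^2}^*\times\F_q^*$, I would start from Eq.(\ref{33}) of Theorem \ref{thm4}, which already isolates the leading $(2^b-1)(q^2-1)$ and the $-b\cdot S(\aaa,\bbb)$ terms and reduces the remaining double sum to $\sum_{j=1}^{b-1}(b-j)\sum_{(u_1,\ldots,u_{j-1})}S(\aaa\Theta_j,\bbb\Omega_j)$. The key observation is that this triple-indexed sum is precisely $\Delta_3+\Delta_4+\Delta_5$ in the notation of Lemma \ref{lem15}: $\Delta_3$ collects the range $1\le j\le m-1$ (where $\Omega_j\neq 0$ always, by Lemma \ref{lem10}), $\Delta_4$ is the single level $j=m$ (where exactly one vector gives $\Omega_m=0$), and $\Delta_5$ is the range $m+1\le j\le b-1$ (where $\#\Gamma_1(j)=2^{j-m-1}$ vectors are degenerate). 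Thus the entire nontrivial content is already packaged in Lemma \ref{lem15}, and my job is simply to substitute its closed forms and collect terms.

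The arithmetic core is then the substitution
\[
w_b(\ccc(\aaa,\bbb))=\frac{1}{2^b}\left[(2^b-1)(q^2-1)-b\,S(\aaa,\bbb)-(\Delta_3+\Delta_4+\Delta_5)\right],
\]
where I plug in the three evaluated expressions from Lemma \ref{lem15}. The $\#\mathcal{T}_2(j;\aaa,\bbb)$ terms from $\Delta_3,\Delta_4,\Delta_5$ recombine, after dividing by $2^b$, into the single sum $-\frac{1}{2^{b-m-1}}\sum_{j=1}^{b-1}(b-j)\#\mathcal{T}_2(j;\aaa,\bbb)$; note the coefficient $\frac{2^{m+1}}{2^b}=\frac{1}{2^{b-m-1}}$ matches, and the weights $(b-j)$ are uniform across all three ranges so the three partial sums merge into one sum over $1\le j\le b-1$. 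The remaining constant and $S(\aaa,\bbb)$ contributions must be verified to collapse to $2^{2m}+2^m-2^{2m-b}-1-\frac{b(S(\aaa,\bbb)+2^m+1)}{2^b}$; this is a bookkeeping check on the $q^2$, $2^{2m}$, $2^{b+m}$ and lower-order powers of $2$ coming out of $\Delta_3+\Delta_4+\Delta_5$.

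\textbf{The main obstacle} is the constant-term bookkeeping, not any conceptual step: one must confirm that the degenerate-denominator corrections ($-1$ contributions from each $\Omega_j=0$ vector, totalling $\#\Gamma_1(j)$ per level) exactly account for the shift from the $1\le b\le m$ formula (which had the extra $+2^{m-b}$ and a $+2^m$ term) to the present formula (which has the lone $-1$). Concretely I would verify that the terms $-(b-m+2)2^{2m-1}$, $-(b-m)2^{2m-1}$ and $+(b-m+1)2^{2m}$ from $\Delta_3,\Delta_4,\Delta_5$ cancel down to a single $-2^{2m-b}$ after division, and that the $-2^{b+m}$ in $\Delta_5$ is what produces the $+2^{2m}$ leading term. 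Since Lemma \ref{lem15} has already absorbed all the case analysis on $\Gamma_1(j)$, no further appeal to Lemma \ref{lem2} is needed in the main case, and the proof reduces to this final collection of like terms.
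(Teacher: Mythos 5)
Your proposal is correct and follows essentially the same route as the paper: dispose of the degenerate cases via Eq.(\ref{K3}), Lemma \ref{lem2} and the count of $\Gamma_1(j)$ in Lemma \ref{lem10}, then write the main sum as $\Delta_3+\Delta_4+\Delta_5$ and substitute the closed forms of Lemma \ref{lem15} (the paper merely treats $b=m+1$ separately via $\Delta_1,\Delta_2$, which your uniform treatment subsumes since $\Delta_5$ vanishes there). Only your final bookkeeping predictions are slightly misattributed --- the three $(b-m)$-weighted $2^{2m-1}$/$2^{2m}$ terms actually cancel to zero (the $-2^{2m-b}$ comes from the $-2^{2m}$ inside $(2^b-1)(q^2-1)$, and the $-2^{b+m}$ in $\Delta_5$ yields the $+2^m$, not the $+2^{2m}$) --- but since you flag these as checks to perform, the argument goes through as planned.
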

\begin{proof}
The value of $\Theta_j$ can not be zero if $m+1\leq b\leq 2m$. From Eq.(\ref{K3}) and Lemma \ref{lem2}, we have
\begin{equation*}
  w_b(\ccc(\aaa,\bbb))=\left\{
                         \begin{array}{ll}
                           0, & \hbox{if $\aaa=\bbb=0$;} \\
                           \frac{1}{2^b}((2^b-1)(q^2-1)-
(2^b-1)\cdot(-1)), & \hbox{if $\aaa\neq0$ and $\bbb=0$.}
                         \end{array}
                       \right.
\end{equation*}
Assume that $\bbb\neq0$ in the following discussion.
According to Lemma \ref{lem10}, we next need to discuss the case $m<b\leq 2m$. If $b=m+1$, $\aaa=0$ and $\beta\neq0$, we have
\begin{small}
\begin{eqnarray*}
  w_{m+1}(\ccc(0,\bbb)) &=& \frac{1}{2^{m+1}}{\Bigg{[}}(2^{m
  +1}-1)(q^2-1)-bS(0,\bbb) \\
  ~ &~& -\sum_{j=1}^{m}(m+1-j){\Bigg{(}}\sum_{(u_1,\ldots,u_{j-1})\in\Gamma_1(j)}S(0,0)+\sum_{(u_1,\ldots,u_{j-1})\in
  \F_2^{j-1}\setminus\Gamma_1(j)}S(0,\bbb\Omega_j){\Bigg{)}}{\Bigg{]}}\\
  ~ &=& \frac{1}{2^{m+1}}{\Bigg{[}}(2^{m
  +1}-1)(q^2-1)+b(q+1)-\sum_{j=1}^{m-1}(b-j)\sum_{(u_1,\ldots,u_{j-1})
  \in\F_2^{j-1}}S(0,\bbb\Omega_j)\\
  ~ &~& -\sum_{(u_1,\ldots,u_{j-1})
  \in\Gamma_1{m}}S(0,0)-\sum_{(u_1,\ldots,u_{j-1})
  \in\F_2^{m-1}\setminus\Gamma_1{m}}S(0,\bbb\Omega_j){\Bigg{]}}\\
~&=&\frac{1}{2^{m+1}}\left[(2^{m+1}-1)(q^2-1)+b(q+1)+3\cdot2^{2m-1}-(2m+1)
  2^{m-1}-(m+2)\right. \\
  ~&~&\left.-(q^2-1)+(2^{m-1}-1)(q+1)\right]{\hbox{~~~~~~(by Lemma \ref{lem10})}}\\
  ~&=&2^{2m}-1.
\end{eqnarray*}
\end{small}
If $b=m+1$ and $(\aaa,\beta)\in\F_{q^2}^*\times\F_q^*$, we have
\begin{small}
\begin{eqnarray*}
  w_{m+1}(\ccc(\aaa,\bbb)) &=& \frac{1}{2^{m+1}}{\Bigg{[}}(2^{m+1}-1)(2^{2m}-1)-(m+1)S(\aaa,\bbb)
   \\
  ~ &~& -\sum_{j=1}^{m-1}(m+1-j)\sum_{(u_1,\ldots,u_{j-1})\in\F_2^{j-1}}S(\aaa\Theta_j,
\bbb\Omega_j)\\
  ~ &~& -\sum_{(u_1,\ldots,u_{m-1})\in\F_2^{m-1}}S(\aaa\Theta_m,
\bbb\Omega_m){\Bigg{]}}\\
~&=&\frac{1}{2^{m+1}}\left[(2^{m+1}-1)(2^{2m}-1)-(m+1)S(\aaa,\bbb)-
\Delta_1-\Delta_2\right]\\
~&=&2^{2m}+2^{m-1}-1-\frac{(m+1)(S(\aaa,\bbb)+2^m+1)}{2^{m+1}}\\
~&~&-\sum_{i=1}^m(m+1-j)\#\mathcal{T}_2(j;\aaa,\bbb). {\hbox{~~~~~~(by Lemma \ref{lem15})}}
\end{eqnarray*}
\end{small}
In fact, most of the tedious computations in this proof have been given by Lemma\ \ref{lem15}.

If $m+1<b\leq 2m$, $\aaa=0$ and $\bbb\neq0$, we have
\begin{eqnarray*}
  w_{b}(\ccc(0,\bbb)) &=& \frac{1}{2^b}\left[(2^b-1)(q^2-1)+b(q+1) -\sum_{j=1}^{m-1}(b-j)\sum_{(u_1,\ldots,u_{j-1})\in\F_2^{j-1}}
  S(0,\bbb\Omega_j)\right.
\end{eqnarray*}
\begin{eqnarray*}
  ~ &~& -(b-m)\left(S(0,0)+\sum_{(u_1,\ldots,u_{m-1})\in\F_2^{m-1}\setminus\Gamma_1(m)}S(0,\bbb\Omega_m)\right) \\
  ~ &~& \left.-\sum_{j=m+1}^{b-1}(b-j)\left(\sum_{(u_1,\ldots,u_{m-1})\in\Gamma_1(j)}S(0,0)
  +\sum_{(u_1,\ldots,u_{j-1})\in\F_2^{j-1}\setminus\Gamma_1(j)}S(0,\bbb\Omega_j)\right)\right]\\
  ~&=&\frac{1}{2^b}((2^b-1)(2^{2m}-1)+b(q+1)+2^{2m}-b\cdot2^m-b-1){\hbox{~~~~(by Lemma \ref{lem10})}}\\
  ~&=&2^{2m}-1.
\end{eqnarray*}

If $m+1<b\leq 2m$ and $(\aaa,\beta)\in\F_{q^2}^*\times\F_q^*$, combining Theorem \ref{thm4} and the preceding discussion, we have
\begin{eqnarray*}
  w_b(\ccc(\aaa,\bbb)) &=& \frac{1}{2^b}\left[(2^b-1)(q^2-1)-b S(\aaa,\bbb)-\sum_{j=1}^{m-1}(b-j)\sum_{(u_1,\ldots,u_{j-1})\in\F_2^{j-1}}
  S(\aaa\Theta_j,\bbb\Omega_j)\right. \\
  ~ &~& -(b-m)\sum_{(u_1,\ldots,u_{m-1})\in\F_2^{m-1}}
  S(\aaa\Theta_m,\bbb\Omega_m)\\
~&~&\left.-\sum_{j=m+1}^{b-1}(b-j)\sum_{(u_1,\ldots,u_{j-1})\in\F_2^{j-1}}
  S(\aaa\Theta_j,\bbb\Omega_j)\right]\\
~&=&\frac{1}{2^b}\left[(2^b-1)(q^2-1)-b S(\aaa,\bbb)-\Delta_3-\Delta_4-\Delta_5\right]\\
~&=&2^{2m}+2^m-2^{2m-b}-1
  -\frac{b(S(\aaa,\bbb)+2^m+1)}{2^b}\\
~ &~&  -\frac{1}{2^{b-m-1}}\sum_{j=1}^{b-1}(b-j)\#\mathcal{T}_2(j;\aaa,\bbb).
{\hbox{~~~~~~(by Lemma \ref{lem15})}}
\end{eqnarray*}

This completes the proof.
\end{proof}
For $1\leq j\leq b-1$, a trivial upper bound of $\#\mathcal{T}_2(j;\aaa,\bbb)$ is
$$\#\mathcal{T}_2(j;\aaa,\bbb)\leq \left\{
                                     \begin{array}{ll}
                                       2^{j-1}, & \hbox{if $1\leq j\leq m-1$;} \\
                                       2^{m-1}-1, & \hbox{if $j=m$;} \\
                                       2^{j-1}-2^{j-m-1}, & \hbox{if $m+1\leq j\leq b-1$.}
                                     \end{array}
                                   \right.
$$
If there exists $(\aaa_0,\bbb_0)\in\F_{q^2}^*\times\F_q^*$ such that
$S(\aaa_0,\bbb_0)=q-1$ and $\#\mathcal{T}_2(j;\aaa_0,\bbb_0)$ meets the upper bound for any $1\leq j\leq b-1$. Then
\begin{eqnarray*}
  w_b(\ccc(\aaa_0,\bbb_0)) &=& 2^{2m}+2^m-2^{2m-b}-1-\frac{b\cdot2^{m+1}}{2^b}
-\frac{1}{2^{b-m-1}}\sum_{j=1}^{m-1}
(b-j)2^{j-1}\\
~&~&-\frac{(b-m)(2^{m-1}-1)}{2^{b-m-1}}-
\frac{1}{2^{b-m-1}}\sum_{j=m+1}^{b-1}(b-j)(2^{j-1}-2^{j-m-1})
 \\
  ~ &=& 2^{2m}-2^{2m-b}-2^m+1\\
  ~ &=& \dd_b(C).
\end{eqnarray*}
Since $\dd_b(C)$ is a lower bound of $d_b(C)$, the codeword $\ccc(\aaa_0,\bbb_0)$ has the minimum nonzero $b$-symbol weight.
Similar to Corollary \ref{cor13}, we have the following result.
\begin{corollary}
Assume that $m+1\leq b\leq 2m$.
If there exists $(\aaa_0,\bbb_0)\in\F_{q^2}^*\times\F_q^*$ such that
$S(\aaa_0,\bbb_0)=q-1$ and $\#\mathcal{T}_2(j;\aaa_0,\bbb_0)$ meets the upper bound for any $1\leq j\leq b-1$, then the shortened code $C_{\overline{\mathcal{I}_b(\ccc(\aaa_0,\bbb_0))}}$ is a Griesmer code.
\end{corollary}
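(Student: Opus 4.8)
The final statement is the Corollary asserting that, under the hypothesis that some $(\aaa_0,\bbb_0)\in\F_{q^2}^*\times\F_q^*$ achieves $S(\aaa_0,\bbb_0)=q-1$ with every $\#\mathcal{T}_2(j;\aaa_0,\bbb_0)$ meeting the stated upper bound, the shortened code $C_{\overline{\mathcal{I}_b(\ccc(\aaa_0,\bbb_0))}}$ is a Griesmer code. Let me think about how to prove this.

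The plan is to mirror the argument already given for Corollary 13 (the $1\le b\le m$ case). The discussion immediately preceding this corollary has already established that, under the hypothesis, $w_b(\ccc(\aaa_0,\bbb_0))=\dd_b(C)=2^{2m}-2^{2m-b}-2^m+1$, so $\ccc_0:=\ccc(\aaa_0,\bbb_0)$ is a minimum-weight codeword in the $b$-symbol metric and $d_b(C)=2^{2m}-2^{2m-b}-2^m+1$. So the weight value is in hand; what remains is to identify the parameters of the shortened code and verify the Griesmer equality.

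First I would invoke \cite[Lemma 16]{BUG} to conclude that the matrix $G_b(\ccc_0)$ formed by $\ccc_0,\tau(\ccc_0),\ldots,\tau^{b-1}(\ccc_0)$ has rank $b$; this is exactly the step used in Corollary 13 and requires only that $\ccc_0$ has nonzero $b$-symbol weight, which holds. From this and the definition of $\mathcal{I}_b(\ccc_0)$ as the union of the supports of the $b$ shifts, the shortened code $C_{\overline{\mathcal{I}_b(\ccc_0)}}$ has length equal to $|\mathcal{I}_b(\ccc_0)|=w_b(\ccc_0)=d_b(C)=2^{2m}-2^{2m-b}-2^m+1$, dimension $b$, and minimum Hamming distance $d_1(C)=2^{2m-1}-2^{m-1}$ (shortening on the complement of the $b$-symbol support preserves the original minimum distance, as in Corollary 13). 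Then I would verify the Griesmer bound by the same kind of computation as in Corollary 13, namely that $\sum_{i=0}^{b-1}\lceil (2^{2m-1}-2^{m-1})/2^i\rceil$ equals the length. The one genuine arithmetic difference from the earlier case is that here $b>m$, so the ceilings $\lceil 2^{m-1-i}(2^m-1)/1\rceil$ are no longer all exact: once $i\ge m$ the quantity $(2^{2m-1}-2^{m-1})/2^i$ drops below $1$ and the ceiling rounds up to $1$. I would split the sum as $\sum_{i=0}^{m-1}(2^{2m-1-i}-2^{m-1-i})+\sum_{i=m}^{b-1}1$; the first sum telescopes to $(2^{2m-1}+\cdots+2^{2m-m})-(2^{m-1}+\cdots+1)=(2^{2m}-2^m)-(2^m-1)=2^{2m}-2^{m+1}+1$, wait, I would recompute carefully to land on $2^{2m}-2^{2m-b}-2^m+1$, and the second sum contributes $b-m$.

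The main obstacle, and the only place requiring care, is the Griesmer arithmetic in the regime $i\ge m$ where the ceiling function becomes active, since the clean telescoping identity used in Corollary 13 no longer applies directly; I would need to confirm that $\sum_{i=0}^{m-1}(2^{2m-1-i}-2^{m-1-i})+(b-m)$ simplifies exactly to $2^{2m}-2^{2m-b}-2^m+1$, matching $d_b(C)$. Everything else—rank of $G_b(\ccc_0)$, the length/dimension/distance identification, and the appeal to the minimum-distance value from Theorem~\ref{thm0}—is a direct transcription of the proof of Corollary~\ref{cor13}, so I would keep that part brief and reference it explicitly rather than repeat it.
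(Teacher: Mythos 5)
Your overall route is the same as the paper's: identify the shortened code's parameters as $[\,n=d_b(C)=2^{2m}-2^m-2^{2m-b}+1,\ k=b,\ d_1=2^{2m-1}-2^{m-1}\,]$ (using the rank-$b$ fact from \cite[Lemma 16]{BUG} and the preceding computation showing $w_b(\ccc(\aaa_0,\bbb_0))=\dd_b(C)$), and then check the Griesmer equality. The paper's own proof is even terser than yours --- it simply states these parameters and asserts the conclusion --- so structurally you are aligned.

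However, the one step you yourself flag as ``the only place requiring care'' is carried out incorrectly, and as written the verification would fail. You claim that for $i\geq m$ the quantity $(2^{2m-1}-2^{m-1})/2^i$ drops below $1$ so each ceiling contributes $1$, giving a tail of $b-m$. This is false: with $d_1=2^{m-1}(2^m-1)$ and $i=m+j-1$ for $1\leq j\leq b-m$, one has $d_1/2^i=(2^m-1)/2^j=2^{m-j}-2^{-j}$, so $\bigl\lceil d_1/2^i\bigr\rceil=2^{m-j}=2^{2m-1-i}$, which equals $1$ only at $i=2m-1$. The correct tail is $\sum_{i=m}^{b-1}2^{2m-1-i}=2^m-2^{2m-b}$, and together with your (correct) head sum $2^{2m}-2^{m+1}+1$ this gives exactly $2^{2m}-2^m-2^{2m-b}+1=n$, as required. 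Your proposed identity $\sum_{i=0}^{m-1}(2^{2m-1-i}-2^{m-1-i})+(b-m)=2^{2m}-2^{2m-b}-2^m+1$ is not true (e.g.\ $m=2$, $b=3$ gives $10\neq 11$), so the confirmation you defer to would not go through without replacing the tail computation as above. Everything else in your argument (rank, length, dimension, preservation of minimum distance under shortening to the complement of the $b$-symbol support) is fine.
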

\begin{proof}
The desired result follows from the parameters of the shortened code, its parameters are
$$\left[n=d_b(C),k=b,
d_1\left(C_{\overline{\mathcal{I}_b\left(\ccc(\aaa_0,\bbb_0)\right)}}
\right)
=d_1(C)\right],$$
where $d_b(C)=\dd_b(C)=2^{2m}-2^m-2^{2m-b}+1$ and $d_1(C)=2^{2m-1}-2^{m-1}.$
\end{proof}

For any $1\leq j\leq b-1,$ let $\mathcal{A}_2(j)$ be defined by
$$\mathcal{A}_2(j)=\left\{\left.\frac{\Theta_j^{q+1}}{\Omega_j}\right|
(u_1,\ldots,u_{j-1})\in\F_2^{j-1}\setminus\Gamma_1(j)\right\}.$$
It is easy to see that $\mathcal{A}_2(j)$ is a subset of $\F_q^*.$
If there exists some $j$ such that the size of $\mathcal{A}_2(j)$ is greater than $2^{m-1}$, then there is no $(\aaa,\bbb)\in\F_{q^2}^*\times\F_q^*$ such that $S(\aaa_0,\bbb_0)=q-1$ and $\#\mathcal{T}_2(j;\aaa_0,\bbb_0)$ meets the upper bound.
\begin{example}
(i) If $m=2$, the Kasami code has parameters $[n=15,k=6,d_1(C)=6]$. When $b=3$, its $3$-symbol weight distribution is
$1+15T^{12}+15T^{13}+30T^{14}+3T^{15},$
and the minimum $3$-symbol weight is $d_3(C)=12>11=\dd_3(C).$
When $b=4$, its $4$-symbol weight distribution is
$1+15T^{13}+15T^{14}+33T^{15},$
and the minimum $4$-symbol weight is $d_4(C)=13>12=\dd_3(C).$

(ii) If $m=3$, the Kasami code has parameters $[n=63,k=9,d_1(C)=28]$. When $b=4$, its $4$-symbol weight distribution is
$1+63T^{55}+63T^{56}+63T^{58}+63T^{59}+126T^{60}+63T^{62}+70T^{63},$
and the minimum $4$-symbol weight is $d_4(C)=55>53=\dd_4(C).$

 These results are verified by Magma programs.
\end{example}

\subsection{Case $2m<b\leq 3m$}
Since $\theta$ is a primitive element of $\F_{q^2}$,  $\Theta_j$ can not be zero if $1\leq b\leq 2m$. However, if $2m<b\leq 3m$, we select vectors $(u_1,\ldots,u_{b-1})$ such that $\Theta_j=0$ since the value of $S(0,\bbb\Omega_j)$ is determined to be $-q-1$.
For $1\leq j\leq b-1$, let $\Gamma_2(j)$ be defined by
\begin{equation*}
  \Gamma_2(j)=\left\{\left.(u_1,u_2,\ldots,u_{j-1})\in\F_2^{j-1}\right|
\Theta_j=0\right\}.
\end{equation*}
\begin{lemma}\label{lem13}
The size of $\Gamma_2(j)$ is
\begin{equation*}
 \#\Gamma_2(j)= \left\{
     \begin{array}{ll}
       0, & \hbox{if $1\leq j\leq 2m-1$;} \\
       1, & \hbox{if $j=2m$;} \\
       2^{j-2m-1}, & \hbox{if $2m+1\leq j\leq b-1$.}
     \end{array}
   \right.
\end{equation*}Moreover, $\Gamma_1(j)\cap\Gamma_2(j)=\emptyset$.
\end{lemma}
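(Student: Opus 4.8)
The plan is to reuse the factorization argument from Lemma \ref{lem10}, now with $\theta$ in the role that $\eta$ played there. The key observation is that $\Theta_j = 1 + \sum_{i=1}^{j-1} u_i \theta^i + \theta^j$ vanishes exactly when $\theta$ is a root of the monic polynomial $Q(x) = 1 + \sum_{i=1}^{j-1} u_i x^i + x^j \in \F_2[x]$, which has degree $j$ and constant term $1$. Since $\theta$ is a \emph{primitive} element of $\F_{q^2} = \F_{2^{2m}}$, its minimal polynomial $m_\theta(x)$ over $\F_2$ is irreducible of degree exactly $2m$ and has constant term $1$. Hence $(u_1,\ldots,u_{j-1}) \in \Gamma_2(j)$ if and only if $m_\theta(x) \mid Q(x)$, and counting $\Gamma_2(j)$ becomes a count of admissible cofactors.

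First I would dispose of $1 \le j \le 2m-1$: here $\deg Q = j < 2m = \deg m_\theta$, so $m_\theta$ cannot divide the nonzero polynomial $Q$ and $\#\Gamma_2(j) = 0$. For $j = 2m$, divisibility with equal degree and both polynomials monic forces $Q = m_\theta$, which determines $(u_1,\ldots,u_{2m-1})$ uniquely, so $\#\Gamma_2(2m) = 1$. For $2m+1 \le j \le b-1$ I would write $Q(x) = m_\theta(x)\, h(x)$ with $h$ monic of degree $j-2m$; comparing constant terms gives $h(0)=1$, so $h(x) = 1 + c_1 x + \cdots + c_{j-2m-1} x^{j-2m-1} + x^{j-2m}$ with $c_1,\ldots,c_{j-2m-1} \in \F_2$ free. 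This sets up a bijection between $\Gamma_2(j)$ and the $2^{j-2m-1}$ such cofactors, yielding $\#\Gamma_2(j) = 2^{j-2m-1}$. Structurally this is identical to the computation in Lemma \ref{lem10}, with $2m$ replacing $m$.

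For the disjointness $\Gamma_1(j) \cap \Gamma_2(j) = \emptyset$, I would suppose some $(u_1,\ldots,u_{j-1})$ lies in both. Then $\Omega_j = \Theta_j = 0$, so $\eta$ and $\theta$ are both roots of the same $Q(x)$, forcing $m_\eta(x) \mid Q(x)$ and $m_\theta(x) \mid Q(x)$. The crux is that $m_\eta$ and $m_\theta$ are coprime: they are distinct monic irreducibles, since $\theta \notin \F_q$ (its multiplicative order $2^{2m}-1$ does not divide $2^m-1$) while every root of $m_\eta$ lies in $\F_q$. Coprimality gives $m_\eta(x) m_\theta(x) \mid Q(x)$, so $3m = \deg(m_\eta m_\theta) \le \deg Q = j$; but throughout this subsection $j \le b - 1 \le 3m - 1$, a contradiction.

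I expect the counting to be entirely routine, a transcription of Lemma \ref{lem10}. The only points needing care are recording that $\deg m_\theta = 2m$ and $m_\theta(0)=1$ (so that the cofactor bijection is clean and the constant term is accounted for), and, for the disjointness claim, the coprimality of $m_\theta$ and $m_\eta$ together with the tightness of the degree bound $3m$ against the range $b \le 3m$. The latter is the genuinely new ingredient and is where I expect the only real thought to be required; everything else parallels the previous subsection.
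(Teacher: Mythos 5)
Your proposal is correct and follows essentially the same route as the paper: the counting is the divisibility-by-the-minimal-polynomial argument of Lemma \ref{lem10} with $\deg m_\theta = 2m$ in place of $m$, and your disjointness argument (coprimality of $m_\eta$ and $m_\theta$ forcing $3m \le \deg Q = j \le b-1 \le 3m-1$) is just the divisibility rephrasing of the paper's count of the $3m$ distinct conjugate roots against the degree of the polynomial. No gaps.
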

\begin{proof}
The proof of the size of $\Gamma_2(j)$ is similar to that of Lemma \ref{lem10}, thus it is omitted here.

Assume that there exists a vector $(v_1,\ldots,v_{j-1})\in\Gamma_1(j)\cap\Gamma_2(j)$. Then $\eta$ and $\theta$ are two roots of the corresponding polynomial $f(x)=1+v_1x+\cdots+v_{j-2}x^{j-2}+x^{j-1}$ over $\F_2$. Since $\eta$ is not a conjugate of $\theta$, the size of the union of the conjugate sets of $\theta$ and $\eta$ are $m+2m=3m$, i.e.,
$$\left|\left\{\eta,\eta^2,\ldots,\eta^{2^{m-1}}\right\}
\cup\left\{\theta,\theta^2,\ldots,
\theta^{2^{2m-1}}\right\}\right|=3m.$$
Then $f(x)$ has $3m$ distinct roots, which contradicts the assumption that $b\leq 3m$ and $1\leq j\leq b-1$. Therefore, $\Gamma_1(j)\cap\Gamma_2(j)=\emptyset$.
\end{proof}
For any $1\leq j\leq b-1$ and $(\aaa,\bbb)\in\F_{q^2}^*\times\F_q^*$, let $\mathcal{T}_3(j;\aaa,\bbb)$ be defined by
\begin{equation*}
  \mathcal{T}_3(j;\aaa,\bbb)=\left\{(u_1,\ldots,
u_{j-1})\in\F_2^{j-1}\setminus\{\Gamma_1(j)\cup\Gamma_2(j)\}\left|
\T_m\left(\frac{\aaa^{q+1}}{\bbb}\cdot\frac{\Theta_j^{q+1}}
{\Omega_j}\right)=1\right.\right\}.
\end{equation*}
The following lemma is intended to simplify the computation in subsequent proof of Theorem \ref{thm19}.
\begin{lemma}\label{lem25}
Let $\Delta_i$ with $6\leq i\leq 11$ be defined by
\begin{eqnarray*}
  \Delta_6 &=& \sum_{j=1}^{m-1}(2m+1-j)\sum_{(u_1,\ldots,u_{j-1})\in\F_2^{j-1}}S(\aaa\Theta_j,\bbb\Omega_j), \\
  \Delta_7 &=& (m+1)\sum_{(u_1,\ldots,u_{m-1})\in\F_2^{m-1}}S(\aaa\Theta_m,\bbb\Omega_m), \\ \Delta_8 &=& \sum_{j=m+1}^{2m-1}(2m+1-j)\sum_{(u_1,\ldots,u_{j-1})
  \in\F_2^{j-1}}S(\aaa\Theta_j,\bbb\Omega_j),\\
  \Delta_9 &=& \sum_{(u_1,\ldots,u_{2m-1})\in\F_2^{2m-1}}
S(\aaa\Theta_{2m},\bbb\Omega_{2m}),\\
  \Delta_{10} &=& \sum_{j=1}^{m-1}(b-j)\sum_{(u_1,\ldots,u_{j-1})\in\F_2^{j-1}}S(\aaa\Theta_j,\bbb\Omega_j), \\
  \Delta_{11} &=& (b-m)\sum_{(u_1,\ldots,u_{m-1})\in\F_2^{m-1}}S(\aaa\Theta_m,\bbb\Omega_m),\\
  \Delta_{12} &=& \sum_{j=m+1}^{2m-1}(b-j)\sum_{(u_1,\ldots,u_{j-1})\in\F_2^{j-1}}S(\aaa\Theta_j,\bbb\Omega_j),\\
  \Delta_{13} &=& (b-2m)\sum_{(u_1,\ldots,u_{2m-1})\in\F_2^{2m-1}}S(\aaa\Theta_{2m},\bbb\Omega_{2m}),\\
  \Delta_{14} &=& \sum_{j=2m+1}^{b-1}(b-j)\sum_{(u_1,\ldots,u_{j-1})\in\F_2^{j-1}}S(\aaa\Theta_j,\bbb\Omega_j),
\end{eqnarray*}
respectively.
Then
\begin{eqnarray*}
  \Delta_6 &=& -(m+3)\cdot2^{2m-1}+(3m+1)2^{m-1}+2m+2+2^{m+1}\sum_{j=1}^{m-1}
(2m+1-j)\#\mathcal{T}_3(j;\aaa,\bbb), \\
  \Delta_7 &=& -(m+1)(2^{2m-1}-2^{m-1}+1)+2^{m+1}(m+1)\#\mathcal{T}_3(m;\aaa,\bbb), \\
  \Delta_8 &=& -3\cdot2^{3m-1}+(m+2)\cdot2^{2m}+2^{m+1}\sum_{j=m+1}^{2m-1}
  (2m+1-j)\#\mathcal{T}_3(j;\aaa,\bbb),\\
\Delta_9 &=&-2^{3m-1}+2^{m+1}\#\mathcal{T}_3(2m;\aaa,\bbb),\\
  \Delta_{10} &=& -(b-m+2)2^{2m-1}+(b+m)2^{m-1}+b+1+2^{m+1}
  \sum_{j=1}^{m-1}(b-j)\#\mathcal{T}_3(j;\aaa,\bbb), \\
  \Delta_{11} &=& -(b-m)2^{2m-1}+(b-m)2^{m-1}+2^{m+1}(b-m)
  \#\mathcal{T}_3(m;\aaa,\bbb),\\
   \Delta_{12} &=& -(b-2m+2)2^{3m-1}+(b-m+1)2^{2m}+2^{m+1}\sum_{j=m+1}^{2m-1}(b-j)
  \#\mathcal{T}_3(j;\aaa,\bbb),
\end{eqnarray*}
\begin{eqnarray*}
 \Delta_{13} &=& -(b-2m)2^{3m-1}+(b-2m)2^{m+1}\#\mathcal{T}_3(2m;\aaa,\bbb),\\
  \Delta_{14} &=& -2^{b+m}+(b-2m+1)2^{3m}+2^{m+1}
  \sum_{j=2m+1}^{b-1}(b-j)\#\mathcal{T}_3(j;\aaa,\bbb).
\end{eqnarray*}
\end{lemma}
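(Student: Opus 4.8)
The plan is to reduce every one of the nine quantities $\Delta_6,\dots,\Delta_{14}$ to a single \emph{master} evaluation of the inner sum $\sum_{(u_1,\ldots,u_{j-1})\in\F_2^{j-1}} S(\aaa\Theta_j,\bbb\Omega_j)$ at a fixed index $j$, and then assemble each $\Delta_i$ by summing this over the relevant range of $j$ against the weight $(2m+1-j)$ (for $\Delta_6$--$\Delta_9$) or $(b-j)$ (for $\Delta_{10}$--$\Delta_{14}$). Fix $(\aaa,\bbb)\in\F_{q^2}^*\times\F_q^*$ and partition $\F_2^{j-1}$ into four blocks: $\Gamma_1(j)$ (where $\Omega_j=0$), $\Gamma_2(j)$ (where $\Theta_j=0$), $\mathcal{T}_3(j;\aaa,\bbb)$, and the complement of their union. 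By Lemma \ref{lem13} the first two blocks are disjoint, so this is genuinely a partition. On each block I read off the value of $S$ from Lemma \ref{lem2}: on $\Gamma_1(j)$ the second argument $\bbb\Omega_j$ vanishes while $\aaa\Theta_j\neq 0$, giving $S=-1$; on $\Gamma_2(j)$ the first argument $\aaa\Theta_j$ vanishes while $\bbb\Omega_j\neq 0$, so $\T_m(0)=0$ and $S=-q-1$; on $\mathcal{T}_3(j;\aaa,\bbb)$ the trace equals $1$ and $S=q-1$; and on the remaining block the trace equals $0$ and again $S=-q-1$.

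Because $\Gamma_2(j)$ and the complementary block both contribute $-q-1$, I merge them, whereupon the count $\#\Gamma_2(j)$ cancels out of the bookkeeping entirely. Collecting the four contributions yields the master identity
\begin{equation*}
\sum_{(u_1,\ldots,u_{j-1})\in\F_2^{j-1}} S(\aaa\Theta_j,\bbb\Omega_j)
= 2^m\,\#\Gamma_1(j) + 2^{m+1}\,\#\mathcal{T}_3(j;\aaa,\bbb) - (2^m+1)2^{j-1},
\end{equation*}
valid for every $1\leq j\leq b-1$. From here the nine closed forms are obtained mechanically: I substitute the explicit sizes $\#\Gamma_1(j)$ from Lemma \ref{lem10} (namely $0$, $1$, and $2^{j-m-1}$ in the ranges $j\leq m-1$, $j=m$, and $j\geq m+1$) into the master identity on each range of $j$, multiply by the appropriate weight, and sum. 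The $\#\mathcal{T}_3$ contributions simply ride along with coefficient $2^{m+1}$ times the prescribed weight, exactly as in the stated formulas.

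The genuinely laborious part is the weighted summation of the $\mathcal{T}_3$-free (main) terms, not the case analysis. One needs the closed evaluation of sums of the shape $\sum_{j}(c-j)2^{j-1}$ over the consecutive blocks $1\leq j\leq m-1$, the single index $j=m$, $m+1\leq j\leq 2m-1$, the single index $j=2m$, and $2m+1\leq j\leq b-1$; these follow from Lemma \ref{lem7} together with the partial-sum identity already used in Lemma \ref{lem11}, after an index shift $j\mapsto j-2m$ on the top range. I expect the main obstacle to be precisely this bookkeeping across the boundary indices $j=m$ and $j=2m$, where $\Gamma_1$ and $\Gamma_2$ each contribute a single exceptional vector, together with keeping the geometric sums aligned so that the leading $2^{2m-1}$, $2^{3m-1}$, and $2^{b+m}$ terms emerge with the correct coefficients. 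Once the master identity is in hand, each $\Delta_i$ reduces to a short and routine computation.
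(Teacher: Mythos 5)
Your proposal is correct and follows essentially the same route as the paper: the paper likewise partitions $\F_2^{j-1}$ into $\Gamma_1(j)$, $\Gamma_2(j)$, $\mathcal{T}_3(j;\aaa,\bbb)$ and the remainder, reads off the value of $S$ on each block from Lemma \ref{lem2} (using $\Gamma_1(j)\cap\Gamma_2(j)=\emptyset$ from Lemma \ref{lem13}), and then evaluates the weighted geometric sums, doing this block by block for $\Delta_9$ and $\Delta_{14}$ explicitly and by reference to $\Delta_1$, $\Delta_2$, $\Delta_5$ for the rest. Your only departure is cosmetic: packaging the per-$j$ computation into a single master identity in which $\#\Gamma_2(j)$ cancels, which I verified reproduces all nine stated closed forms.
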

\begin{proof}
Similar to the computation of $\Delta_1$, $\Delta_2$ and $\Delta_5$, we obtain the values of $\Delta_6$, $\Delta_7$, $\Delta_8$, $\Delta_{12}$ and $\Delta_{13}$,
By the fact that $\Gamma_1(j)\cap\Gamma_2(j)=\emptyset$, we have
\begin{eqnarray*}
  \Delta_9 &=& \sum_{(u_1,\ldots,u_{2m-1})\in\Gamma_1(2m)}S(\aaa\Theta_{2m},0)
   +\sum_{(u_1,\ldots,u_{2m-1})\in\Gamma_2(2m)}S(0,\bbb\Omega_{2m})\\
  ~&~&+\sum_{(u_1,\ldots,u_{2m-1})\in\F_2^{2m-1}\setminus\{\Gamma_1(2m)\cup\Gamma_2(2m)\}}S(\aaa\Theta_{2m},\bbb\Omega_{2m}) \\
  ~&=&2^{m-1}\cdot(-1)+1\cdot(-q-1)+\#\mathcal{T}_3(2m;\aaa,\bbb)(q-1)\\
  ~&~&+(2^{2m-1}-2^{m-1}-1-\#\mathcal{T}_3(2m;\aaa,\bbb))(-q-1)\\
  ~ &=& -2^{3m-1}+2^{m+1}\#\mathcal{T}_3(2m;\aaa,\bbb).
\end{eqnarray*}
According to Lemma \ref{lem13}, we have
\begin{eqnarray*}
  \Delta_{14} &=& \sum_{j=2m+1}^{b-1}(b-j)\left(\sum_{(u_1,\ldots,u_{j-1})\in\Gamma_1(j)}S(\aaa\Theta_j,0) +\sum_{(u_1,\ldots,u_{j-1})\in\Gamma_2(j)}S(0,\bbb\Omega_j)\right.\\
  ~&~&\left.+\sum_{(u_1,\ldots,u_{j-1})\in\F_2^{j-1}
  \setminus\{\Gamma_1(j)\cup\Gamma_2(j)\}}S(\aaa\Theta_j,\bbb\Omega_j)\right)\\
  ~&=&\sum_{j=2m+1}^{b-1}(b-j)\left(2^{j-m-1}\cdot(-1)
+2^{j-2m-1}(-2^m-1)+\#\mathcal{T}_3(j;\aaa,\bbb)(q-1)\right.\\
~&~&\left.+(2^{j-1}-2^{j-m-1}-2^{j-2m-1}-
  \#\mathcal{T}_3(j;\aaa,\bbb))(-q-1)\right)\\
  ~&=&-2^m\sum_{j=2m+1}^{b-1}(b-j)2^{j-1}+2^{m+1}\sum_{j=2m+1}^{b-1}
(b-j)\#\mathcal{T}_3(j;\aaa,\bbb)\\
  ~&=&-2^{b+m}+(b-2m+1)2^{3m}+2^{m+1}\sum_{j=2m+1}^{b-1}
(b-j)\#\mathcal{T}_3(j;\aaa,\bbb).
\end{eqnarray*}
This completes the proof.
\end{proof}
\begin{theorem}\label{thm19}
Let $(\aaa,\bbb)\in\F_{q^2}\times\F_q$ and $2m< b\leq 3m$.
Then
\begin{equation*}
  w_b(\ccc(\aaa,\bbb))=\left\{
                         \begin{array}{ll}
                           0, & \hbox{if $\aaa=\bbb=0$;} \\
                           2^{2m}-1, & \hbox{if $\aaa\cdot\bbb=0$.}
                         \end{array}
                       \right.
\end{equation*}
If $(\aaa,\bbb)\in\F_{q^2}^*\times\F_q^*$, then
 \begin{eqnarray*}
  w_b(\ccc(\aaa,\bbb)) &=& 2^{2m}+2^m-1-\frac{1}{2^{b-2m}}\\
  ~&~&-\frac{1}{2^b}\left(b(S(\aaa,\bbb)+2^m+1)
  -2^{m+1}\sum_{j=1}^{b-1}(b-j)\#\mathcal{T}_3(j;\aaa,\bbb)\right).
\end{eqnarray*}
\end{theorem}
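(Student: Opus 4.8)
The plan is to follow exactly the same template that proved Theorems~\ref{cor1} and \ref{thm13}, now specialized to the range $2m<b\leq 3m$. First I would dispose of the degenerate cases. If $\aaa=\bbb=0$ then $w_b=0$ trivially. For $\aaa\cdot\bbb=0$ with $(\aaa,\bbb)\neq\0$, I would split into $\aaa\neq0,\bbb=0$ and $\aaa=0,\bbb\neq0$ and evaluate Eq.~(\ref{K3}) directly. Since $\eta$ is a primitive element of $\F_q$ of multiplicative order $2^m-1$, the vector $\bbb\sum_{i=1}^bu_i\eta^{i-1}$ ranges over values that are zero for exactly $2^{b-m}-1$ nonzero choices of $(u_1,\ldots,u_b)$ (the syndrome-count coming from the minimal polynomial of $\eta$, exactly as in Lemma~\ref{lem10}); combining the corresponding $S(\0,0)=q^2-1$ and $S(\aaa',0)=-1$ contributions with Lemma~\ref{lem2} should collapse both subcases to the claimed common value $2^{2m}-1$.

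For the main case $(\aaa,\bbb)\in\F_{q^2}^*\times\F_q^*$, I would start from Eq.~(\ref{33}) of Theorem~\ref{thm4} and partition the inner double sum $\sum_{j=1}^{b-1}(b-j)\sum_{(u_1,\ldots,u_{j-1})}S(\aaa\Theta_j,\bbb\Omega_j)$ according to where the denominators degenerate. The point of the range $2m<b\leq 3m$ is that now \emph{both} $\Omega_j$ (governed by $\Gamma_1(j)$, Lemma~\ref{lem10}) and $\Theta_j$ (governed by $\Gamma_2(j)$, Lemma~\ref{lem13}) can vanish, and crucially Lemma~\ref{lem13} guarantees $\Gamma_1(j)\cap\Gamma_2(j)=\emptyset$, so the three regimes $\Omega_j=0$, $\Theta_j=0$, and neither-zero never overlap. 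I would therefore split the $j$-range into the blocks $1\leq j\leq m-1$, $j=m$, $m+1\leq j\leq 2m-1$, $j=2m$, and $2m+1\leq j\leq b-1$, matching the decomposition into $\Delta_6,\ldots,\Delta_{14}$ already computed in Lemma~\ref{lem25}. On each $(u_1,\ldots,u_{j-1})$ in $\Gamma_1(j)$ one has $S(\aaa\Theta_j,0)=-1$, on each element of $\Gamma_2(j)$ one has $S(0,\bbb\Omega_j)=-q-1$, and on the complement $S(\aaa\Theta_j,\bbb\Omega_j)$ equals $q-1$ or $-q-1$ according to whether the trace condition defining $\mathcal{T}_3(j;\aaa,\bbb)$ holds; substituting these and invoking Lemma~\ref{lem25} reduces the whole expression to the stated closed form.

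Concretely, I would assemble the coefficient $(b-j)$ against each $\Delta$, recognizing that the triangular weight $(b-j)$ is what the $\Delta_{10},\ldots,\Delta_{14}$ were designed to carry, then perform the algebra: collect the constant terms, the $S(\aaa,\bbb)$ term contributing $-\frac{bS(\aaa,\bbb)}{2^b}$, the $+2^m+1$ adjustment absorbed into $-\frac{b(S(\aaa,\bbb)+2^m+1)}{2^b}$, and the $\mathcal{T}_3$ terms collected into the single sum $\frac{2^{m+1}}{2^b}\sum_{j=1}^{b-1}(b-j)\#\mathcal{T}_3(j;\aaa,\bbb)$. The only genuinely new bookkeeping relative to Theorem~\ref{thm13} is tracking the extra $\Gamma_2$ contributions, which produce the stray $-\frac{1}{2^{b-2m}}$ term: the $\Theta_j=0$ events each inject an $S(0,\bbb\Omega_j)=-(2^m+1)$, and summing $(b-j)2^{j-2m-1}\cdot(2^m+1)$ over $2m\leq j\leq b-1$ (with the $j=2m$ singleton handled via $\Delta_{13}$ and $\Delta_9$) is precisely what Lemma~\ref{lem25} evaluates. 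Dividing through by $2^b$ and simplifying should yield $2^{2m}+2^m-1-\frac{1}{2^{b-2m}}-\frac{1}{2^b}\bigl(b(S(\aaa,\bbb)+2^m+1)-2^{m+1}\sum_{j=1}^{b-1}(b-j)\#\mathcal{T}_3(j;\aaa,\bbb)\bigr)$.

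The main obstacle I anticipate is purely arithmetic verification rather than any conceptual difficulty: ensuring that the constant terms surviving from the nine distinct $\Delta$-blocks combine \emph{exactly} to $2^{2m}+2^m-1-2^{2m-b}$ before the fractional correction, with no off-by-one errors in the geometric sums $\sum (b-j)2^{j-1}$ and their shifted counterparts $\sum(b-j)2^{j-m-1}$ and $\sum(b-j)2^{j-2m-1}$. The disjointness $\Gamma_1(j)\cap\Gamma_2(j)=\emptyset$ from Lemma~\ref{lem13} is what makes the inclusion–exclusion trivial (the three regimes simply add), so once Lemma~\ref{lem25} is in hand the proof is a substitution-and-collect exercise; I would present it exactly as in Theorem~\ref{thm13}, writing $w_b=\frac{1}{2^b}\bigl[(2^b-1)(q^2-1)-bS(\aaa,\bbb)-\Delta_{10}-\Delta_{11}-\Delta_{12}-\Delta_{13}-\Delta_{14}\bigr]$ and citing Lemma~\ref{lem25} for the simplification, noting that the case $b=2m+1$ may warrant a separate display (paralleling the $b=m+1$ special case in Theorem~\ref{thm13}) since then the block $2m+1\leq j\leq b-1$ is empty.
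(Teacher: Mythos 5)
Your proposal is correct and follows essentially the same route as the paper's proof: degenerate cases evaluated directly from Eq.~(\ref{K3}) via the counts of vanishing $\Theta_j$/$\Omega_j$ combinations, and the main case reduced to the substitution $w_b=\frac{1}{2^b}\bigl[(2^b-1)(q^2-1)-bS(\aaa,\bbb)-\sum_{i=10}^{14}\Delta_i\bigr]$ with Lemma~\ref{lem25} doing the arithmetic, the disjointness $\Gamma_1(j)\cap\Gamma_2(j)=\emptyset$ justifying the three-regime split, and $b=2m+1$ treated separately via $\Delta_6,\ldots,\Delta_9$ exactly as you anticipate.
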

\begin{proof}
Notice that the size of $\Gamma_1(j)$ is $2^{j-m-1}$ if $2m<b\leq 3m$. By Lemma \ref{lem13}, $|\Gamma_1\cup\Gamma_2|=2^{j-m-1}+2^{j-2m-1}.$
According to Lemma \ref{lem10}, we need to discuss the case when $2m<b\leq 3m$.

If $b=2m+1$, $\aaa\neq0$ and $\bbb=0$, we have
\begin{eqnarray*}
  w_{2m+1}(\ccc(\aaa,0)) &=& \frac{1}{2^{2m+1}}{\Bigg{[}}(2^{2m+1}-1)(q^2-1)-(2m+1)S(\aaa,0)\\
  ~&~&-\sum_{j=1}^{2m-1}(2m+1-j)
  \sum_{(u_1,\ldots,u_{j-1})\in\F_2^{j-1}}S(\aaa\Theta_j,0) \\
  ~ &~&
  -\sum_{(u_1,\ldots,u_{2m-1})\in\Gamma_2(2m)}S(0,0)-\sum_{(u_1,\ldots,u_{2m-1})\in\F_2^{2m-1}\setminus\Gamma_2(2m)}
  S(\aaa\Theta_{2m,0}){\Bigg{]}} \\
  ~ &=& \frac{1}{2^{2m+1}}{\Bigg{[}}(2^{2m+1}-1)(2^{2m}-1)+2m+1+
  \sum_{j=1}^{2m-1}(2m+1-j)2^{j-1} \\
  ~&~&-(q^2-1)-(2^{2m-1}-1)(-1){\Bigg{]}}{\hbox{~~~~(by Lemma \ref{lem13})}}\\
  ~&=&2^{2m}-1.
\end{eqnarray*}
If $b=2m+1$, $\aaa=0$ and $\bbb\neq0$, we have
\begin{eqnarray*}
  w_{2m+1}(\ccc(\aaa,0)) &=& \frac{1}{2^{2m+1}}{\Bigg{[}}(2^{2m+1}-1)(q^2-1)-(2m+1)S(0,\bbb)\\
  ~&~&-\sum_{j=1}^{m-1}(2m+1-j)
  \sum_{(u_1,\ldots,u_{j-1})\in\F_2^{j-1}}S(0,\bbb\Omega_j)\\
  ~&~&-(m+1){\Bigg{(}}\sum_{(u_1,\ldots,u_{1m-1})\in\Gamma_1(m)}S(0,0)
  +\sum_{(u_1,\ldots,u_{2m-1})\in\F_2^{m-1}
  \setminus\Gamma_1(m)}S(0,\bbb\Omega_m){\Bigg{)}} \\
  ~&~&{\Bigg{(}}-\sum_{j=m+1}^{2m-1}(2m+1-j){\Bigg{(}}
  \sum_{(u_1,\ldots,u_{j-1})\in\Gamma_1(j)}S(0,0){\Bigg{)}}\\
  ~&~&
  +\sum_{(u_1,\ldots,u_{j-1})\in\F_2^{j-1}\setminus\Gamma_1(j)}
  S(0,\bbb\Omega_j){\Bigg{)}}{\Bigg{]}}\\
  ~&=&2^{2m}-1.
\end{eqnarray*}
If $b=2m+1$ and $(\aaa,\bbb)\in \F_{q^2}^*\times\F_q^*$, then
\begin{eqnarray*}
  w_{2m+1}(\ccc(\aaa,\bbb)) &=& \frac{1}{2^{2m+1}}{\Bigg{[}}(2^{2m+1}-1)(2^{2m}-1)-(2m+1)S(\aaa,\bbb)
\end{eqnarray*}
\begin{eqnarray*}
  ~ &~& -\sum_{j=1}^{m-1}(2m+1-j)\sum_{(u_1,\ldots,u_{j-1})\in\F_2^{j-1}}
  S(\aaa\Theta_j,\bbb\Omega_j)\\
~ &~& -(2m+1-m)\sum_{(u_1,\ldots,u_{m-1})\in\F_2^{m-1}}S(\aaa\Theta_m,\bbb\Omega_m) \\
~ &~& -\sum_{j=m+1}^{2m-1}(2m+1-j)\sum_{(u_1,\ldots,u_{j-1})\in\F_2^{j-1}}S(\aaa\Theta_j,\bbb\Omega_j) \\
~ &~& -(2m+1-2m)\sum_{(u_1,\ldots,u_{2m-1})\in\F_2^{2m-1}}
S(\aaa\Theta_{2m},\bbb\Omega_{2m}){\Bigg{]}}\\
~&=& \frac{1}{2^{2m+1}}\left[(2^{2m+1}-1)(2^{2m}-1)-
(2m+1)S(\aaa,\bbb)-\sum_{i=6}^9\Delta_i\right]\\
~&=& 2^{2m}+2^m-\frac{3}{2}-\frac{(2m+1)(S(\aaa,\bbb)+2^m+1)}{2^{2m+1}} \\
  ~ &~&-\frac{1}{2^m}\sum_{j=1}^{2m}(2m+1-j)\#\mathcal{T}_3(j;\aaa,\bbb).
\end{eqnarray*}
In fact, most of the tedious computations in this proof have been given by Lemma\ \ref{lem25}.

If $2m+1<b\leq 3m$, $\aaa\neq0$ and $\bbb=0$, we have
\begin{small}
\begin{eqnarray*}
  w_{b}(\ccc(\aaa,0)) &=& \frac{1}{2^{b}}\left[(2^{b}-1)(q^2-1)-bS(\aaa,0)-\sum_{j=1}^{2m-1}(b-j)
  \sum_{(u_1,\ldots,u_{j-1})\in\F_2^{j-1}}S(\aaa\Theta_j,0)\right.\\
  ~&~&-(b-2m)\left(\sum_{(u_1,\ldots,u_{2m-1})\in\Gamma_2(2m)}S(0,0)
  +\sum_{(u_1,\ldots,u_{2m-1})\in\F_2^{2m-1}\setminus\Gamma_2(2m)}S(\aaa\Theta_{2m},0)\right) \\
  ~ &~& -\sum_{j=2m+1}^{b-1}(b-j)\left(\sum_{(u_1,\ldots,u_{j-1})\in\Gamma_2(j)}S(0,0)+
  \left.\sum_{(u_1,\ldots,u_{j-1})\in\F_2^{j-1}\setminus\Gamma_2(j)}S(\aaa\Theta_j,0)\right)\right]\\
  ~&=&2^{2m}-1.
\end{eqnarray*}
\end{small}
If $2m+1<b\leq 3m$, $\aaa=0$ and $\bbb\neq0$, we have
\begin{eqnarray*}
  w_{b}(\ccc(\aaa,0)) &=& \frac{1}{2^{b}}\left[(2^{b}-1)(q^2-1)-bS(0,\bbb)-\sum_{j=1}^{m-1}(b-j)
  \sum_{(u_1,\ldots,u_{j-1})\in\F_2^{j-1}}S(0,\bbb\Omega_j)\right.\\
  ~&~&-(b-m)\left(\sum_{(u_1,\ldots,u_{m-1})\in\Gamma_1(m)}S(0,0) +\sum_{(u_1,\ldots,u_{m-1})\in\F_2^{m-1}\setminus\Gamma_1(m)}
  S(\aaa\Theta_{m,0})\right)
\end{eqnarray*}
\begin{eqnarray*}
  ~&~&\left.-\sum_{j=m+1}^{b-1}(b-j)\left(\sum_{(u_1,\ldots,u_{j-1})\in\Gamma_1(j)}S(0,0)+
  \sum_{(u_1,\ldots,u_{j-1})\in\F_2^{j-1}\setminus\Gamma_1(j)S(0,\bbb\Omega_j)}\right)\right]\\
  ~&=&2^{2m}-1.
\end{eqnarray*}
If $2m+1<b\leq 3m$ and $(\aaa,\bbb)\in\F_{q^2}^*\times\F_q^*$, then
\begin{small}
\begin{eqnarray*}
  w_{b}(\ccc(\aaa,\bbb)) &=& \frac{1}{2^{b}}\left[(2^{2m+1}-1)(2^{2m}-1)-bS(\aaa,\bbb) -\sum_{j=1}^{m-1}(b-j)\sum_{(u_1,\ldots,u_{j-1})\in\F_2^{j-1}}
  S(\aaa\Theta_j,\bbb\Omega_j)\right. \\
~ &~& -(b-m)\sum_{(u_1,\ldots,u_{m-1})\in\F_2^{m-1}}S(\aaa\Theta_m,\bbb\Omega_m)\\
~&~& -\sum_{j=m+1}^{2m-1}(b-j)\sum_{(u_1,\ldots,u_{j-1})
\in\F_2^{j-1}}S(\aaa\Theta_j,\bbb\Omega_j)\\
~ &~& -(b-2m)\sum_{(u_1,\ldots,u_{2m-1})\in\F_2^{2m-1}}
S(\aaa\Theta_{2m},\bbb\Omega_{2m})\\
~ &~& \left.-\sum_{j=2m+1}^{b-1}(b-j)
\sum_{(u_1,\ldots,u_{j-1})
\in\F_2^{j-1}}S(\aaa\Theta_j,\bbb\Omega_j)\right]\\
~ &=& \frac{1}{2^{b}}\left[(2^{2m+1}-1)(2^{2m}-1)-bS(\aaa,\bbb)-
\sum_{i=10}^{14}\Delta_{i}\right]\\
~ &=& 2^{2m}+2^m-1-\frac{1}{2^{b-2m}}
-\frac{1}{2^b}\left(b(S(\aaa,\bbb)+2^m+1)
  -2^{m+1}\sum_{j=1}^{b-1}(b-j)\#\mathcal{T}_3(j;\aaa,\bbb)\right).
\end{eqnarray*}
\end{small}
This completes the proof.
\end{proof}
For $1\leq j\leq b-1$, a trivial upper bound of $\#\mathcal{T}_3(j;\aaa,\bbb)$ is
$$\#\mathcal{T}_3(j;\aaa,\bbb)\leq \left\{
                                     \begin{array}{ll}
                                       2^{j-1}, & \hbox{if $1\leq j\leq m-1$;} \\
                                       2^{m-1}-1, & \hbox{if $j=m$;} \\
                                       2^{j-1}-2^{j-m-1}, & \hbox{if $m+1\leq j\leq 2m-1$;}\\
                                       2^{2m-1}-2^{m-1}-1, & \hbox{if $j=2m$;}\\
                                       2^{j-1}-2^{j-m-1}-2^{j-2m-1}, & \hbox{if $2m+1\leq j\leq b-1$.}
                                     \end{array}
                                   \right.
$$
For any $1\leq j\leq b-1,$ let $\mathcal{A}_3(j)$ be defined by
$$\mathcal{A}_3(j)=\left\{\left.\frac{\Theta_j^{q+1}}{\Omega_j}\right|
(u_1,\ldots,u_{j-1})\in\F_2^{j-1}\setminus\{\Gamma_1(j)\cup\Gamma_2(j)\}\right\}.$$
It is easy to see that $\mathcal{A}_2(j)$ is a subset of $\F_q^*.$
After calculating many numerical examples, we found that $\#\mathcal{A}_3(j)$ is always greater than $2^{m-1}$ when $j\geq 2m+1$, thus $\#\mathcal{T}_3(j;\aaa,\bbb)$ may be not meet the upper bound if $j\geq 2m+1$.
A nice lower bound of $d_b(C)$ is $\dd_b(C)$. From the following numerical examples, $d_b(C)=\dd_b(C)$ is possible.
\begin{example}
(i) When $m=2$, the Kasami code has parameters $[n=15,k=6,d_1(C)=6]$. Its $5$-symbol weight distribution is
$1+15T^{14}+48T^{15},$
and the minimum $5$-symbol weight is $d_5(C)=14=\dd_5(C).$
Its $6$-symbol weight distribution is
$1+63T^{15},$
and the minimum $6$-symbol weight is $d_6(C)=15=\dd_6(C).$

(ii) When $m=3$, the Kasami code has parameters $[n=63,k=9,d_1(C)=28]$. Its $7$-symbol weight distribution is
$1+61T^{61}+64T^{62}+386T^{63},$
and the minimum $7$-symbol weight is $d_7(C)=61>60=\dd_5(C).$

These results are verified by Magma programs.
\end{example}
\subsection{Case $3m<b\leq n$}
The $b$-symbol weight enumerator of the Kasami code is determined if $3m<b\leq n$.
\begin{theorem}\label{thm20}
For $3m<b\leq n$, the $b$-symbol weight enumerator of the Kasami code
is
$$A(T)=1+(2^{3m}-1)T^{q^2-1}.$$
\end{theorem}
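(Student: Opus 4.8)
The plan is to show that, as soon as $b>3m$, \emph{every} nonzero codeword $\ccc(\aaa,\bbb)$ already attains the largest possible $b$-symbol weight $q^2-1=n$; then the enumerator consists only of the zero word (the term $1$) and the $2^{3m}-1$ nonzero words, each contributing $T^{q^2-1}$. First I would record the elementary characterization that, for a nonzero vector, $w_b(\ccc)=n$ exactly when $\pi_b(\ccc)$ has no zero entry, i.e. when $\ccc$ contains no $b$ cyclically consecutive zero coordinates; equivalently, the longest cyclic run of zeros in $\ccc$ has length $<b$. Since this property only gets easier to satisfy as $b$ grows, it suffices to prove that no nonzero codeword has a run of $3m$ consecutive zeros: the longest run is then at most $3m-1<b$ for every $b\ge 3m+1$.

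Next I would reduce to a fixed window using the shift identity $\tau(\ccc(\aaa,\bbb))=\ccc(\aaa\theta,\bbb\eta)$ established in the proof of Theorem \ref{thm4}. A run of zeros occupying any $3m$ consecutive positions of $\ccc(\aaa,\bbb)$ is carried by a suitable power of $\tau$ to a run occupying positions $1,\dots,3m$ of the codeword $\ccc(\aaa\theta^{s},\bbb\eta^{s})$, which is again a codeword. Hence it is enough to establish the implication
$$\T_{2m}(\aaa\theta^{j})+\T_m(\bbb\eta^{j})=0\quad(j=1,\dots,3m)\ \Longrightarrow\ \aaa=0\ \hbox{and}\ \bbb=0.$$
Expanding the traces, $\T_{2m}(\aaa\theta^{j})=\sum_{i=0}^{2m-1}\aaa^{2^i}\theta^{2^i j}$ and $\T_m(\bbb\eta^{j})=\sum_{i=0}^{m-1}\bbb^{2^i}\theta^{(q+1)2^i j}$, so the hypotheses read $\sum_{c\in E}A_c(\theta^{c})^{j}=0$ for $j=1,\dots,3m$, where $E$ is the union of the cyclotomic cosets of $1$ and of $q+1$ modulo $q^2-1$, and $A_{2^i}=\aaa^{2^i}$, $A_{(q+1)2^i}=\bbb^{2^i}$.

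The crux is that $E$ has exactly $3m$ elements: the coset of $1$ is $\{1,2,\dots,2^{2m-1}\}$ of size $2m$, the coset of $q+1=2^m+1$ is $\{2^m+1,\dots,2^{2m-1}+2^{m-1}\}$ of size $m$ (since $(2^m+1)2^m\equiv 2^m+1$), and the two cosets are disjoint because each $2^i$ has a single nonzero binary digit whereas each $(2^m+1)2^i$ has two. Thus the $3m$ nodes $\theta^{c}$ with $c\in E$ are distinct elements of $\F_{q^2}^*$, the $3m\times 3m$ system $\big((\theta^{c})^{j}\big)_{1\le j\le 3m,\,c\in E}$ is (after a row-wise rescaling) an honest Vandermonde on distinct nodes, hence nonsingular, forcing all $A_c=0$ and therefore $\aaa=\bbb=0$. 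This shows no nonzero codeword has $3m$ consecutive zeros, so $w_b(\ccc(\aaa,\bbb))=q^2-1$ for all $b\ge 3m+1$; as $\dim C=3m$ there are $2^{3m}-1$ nonzero words, giving $A(T)=1+(2^{3m}-1)T^{q^2-1}$.

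The main obstacle is purely arithmetic, namely verifying that the $3m$ exponents in $E$ are distinct (the disjointness and sizes of the two cyclotomic cosets); once the nodes are distinct, nonsingularity is immediate and the rest is bookkeeping. I should also make sure the shift reduction correctly covers the wrap-around runs, which it does because $\tau$ acts transitively on the $n$ cyclic windows. A shorter alternative would be to invoke directly the cyclic-code property displayed in Section 1, $d_b(C)=n$ for $b\ge k=3m$: this says the \emph{minimum} $b$-symbol weight equals the maximal value $n$, so every nonzero codeword must attain it. The argument above is essentially a self-contained proof of that property specialized to the Kasami code.
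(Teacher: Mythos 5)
Your proposal is correct, and it reaches the same pivotal fact as the paper --- no nonzero codeword of the Kasami code has $3m$ cyclically consecutive zeros --- but by a genuinely different route. The paper disposes of this in one line: a nonzero codeword of a cyclic code of dimension $k$ satisfies a linear recursion of degree $k$ (coming from the check polynomial), so it cannot contain $k$ consecutive zeros without being identically zero; with $k=3m<b$ every length-$b$ window is nonzero and $w_b(\ccc(\aaa,\bbb))=q^2-1$. You instead prove the fact from scratch for this specific code: after using the shift identity $\tau(\ccc(\aaa,\bbb))=\ccc(\aaa\theta,\bbb\eta)$ to normalize the window to positions $1,\dots,3m$, you expand the traces, identify the exponent set $E$ as the union of the cyclotomic cosets of $1$ (size $2m$) and of $q+1$ (size $m$), check disjointness via binary weights, and kill the coefficients with a Vandermonde determinant on the $3m$ distinct nodes $\theta^{c}$. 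This is essentially a direct verification that the check polynomial of the Kasami code has degree exactly $3m$, so your argument is self-contained where the paper's is a citation of folklore; the price is length, and the gain is that you never need to invoke the recursion property or the dimension count as external facts (indeed your coset computation reproves $\dim C=3m$). Two cosmetic points: the rescaling that turns your matrix $\bigl((\theta^{c})^{j}\bigr)_{1\le j\le 3m,\,c\in E}$ into a standard Vandermonde is column-wise (factor $\theta^{c}$ out of column $c$), not row-wise; and your closing alternative --- citing the displayed property $d_{k}(C)=\dots=d_n(C)=n$ for cyclic codes from Section 1 --- is really the same folklore fact the paper's proof rests on, so it is not an independent shortcut.
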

\begin{proof}
Every nonzero codeword in a cyclic code is generated by a recursion of degree $k$ and thus has at most $k-1$ consecutive zeroes. Since the Kasami code is a cyclic code with dimension $k=3m$, there is no $b$ consecutive zeroes.  Therefore, $w_b(\ccc(\aaa,\bbb))=q^2-1$ for any $(\aaa,\bbb)\in\F_{q^{2}}\times\F_q\setminus\{(0,0)\}.$
\end{proof}
\section{Conclusion}
It is not difficult to verify that for any $1\leq j\leq b-1$,
\begin{equation*}
     \begin{array}{ll}
       \mathcal{T}_3(j;\aaa,\bbb)=\mathcal{T}_2(j;\aaa,\bbb)=
       \mathcal{T}_1(j;\aaa,\bbb), & \hbox{if $1\leq b\leq m$;} \\
       \mathcal{T}_3(j;\aaa,\bbb)=\mathcal{T}_2(j;\aaa,\bbb), & \hbox{if $m<b\leq 2m$.}
     \end{array}
\end{equation*}
Then we only need one invariant $\#\mathcal{T}_3(j;\aaa,\bbb).$
Combining Theorem \ref{thm1}, Theorem \ref{thm0}, Theorem \ref{cor1}, Theorem \ref{thm13}, Theorem \ref{thm19} and Theorem \ref{thm20}, we have the main result of this paper.
\begin{theorem}\label{thm30}
Let $\ccc(\aaa,\bbb)$ be a codeword of the Kasami code and let $d_b(C)$ denote the minimal $b$-symbol distance of the Kasami code, where $(\aaa,
\bbb)\in\F_{q^2}\times\F_q$.
\begin{itemize}
  \item If $\aaa=\bbb=0$, then $w_b(\ccc(\aaa,\bbb))=0$ for $1\leq b\leq n.$
  \item If $\aaa=0$ and $\bbb\neq0$, then $w_b(\ccc(\aaa,\bbb))=\left\{
                                                                  \begin{array}{ll}
                                                                    (2^b-1)(2^{2m-b}+2^{m-b}), & \hbox{if $1\leq b\leq m$;} \\
                                                                    n, & \hbox{if $m+1\leq b\leq n$.}
                                                                  \end{array}
                                                                \right.
  $
  \item If $\aaa\neq0$ and $\bbb=0$, then $w_b(\ccc(\aaa,\bbb))=\left\{
                                                                  \begin{array}{ll}
                                                                    (2^b-1)2^{2m-b}, & \hbox{if $1\leq b\leq 2m$;} \\
                                                                    n, & \hbox{if $2m+1\leq b\leq n$.}
                                                                  \end{array}
                                                                \right.$
  \item If $\aaa\neq 0$ and $\bbb\neq 0$, then
   $$w_b(\ccc(\aaa,\bbb))=
  \left\{
           \begin{array}{ll}
             2^{2m}+2^{m}-2^{2m-b}-2^{m-b}-\frac{b(S(\aaa,\bbb)+2^m+1)}{2^b} & \hbox{~} \\
            -2^{m+1-b}
\sum_{j=1}^{b-1}(b-j)\#\mathcal{T}_3(j;\aaa,\bbb), & \hbox{if $1\leq b\leq m$;} \\
       2^{2m}+2^m-2^{2m-b}-1-\frac{b(S(\aaa,\bbb)+2^m+1)}{2^b}
              & \hbox{} \\
-\frac{1}{2^{b-m-1}}\sum_{j=1}^{b-1}(b-j)\#\mathcal{T}_3(j;\aaa,\bbb), & \hbox{if $m+1\leq b\leq 2m$;} \\
  2^{2m}+2^m-1-\frac{1}{2^{b-2m}}-\frac{b(S(\aaa,\bbb)+2^m+1)}{2^b}
                            & \hbox{} \\
             -\frac{1}{2^{b-m-1}}\sum_{j=1}^{b-1}(b-j)\#\mathcal{T}_3(j;\aaa,\bbb), & \hbox{if $2m+1\leq b\leq 3m$;} \\
          n   , & \hbox{if $3m+1\leq b\leq n$.}
           \end{array}
         \right.
  $$
\end{itemize}
Moreover, the range of the minimum $b$-symbol distance of the Kasami code is as follows.
\begin{eqnarray*}
  (2^b-1)(2^{2m-b}-2^{m-b}) &\leq& d_b(C)~\leq ~(2^{b}-1)2^{2m-b}, ~~~~{\hbox{if $1\leq b\leq m$;}} \\
  2^{2m}-2^m-2^{2m-b}+1 &\leq& d_b(C)~\leq ~(2^{b}-1)2^{2m-b}, ~~~~{\hbox{if $m+1\leq b\leq 2m$;}} \\
  2^{2m}-2^{3m-b} &\leq& d_b(C)~\leq ~n, ~~~~~~~~~~~~~~~~~~~{\hbox{if $2m+1\leq b\leq 3m$;}} \\
  ~ &~& d_b(C)~=~n, ~~~~~~~~~~~~~~~~~~~{\hbox{if $3m+1\leq b\leq n$}}.
\end{eqnarray*}
\end{theorem}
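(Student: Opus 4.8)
The plan is to assemble Theorem \ref{thm30} from the four range-by-range computations already in hand, after first unifying the three auxiliary invariants into the single quantity $\#\mathcal{T}_3(j;\aaa,\bbb)$. The sets $\mathcal{T}_1(j;\aaa,\bbb)$, $\mathcal{T}_2(j;\aaa,\bbb)$ and $\mathcal{T}_3(j;\aaa,\bbb)$ differ only in which tuples $(u_1,\ldots,u_{j-1})$ are discarded, namely those in $\Gamma_1(j)$ (where $\Omega_j=0$) and those in $\Gamma_2(j)$ (where $\Theta_j=0$). When $1\leq b\leq m$ every index obeys $j\leq b-1\leq m-1$, so Lemmas \ref{lem10} and \ref{lem13} give $\#\Gamma_1(j)=\#\Gamma_2(j)=0$ and the three defining sets coincide. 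When $m<b\leq 2m$ every $j\leq 2m-1$, so $\#\Gamma_2(j)=0$ and $\mathcal{T}_3=\mathcal{T}_2$. This establishes the displayed identities of the Conclusion and licenses phrasing the whole theorem with one invariant.

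Next I would collect the weight values. For the three degenerate parameter choices $\aaa=\bbb=0$, then $\aaa=0$ with $\bbb\neq0$, then $\aaa\neq0$ with $\bbb=0$, the stated piecewise formulas are read directly off Theorems \ref{cor1}, \ref{thm13}, \ref{thm19} and \ref{thm20}; the collapse to $n$ once $b$ exceeds $m$ in the case $\aaa=0$, and once $b$ exceeds $2m$ in the case $\bbb=0$, reflects exactly the onset of $\#\Gamma_1(j)\neq 0$ and $\#\Gamma_2(j)\neq 0$ recorded in Lemmas \ref{lem10} and \ref{lem13}. For the generic case $(\aaa,\bbb)\in\F_{q^2}^*\times\F_q^*$ the four closed forms are precisely the conclusions of those same theorems, rewritten with $\#\mathcal{T}_3$ in place of $\#\mathcal{T}_1$ or $\#\mathcal{T}_2$ by the coincidences just proved.

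Finally I would pin down the range of $d_b(C)$. The lower bounds are immediate: Theorem \ref{thm1} gives $d_b(C)\geq\dd_b(C)$, and Theorem \ref{thm0} supplies the explicit values $(2^b-1)(2^{2m-b}-2^{m-b})$, $2^{2m}-2^m-2^{2m-b}+1$ and $2^{2m}-2^{3m-b}$ in the three successive ranges. For the upper bounds, when $1\leq b\leq 2m$ the codeword $\ccc(\aaa,\0)$ with $\aaa\neq0$ has $b$-symbol weight $(2^b-1)2^{2m-b}$ by Theorems \ref{cor1} and \ref{thm13}, so $d_b(C)\leq(2^b-1)2^{2m-b}$; when $2m<b\leq 3m$ the bound $d_b(C)\leq n$ is trivial since no $b$-symbol weight can exceed $n$; and when $3m<b\leq n$ Theorem \ref{thm20} forces every nonzero codeword to weight exactly $n$, so $d_b(C)=n$.

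The main obstacle is not any single hard estimate but the fact that the exact value of $d_b(C)$ in the two middle ranges hinges on the still-undetermined extremal behavior of $\#\mathcal{T}_3(j;\aaa,\bbb)$ for large $j$: the analyses following Theorems \ref{thm13} and \ref{thm19} show that $d_b(C)=\dd_b(C)$ holds precisely when some $(\aaa_0,\bbb_0)$ makes all the $\#\mathcal{T}_3(j;\cdot)$ meet their trivial upper bounds simultaneously, and the numerical evidence that $\#\mathcal{A}_3(j)>2^{m-1}$ once $j\geq 2m+1$ shows this can fail. Consequently the only real subtlety is to state an honest interval in these ranges rather than over-claim an equality, which is exactly what the theorem does.
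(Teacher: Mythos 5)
Your proposal is correct and follows essentially the same route as the paper: the paper's own proof is a one-line assembly of Theorems \ref{thm1}, \ref{thm0}, \ref{cor1}, \ref{thm13}, \ref{thm19} and \ref{thm20}, preceded by exactly the observation you make that $\#\Gamma_1(j)$ and $\#\Gamma_2(j)$ vanish on the relevant ranges of $j$, so that $\mathcal{T}_1$, $\mathcal{T}_2$, $\mathcal{T}_3$ coincide and a single invariant suffices. Your additional care in sourcing the upper bounds on $d_b(C)$ from the explicit codewords $\ccc(\aaa,0)$, $\aaa\neq 0$, and in declining to claim equality in the middle ranges, matches the paper's intent.
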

In this paper, the complete symbol-pair ($b=2$) weight distribution of the Kasami codes is determined (Theorem \ref{symbolpair}). The case $b=2$ is of great interest, but so far only a few symbol-pair weight distributions for cyclic codes have been determined. Determining the symbol-pair weight distribution of cyclic codes is a nice subject we consider in the future.

\section*{Acknowledgement}
This research is supported by Natural Science Foundation of China (12071001), Excellent Youth Foundation of Natural Science Foundation of Anhui Province (1808085J20).
The authors would like to thank Prof. T. Helleseth for helpful discussions.

\end{document}